\newtheorem{definition}{Definition}
\newtheorem{theorem}[definition]{Theorem}
\newtheorem{lemma}[definition]{Lemma}
\newtheorem{corollary}[definition]{Corollary}
\newtheorem{observation}[definition]{\textbf{Observation}}
\newcommand{\CHull}{\ensuremath{Conv}}
\newcommand{\new}[1]{\color{black}#1\color{black}\xspace}
\newcommand{\nnew}[1]{\color{black}#1\color{black}\xspace}
\newcommand{\statement}[2]{\vskip2ex\noindent{\sffamily\bfseries#1~\ref{#2}.}}
\newcommand{\remove}[1]{}
\date{}
\title{Coordinated Motion Planning: Reconfiguring a Swarm of Labeled Robots with Bounded Stretch\footnote{This work was partially supported by the DFG Research Unit {\em Controlling Concurrent Change}, funding number FOR 1800, project FE407/17-2, Conflict Resolution and Optimization.}}
\author[1]{Erik D. Demaine}
\author[2]{S\'{a}ndor P. Fekete}
\author[2]{Phillip Keldenich}
\author[3]{Henk~Meijer}
\author[2]{Christian Scheffer}
\affil[1]{MIT Computer Science and Artificial Intelligence Laboratory, Cambridge, MA, USA\\
  \texttt{edemaine@mit.edu}}
\affil[2]{Department of Computer Science, TU Braunschweig Braunschweig, Germany\\
  \texttt{$\{$s.fekete,p.keldenich,c.scheffer$\}$@tu-bs.de}}
\affil[3]{Science Department, University College Roosevelt Middelburg, The Netherlands\\
  \texttt{h.meijer@ucr.nl}}
\newcommand{\revised}[1]{{\color{black} #1}}
\begin{document}
	\maketitle
	\begin{abstract}
		We present a number of breakthroughs for coordinated motion planning, in which the
objective is to reconfigure a swarm of labeled convex objects by
a combination of parallel, continuous, collision-free translations into a given
target arrangement.  Problems of this type can be traced back to the classic work
of Schwartz and Sharir (1983), who gave a method for deciding the existence 
of a coordinated motion for a set of disks between obstacles; their approach is polynomial in the 
complexity of the obstacles, but exponential in the number of disks.
Other previous work has largely focused on {\em sequential} schedules, in which one robot moves at a time.

We provide
constant-factor approximation algorithms for minimizing the
execution time of a coordinated, {\em parallel} motion plan for a swarm of robots in the absence of obstacles,
provided some amount of separability.

Our algorithm achieves {\em constant stretch factor}: If
all robots are at most $d$ units from their respective
starting positions, the total duration of the overall schedule
is $O(d)$.
Extensions include unlabeled robots and different classes of robots. We
also prove that finding a plan with minimal execution time is
NP-hard, even for a grid arrangement without any stationary obstacles. 
On the other hand, we show that for densely packed disks that cannot be well separated, a stretch factor $\Omega(N^{1/4})$
may be required. On the
positive side, we establish a stretch factor of $O(N^{1/2})$ even in this case.
	\end{abstract}

	\section{Introduction}\label{sec:intro}
Since the beginning of computational geometry, robot motion planning has
been at the focus of algorithmic research. Planning the relocation of a geometric
object among geometric obstacles leads to intricate scientific challenges,
requiring the combination of deep geometric and mathematical insights with
algorithmic techniques.
With the broad and ongoing progress in robotics, the increasing importance of 
intelligent global planning with performance guarantees
requires more sophisticated algorithmic reasoning,
in particular when it comes to the higher-level task of coordinating the motion of many robots.

From the early days, multi-robot coordination has received attention from the algorithmic side.
Even in the groundbreaking work by Schwartz and Sharir~\cite{ss-pmpcbpb-83} from the 1980s, one of the
challenges was coordinating the motion of {\em several} disk-shaped objects
among obstacles. Their algorithms run in time polynomial in the complexity of the obstacles, but 
exponential in the number of disks. 
This illustrates the significant challenge of coordinating many individual robots.
In addition, a growing number of applications focus primarily on robot interaction in the absence of obstacles,
such as air traffic control or swarm robotics,
where the goal is overall efficiency, rather than individual navigation.

With the challenges of multi-robot coordination being well known, there is still a huge demand
for positive results with provable performance guarantees. In this paper, we provide
significant progress in this direction, with a broad spectrum of results.

\subsection{Our Results}
\begin{itemize}
\item
	For the problem of minimizing the total time for reconfiguring a system of labeled
circular robots in a grid environment, we show that it is strongly NP-complete to compute an optimal solution; see Theorem~\ref{thm:optimalNPcomplete}. 
\item We give an $\mathcal{O}(1)$-approximation for
the long-standing open problems of parallel motion-planning with minimum makespan in a grid setting.
This result is based on establishing an {\em absolute} performance guarantee: We prove that for any labeled arrangement of robots, 
there is always an overall schedule that gets each robot to its target destination
with bounded {\em stretch}, i.e., within a constant factor of the largest individual distance. 
See Theorem~\ref{thm:main} for the base case of grid-based configurations, which is extended later on.
\item
For our approach, we make use of a technique to separate planar (cyclic) flows
into so-called {\em subflows} whose thickness can be controlled by the number of
subflows, see Definition~\ref{def:partition} and
Lemma~\ref{lem:computePartition}. This is of independent interest for the area	
of packet routing with bounded memory: Our Theorem~\ref{thm:main2} implies that 
$\mathcal{O}(D)$ steps are sufficient to route any permutation of dilation $D$ on the grid, 
even with a buffer size of 1, resolving an open question by Scheideler~\cite{s-ursin-98} 
dating back to 1998.


\item We extend our approach to establish constant stretch 
for the generalization of {\em colored} robot classes, for which unlabeled robots are another special case; see Theorem~\ref{thm:unlabeled}. 
\item 	We extend our results to the scenario with continuous motion and arbitrary coordinates, provided the \nnew{distance between a robot's start and target positions} is at least one diameter; see Theorem~\ref{thm:continuousCaseUpperBound}.
This implies that efficient multi-robot coordination is always possible under relatively mild separability conditions; this includes non-convex robots.
\item For the continuous case of $N$ unit disks and weaker separability, we establish a lower bound of $\Omega(N^{1/4})$ and an upper bound 
of $\mathcal{O}(\sqrt{N})$ on the achievable stretch, see Theorem~\ref{thm:lowerBoundContinuousMakespan} and Theorem~\ref{thm:continuousCaseUpperBound}.

\end{itemize}

We also highlight the geometric difficulty of computing optimal trajectories even in seemingly simple cases; 
due to limited space, this can be found in Appendix~E.

\subsection{Related Work}	
Different variants of multiple-object motion planning problems have received a large amount of attention from researchers in various areas of computer science and engineering; see \cite{d-mprsm-07} for a survey.
Their practical relevance is reflected by the fact that there are industrial solutions used in automated warehouses for certain restricted forms of these problems \cite{wdm-chcavw-08}.
There are different orthogonal criteria by which these problems can be characterized.
A very important distinction is between \emph{discrete} and \emph{continuous} scenarios.
In the \emph{discrete} case, the input is a graph in which no two objects may use a vertex or edge at the same time;
depending on the scenario, it may be allowed to rotate fully populated cycles.
In the \emph{continuous} or \emph{geometric} setting, the objects are shapes in some geometric space which must be moved to a given target position in such a way that their interiors do not intersect at any time.
Depending on the scenario, the shapes may or may not touch.
Moreover, the objects may be confined to a certain region and there may be stationary obstacles.
Under these restrictions, it is unclear whether the target configuration is reachable at all.
Aronov~et~al.~\cite{abs+-mpmr-99} demonstrate that, for up to three robots of constant complexity, a path can be constructed efficiently if one exists.
Ramanathan and Alagar~\cite{ra-ampcmsdpo-85} as well as Schwartz and Sharir~\cite{ss-pmpcbpb-83} consider the case of several disk-shaped objects moving amongst polygonal obstacles.
They both find algorithms deciding whether a given target configuration is reachable.
Their algorithms run in time polynomial in the complexity of the obstacles, but exponential in the number of disks.
Hopcroft et al.~\cite{hss-cmpmio-84} and Hopcroft and Wilfong~\cite{hw-rmompgs-86} demonstrate the reachability of a given target configuration is PSPACE-complete to decide; this already holds when restricted to rectangular objects moving in a rectangular region.
Their proof was later generalized by Hearn and Demaine \cite{hd-pscsbpnclm-05,hd-gpcb-09}, who proved that rectangles of size $1 \times 2$ and $2 \times 1$ are sufficient and introduced a more general framework to prove PSPACE-hardness of certain block sliding games.
Moreover, this problem is similar to the well-known \emph{Rush Hour Problem}, which was shown to be PSPACE-complete by Flake and Baum~\cite{fb-rushhour-02}.
For moving disks, Spirakis and Yap~\cite{sy-snphmmd-84} have proven strong NP-hardness of the same problem; however, their proof makes use of disks of varying size.
Bereg et al.~\cite{bdp-sdp-08} as well as Abellanas et al.~\cite{abh+-mc-06} consider minimizing the \emph{number} of moves of a set of disks into a target arrangement without obstacles.
They provide simple algorithms and establish upper and lower bounds on the number of moves, where a move consists of sliding one disk along some curve without intersecting other disks.
These bounds were later improved on by Dumitrescu and Jiang \cite{dj-rdprp-13}, who also prove that the problem remains NP-hard for congruent disks even when the motion is restricted to sliding.

\revised{
Kirkpatrick and Liu~\cite{kl-cmlcm2d-16} consider the case of moving two disks of arbitrary radius from a start into a target configuration in an otherwise obstacle-free plane, minimizing the sum of distances travelled by the disks.
They provide optimal solutions for two disks moving from an arbitrary initial configuration into an arbitrary goal configuration.
Their arguments do not seem to generalize to the makespan.
}

D{\'{i}}az-B{\'{a}}{\~{n}}ez et al.~\cite{dhp+-cbpop-16} considered the task of extracting a single object from a group of convex objects, moving a minimal number of objects out of the way. They present an algorithm that finds the optimal direction for extracting the object in polynomial time.

On the practical side, there are several approaches to solving multi-object motion planning problems, both optimally and heuristically.
For discrete instances with a moderate number of objects, optimal solutions can be found using standard search strategies like $A^*$ \cite{hnr-astar-68} in the high-dimensional search space of possible configurations.
Numerous techniques can be used to improve the efficiency of these strategies \cite{s-oscpp-10,fgs+-peasng-12,gfs+-epea-14}.
Moreover, there is some work employing SAT solvers \cite{ks-usgp-99,hcz-ntbesps-10} to solve multi-object motion planning problems to optimality.
More recently, Yu and LaValle \cite{yl-omppgcaeh-16} present an IP-based exact algorithm for minimizing the makespan that works for hundreds of robots, even for challenging configurations with densities of up to 100\%.

For larger instances, one has to resort to heuristic solutions.
In \emph{priority planning} \cite{cl-sppcamm-80,el-mmo-87,bo-pmpmr-05,rl-cmrpphpa-06,gcb-bcp-06,dh-dppmtcc-12,cns+-adppcmrs-13}, the paths are planned one-by-one by assigning priorities to the objects and planning the movement in decreasing order of priority, treating all objects with higher priority as moving obstacles.
Kant and Zucker~\cite{kz-etppvd-86} decompose the problem into planning the paths for all objects and avoiding collisions by adapting the velocity of the objects appropriately, an approach which several papers are based on \cite{lh-ompig-98,lls-mpcmrga-99,sl-pcmmrrca-02,pa-cmrkcsp-05,cgg-pocmrsg-12}.
Another approach is to compute paths for the objects individually and resolve collisions locally \cite{fh-olsfpmrs-85}.

Between these simple \emph{decoupled} heuristics which only consider individual objects at a time and high-dimensional \emph{coupled} search algorithms lie \emph{dynamically-coupled} algorithms \cite{a-hlgmpmac-00,bsl+-cppmrodsp-09,ssf+-macbsomp-12,ssg+-ictsomp-13,bss+-svcbsmp-14,sh-kcmrmp-14,ssf+-cbsomp-15} which aim for better solutions at the price of higher computational costs.
These algorithms typically consider individual objects and only increase the dimension of the search space once a non-trivial interaction between objects is discovered.
Recently, Wagner and Choset~\cite{wc-sempp-15} provided a complete algorithm based on a similar principle.

With the advent of robot swarms, practical solutions to these problems became
more important and the robotics community started to develop practical
sampling-based algorithms
\cite{so-cppmr-98,sl-prmpccdpmrs-02,hh-hmpctdpop-04,wc-mstar-11,shh-pmsecsdnp-15,ssh-fnehdrrteirmrmp-15,sh-sbbpafm-16}
which, while working well in practice, are not guaranteed to find an (optimal)
solution.  In another recent work, Yu and Rus~\cite{yr-eafno-15} present a
practical algorithm based on a fine-grained discretization combined with an IP
for the resulting discrete problem to provide near-optimal solutions even for
densely populated environments. 
Other related work includes
Rubenstein et al.~\cite{rubenstein2014programmable}, who demonstrated how to
reconfigure a large swarm of simple, disk-shaped {\em Kilobots}; however, their
method is sequential, relocating one robot at a time, so a full reconfiguration of
1000 robots takes about a day, highlighting the 
relevance of truly parallel motion planning.
Further extensions to higher-dimensional problems (with a wide range of additional motion 
constraints) \textcolor{black}{are} swarms of drones (e.g., the work by Kumar~\cite{kumar}) and even air traffic control 
(see Delahaye et al.~\cite{delahaye2014mathematical} for
a recent survey).

In both discrete and geometric variants of the problem, the objects can be \emph{labeled}, \emph{colored} or \emph{unlabeled}.
In the \emph{labeled} case, the objects are all distinguishable and each object has its own, uniquely defined target position.
This is the most extensively studied scenario among the three.
In the \emph{colored} case, the objects are partitioned into $k$ groups and each target position can only be covered by an object with the right color.
This case was recently considered by Solovey and Halperin~\cite{sh-kcmrmp-14}, who present and evaluate a practical sampling-based algorithm.
In the \emph{unlabeled} case, the objects are indistinguishable and each target position can be covered by any object.
This scenario was first considered by Kloder and Hutchinson~\cite{kh-ppimf-06}, who presented a practical sampling-based algorithm.
In this situation, Turpin~et~al.~\cite{tmk-tpams-13} prove that it is possible to find a solution in polynomial time, if one exists.
This solution is optimal with respect to the longest distance traveled by any one robot.
However, their results only hold for disk-shaped robots under additional restrictive assumptions on the free space.
For unit disks and simple polygons, Adler et al.~\cite{adh+-emmpudsp-15} provide a \nnew{polynomial-time} algorithm under the additional assumption that the start and target positions have some minimal distance from each other.
Under similar separability assumptions, Solovey et al.~\cite{syz+-mpudog-15} provide a polynomial time-algorithm that produces a set of paths that is no longer than $\mbox{OPT}+4m$, where $m$ is the number of robots.
However, they do not consider the makespan, but only the total path length.
On the negative side, Solovey and Halperin~\cite{sh-hummp-15} prove that the unlabeled multiple-object motion planning problem is PSPACE-hard, even when restricted to unit square objects in a polygonal environment.

Regarding discrete multiple-object motion planning, C\v{a}linescu et al.~\cite{cdp-rgg-08} consider the non-parallel motion planning problem on graphs, where each object can be moved along an unoccupied path in one move.
They prove that both in the unlabeled and in the labeled case, minimizing the number of moves required is APX-hard.
They provide 3-approximation algorithms for the unlabeled case on general graphs.
Moreover, they prove that the problem remains NP-complete on the infinite rectangular grid.
Their results are different from our results because the objective they consider is not closely related to the makespan.
For other work, see \cite{dhm-mmfpt-14,bdz-oammp-11,dhmsoz-mm-09,hkkk-aamr-16} for particular examples.

On grid graphs, the problem can be cast as a very restrictive variant of mesh-connected routing, where each processor can only hold one packet at any time.
However, approaches developed for this problem (see Kunde~\cite{k-rsmca-88} and Cheung and Lau~\cite{cf-mprl-92}) typically assume that at least a constant number of packets can be held at any processor.
On the other hand, on grid graphs, the problem resembles the generalization of the 15-puzzle, for which Wagner \cite{w-gphag-74} and Kornhauser et al.~\cite{kms-cpmgdpga-84} have given an efficient algorithm that decides reachability of a target configuration and provided both lower and upper bounds on the number of moves required.
However, Ratner and Warmuth \cite{rw-fss15pi-86} proved finding a shortest solution for this puzzle remains NP-hard.
Demaine et al.~\cite{ddv-cmp-02} also consider various grids.
For the triangular grid, they give efficiently verifiable conditions for checking whether a solution exists.

\section{Preliminaries}\label{sec:pre}
In the grid setting \textcolor{black}{of} Section~\ref{sec:constappr} \textcolor{black}{we consider an $n_1 \times n_2$-grid $G=(V,E)$, which is dual to} 
an $n_1 \times n_2$-rectangle $P$ \textcolor{black}{in which the considered robots are arranged}.
A \emph{configuration} of $P$ is a mapping $C: V \rightarrow \{
1,\dots, N, \bot\}$, \new{which is injective w.r.t. the labels} $\{ 1,\dots,N \}$ of the $N \leq |P|$
robots to be moved, \new{where $\bot$ denotes the empty square}.
The inverse image of a robot's label~$\ell$ is~$C^{-1}(\ell)$.  In the following, we consider a \emph{start
configuration $C_s$} and \emph{target configuration $C_t$};  
for $i \in \{ 1,\dots,N \}$, we call $C_s^{-1}(i)$ and~$C_t^{-1}(i)$ the
\emph{start} and \emph{target position} of the robot $i$. \nnew{Given the (minimum) Manhattan distance between each robot's start/target positions for each robot, we denote by $d$ the maximum such distance over all robots.}

A configuration $C_1:V \rightarrow \{ 1,\dots,N,\bot \}$ can be
\emph{transformed \nnew{within one single transformation step}} into another configuration $C_2:V \rightarrow \{
1,\dots,N,\bot \}$, denoted $C_1 \rightarrow C_2$, if $C_1^{-1}(\ell) =
C_2^{-1}(\ell)$ or $(C_1^{-1}(\ell), C_2^{-1}(\ell)) \in E$ holds for all $\ell
\in \{ 1,\dots,N \}$, i.e., if each robot does not move or moves to one of the
\new{at most} four adjacent squares.  Furthermore, two robots cannot exchange their squares
in one transformation step, i.e., for all occupied squares $v \neq w \in V$, we
require that $C_2(v) = C_1(w)$ implies $C_2(w) \neq C_1(v)$. \textcolor{black}{For $M \in \mathbb{N}$, a \emph{schedule} is a sequence $C_1 \rightarrow \dots \rightarrow C_M$ of transformations.}
The number of steps in a schedule is called its
\emph{makespan}.  Given a start configuration $C_s$ and a target configuration
$C_t$, the \emph{optimal makespan} is the minimum number of steps in a
\textcolor{black}{schedule} starting with $C_s$ and ending with $C_t$.
\new{Let $n > 1$.} Note that for the $2 \times 2$-, $1 \times n$- and $n \times 1$-rectangles,
there are pairs of start and target configurations where no such sequence
exists.  For all other rectangles, such configurations do not exist; we
provide an $\mathcal{O}(1)$-approximation of the makespan in
Section~\ref{sec:constappr}.

For the continuous setting of Section~\ref{sec:cont}, we consider $N$ robots $R := \{ 1,\dots,N \} \subseteq \mathbb{N}$.
The Euclidean distance between two points $p,q \in \mathbb{R}^2$ is $|pq| := ||p-q||_2$.
Every robot $r$ has a \emph{start} and \emph{target position} $s_r,t_r \in \mathbb{R}^2$ with $|s_is_j|,|t_it_j| \geq 2$ for all $i \neq j$.
In the following, $d := \max_{r \in R} |s_rt_r|$ is the maximum distance a robot has to cover.
A \emph{trajectory} of a robot $r$ is a curve $m_r: [0,T_r] \rightarrow \mathbb{R}^2$, where $T_r \in \mathbb{R}^+$ denotes the \emph{travel time} of $r$.
This curve $m_r$ does not have to be totally differentiable, but must be totally left- and right-differentiable.
Intuitively, at any point in time, a robot has a unique \emph{past} and \emph{future direction} that are not necessarily identical.
This allows the robot to make sharp turns, but does not allow jumps.
We bound the speed of the robot by $1$, i.e., for each point in time, both left and right derivative of $m_r$ have Euclidean length at most $1$.
Let $m_i:[0,T_i] \rightarrow \mathbb{R}^2$ and $m_j:[0,T_j] \rightarrow \mathbb{R}^2$ be two trajectories;
w.l.o.g., all travel times are equal to the maximum travel time \nnew{$T_{\max}$} by \nnew{extending $m_r$ with} $m_r(t) = m_r(T_r)$ for all $T_r < t \leq T_{\max}$.
\textcolor{black}{The trajectories} $m_i$ and $m_j$ are \emph{compatible} if the corresponding robots do not intersect at any time, i.e., if $|m_i(t)m_j(t)| \geq 2$ holds for all $t \in [0,T_i]$.
A \emph{trajectory set} of $R$ is a set of compatible trajectories $\{ m_1,\dots,m_N \}$, one for each robot.
The \emph{(continuous) makespan} of a trajectory set $\{ m_1,\dots,m_N \}$ is \textcolor{black}{defined} as $\max_{r \in R} T_r$.
A trajectory set $\{ m_1,\ldots,m_N \}$ \emph{realizes} a pair of start and target configurations $\mathcal{S} := (\{ s_1,\ldots,s_N \},\{ t_1,\ldots,t_N \})$ if $m_r(0) = s_r$ and $m_r(T_r) = t_r$ hold for all $r \in R$.
We are searching for a trajectory set $\{ m_1,\dots,m_N \}$ realizing $\mathcal{S}$ with minimal makespan.

\section{Labeled Grid Permutation}\label{sec:constappr}
\textcolor{black}{Let $n_1 \geq n_2 \geq 2$, $n_1 \geq 3$} and let $P$ be an $n_1 \times n_2$-rectangle.
In this section, we show that computing the optimal makespan of arbitrarily chosen start and target configurations $C_s$ and $C_t$ of 
$k$ robots in $P$ is strongly NP-complete. \new{This is followed by a $\mathcal{O}(1)$-approximation for the makespan.}

\begin{theorem}\label{thm:optimalNPcomplete}
	The minimum makespan parallel motion planning problem on a grid is strongly NP-hard.
\end{theorem}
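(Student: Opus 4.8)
The plan is to give a polynomial-time reduction from a known NP-complete problem via a gadget construction embedded directly in the grid, and to observe that strong NP-hardness then comes essentially for free. The point is that a grid instance consists of a domain of polynomially many cells together with an explicit list of labeled robots; every number occurring in it (coordinates, labels, and the target makespan bound) is therefore polynomially bounded, so a reduction that is polynomial in the usual sense is automatically numerically polynomial and yields \emph{strong} hardness. For the source problem I would use planar $3$-SAT, whose planar incidence structure can be laid out in the grid without crossings; membership in NP is routine, since a feasible schedule of polynomially bounded makespan is a succinct certificate.

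The central idea is to tune the makespan threshold to equal the maximum start--target Manhattan distance $d$, so that in any feasible schedule a robot at distance $d$ can never wait and never detour. First I would build a \emph{variable gadget}: a choice robot whose start and target are at distance exactly $d$, but whose only length-$d$ monotone routes are confined to one of two disjoint corridors, a ``true'' corridor and a ``false'' corridor. Since the model has no obstacles, the corridors are realized by densely packing the surrounding cells with \emph{filler} robots sitting at (or near) their targets, so that any deviation of the choice robot into a filler cell would displace a filler and provably force the makespan above $d$. Committing to a corridor is then exactly the binary truth assignment.

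Next I would build a \emph{clause gadget}: for a clause $(\ell_1 \lor \ell_2 \lor \ell_3)$, a small junction region lying on the corridor of each literal that corresponds to its \emph{falsifying} value. Corridor lengths are matched so that if all three literals are false, the three choice robots must occupy the junction during overlapping time windows; because each cell holds at most one robot and swaps are forbidden, at least one robot must then wait a step, violating the bound $d$. Conversely, if some literal is true, its robot uses the other corridor and frees the junction, leaving room for a conflict-free, wait-free schedule. Length-matched wires connect variable occurrences to clauses, and planarity of the formula lets me route everything without unintended interference. Correctness splits into two directions: \textbf{completeness}, where a satisfying assignment is turned into a schedule in which every distance-$d$ robot moves monotonically toward its target and the clause analysis guarantees no junction is overloaded; and \textbf{soundness}, where in any makespan-$d$ schedule each choice robot is forced onto a single corridor, inducing a truth assignment, and the congestion argument at each junction shows this assignment satisfies every clause.

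The main obstacle is precisely this quantitative congestion control. Because a fully occupied cycle of length at least three may rotate in a single step, robots enjoy considerable freedom, so I must design the filler packing and corridor timings carefully enough that the \emph{only} way to avoid a forced wait at a clause junction is to satisfy the clause. The most delicate point is arranging the filler robots' own short targets so that the intended ``walls'' are stable under a wait-free schedule---that is, so that no filler can profitably move out of the way to relieve congestion without itself overshooting the makespan $d$. Getting these interlocking distance and timing constraints to hold simultaneously, while keeping the layout planar, is the technical crux of the proof.
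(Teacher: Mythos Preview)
Your proposal has a genuine gap, and you have in fact already put your finger on it: the ``filler robots as walls'' idea does not work in this model, and you have not supplied any mechanism that fixes it. A filler robot sitting at its target has distance $0$ to cover in a makespan-$d$ schedule, so it has $d$ steps of slack; it can step aside to let a choice robot through and later return. Worse, because a fully occupied cycle of length $\ge 3$ may rotate in a single step, a densely packed region is not rigid at all---a choice robot can push through a wall of fillers by rotating a short cycle, displacing each filler by one cell, and those fillers can drift back over the remaining time. So ``displacing a filler provably forces the makespan above $d$'' is exactly the claim that fails, and with it the corridor confinement on which both your variable and clause gadgets rest. Your congestion argument at the clause junction has the same weakness: even if three choice robots nominally contend for one cell, any nearby short-distance robot can absorb the conflict by yielding a step.

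The paper's proof sidesteps this entirely by never using near-stationary robots as obstacles. Instead, it reduces from \textsc{Monotone 3-Sat} and uses \emph{auxiliary robots whose start--target distance equals the makespan $M$}. Such a robot has zero slack: at every time step its position is forced, so it becomes a deterministic moving obstacle. Two columns of these sweeping auxiliaries pinch each variable robot onto one of two tracks (encoding true/false); separate \emph{checker} robots with one step of slack cross the variable tracks and are forced to wait one step exactly when their literal is unsatisfied; finally a \emph{clause} robot with zero slack threads through its three checkers and can do so iff at least one checker did not wait. All timing is enforced by long-distance robots, not by static packing. If you want to repair your approach, the missing idea is precisely this: manufacture obstacles out of robots that are themselves tight against the makespan, so their trajectories are rigid.
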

\nnew{
We prove hardness using a reduction from \textsc{Monotone 3-Sat}.
Intuitively speaking, given a formula, we construct a parallel motion planning instance with a \emph{variable robot} for each variable in the formula.
To encode a truth assignment, each variable robot is forced to move on one of two paths.
This is done by employing two groups of auxiliary robots that have to move towards their goal in a straight line in order to realize the given makespan.
These auxiliary robots form \emph{moving obstacles} whose position is known at any point in time.

The variable robots cross paths with \emph{checker robots}, one for each literal of the formula, forcing the checker to wait for one time step if the assignment does not satisfy the literal.
The checker robots then cross paths with \emph{clause robots}; each clause robot has to move to its goal without delay and can only do so if at least one of the checkers did not wait.
In order to ensure that the checkers meet with the clauses at the right time, further auxiliary robots force the checkers to perform a sequence of side steps in the beginning.
Figure~\ref{fig:reduction-overview-main} gives a rough overview of the construction; full details of the proof are given in Section~\ref{sec:labeledgridperm_details}.
}
\begin{figure}
	\begin{center}
		\resizebox{.5\linewidth}{!}{\includegraphics{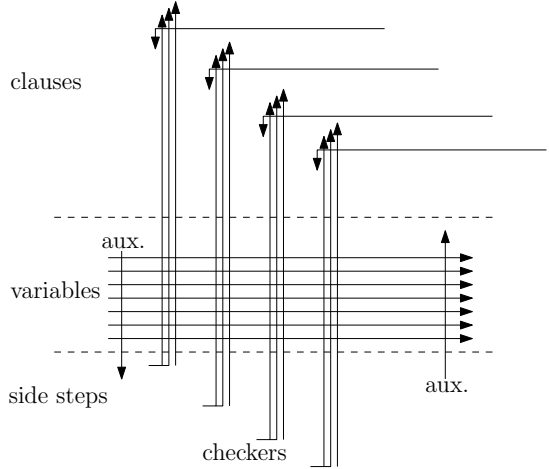}}
	\end{center}
	\caption{A sketch of the parallel motion planning instance resulting from the reduction.}
	\label{fig:reduction-overview-main}
\end{figure}

\nnew{In the proof of NP-completeness, we use a pair of start and target configurations in which the corresponding grids are not fully occupied.
However, for our constant-factor approximation, we assume in Theorem~\ref{thm:main} that the grid is fully occupied.
This assumption is without loss of generality; our approximation algorithm works for any grid population, see Theorem~\ref{thm:main2}.}

\new{Our constant-factor approximation is based on an algorithm that computes a schedule with a makespan upper-bounded by $\mathcal{O}(n_1+n_2)$ described by Lemma~\ref{lem:naive}. Based on Lemma~\ref{lem:naive}, we give a constant factor approximation of the makespan, see Theorem~\ref{thm:main}. Finally, we embed the algorithm of Theorem~\ref{thm:main} into a more general approach to ensure simultaneously a polynomial running time w.r.t.~the number $N$ of input robots and a constant approximation factor, see Theorem~\ref{thm:main2}.}

\begin{lemma}\label{lem:naive}
	For a pair of start and target configurations $C_s$ and $C_t$ of an $n_1 \times n_2$-rectangle, we can compute in polynomial time \new{w.r.t.~$n_1$ and $n_2$} a sequence of $\mathcal{O}(n_1+n_2)$ steps transforming~$C_s$ into~$C_t$.
\end{lemma}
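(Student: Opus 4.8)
The plan is to realize the reconfiguration as a classical \emph{route-by-sorting} computation on the mesh, expressing the whole schedule as a \emph{constant} number of phases, where each phase is a collection of \emph{independent one-dimensional} sorting or routing subproblems running simultaneously along all rows or along all columns, and each phase costs $\mathcal{O}(n_1+n_2)$ parallel steps. First I would reduce the general population to a \emph{full} permutation: every square that is empty in $C_s$ is filled with a \emph{dummy} robot, and the dummies are matched bijectively to the squares that are empty in $C_t$. A schedule for the resulting full permutation, with the dummies deleted afterwards, is a schedule for the original instance; so from now on I may assume $N=|P|$ and that the task is to realize an arbitrary bijection of the grid, and (as already assumed in this section) that $n_1 \geq n_2$.

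The schedule itself proceeds in three stages: (1) a \emph{preconditioning} stage that sorts the robots globally, in snakelike order, by the index of their target column; (2) a \emph{row} stage that sends each robot along its row to its correct column; and (3) a \emph{column} stage that sends each robot along its column to its correct row. Stage (3) is by construction a permutation inside each column---after stage (2) every column holds exactly the robots whose target lies in it---so it decomposes into independent one-dimensional routings and finishes in $\mathcal{O}(n_2)$ steps. The linchpin is stage (2): for the row routing to finish in $\mathcal{O}(n_1)$ steps, the robots destined for any one column must be \emph{spread out} over the rows, so that each row carries only a bounded number of robots per target column. I would secure this \emph{balancing} property through stage (1), realized as a Schnorr--Shamir-style snake sort built from a constant number of alternating row- and column-sorts (hence $\mathcal{O}(n_1+n_2)$ steps); a counting argument then bounds, for every row and every target column $j$, the number of robots in that row heading for $j$ by a constant, which is exactly what makes the row stage linear.

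The step I expect to be the main obstacle is turning these abstract one-dimensional sort/route subproblems into a \emph{legal} schedule under the movement model of Section~\ref{sec:pre}: a robot may only step to one of its (at most) four adjacent squares, each square holds at most one robot, and two robots may not exchange squares in a single step. Robots advancing in one direction along a lane can be pushed forward like cars, but robots passing in \emph{opposite} directions, combined with single-square occupancy (``buffer~$1$''), are what make naive greedy routing deadlock. I would execute each one-dimensional subproblem as an odd--even \emph{conveyor}: in each step, a maximal set of robots that each have an empty adjacent target square advance, and I would stagger the departure times emerging from stage (1) so that at most one robot ever contends for a given square, at the cost of only a constant factor in the step count. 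The no-swap constraint is then automatically satisfied, since every advancing robot moves into a square that was empty at the start of the step; and the slack supplied by the padded empty squares---or, in a fully occupied lane, by processing it as a sorted monotone sequence---guarantees that such a free square is always available. Summing the three stages yields the claimed $\mathcal{O}(n_1+n_2)$ makespan, all bookkeeping is clearly polynomial in $n_1$ and $n_2$, and this establishes Lemma~\ref{lem:naive}; the genuinely buffer-$1$-tight refinement needed for dilation-based bounds is deferred to Theorem~\ref{thm:main2}.
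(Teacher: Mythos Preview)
Your proposal has a genuine gap exactly where you flag ``the main obstacle.'' Once you pad with dummies, \emph{every} square is occupied, so your odd--even conveyor (``advance into an empty adjacent square'') can never move---the dummies are full-fledged robots, not slack. The same problem kills stage~(1): a row- or column-sort on a fully packed $1\times n$ lane is impossible in this model without borrowing space from an adjacent lane, because the only legal moves inside a full lane are synchronous cyclic rotations, which cannot realize an arbitrary permutation. Your fallback ``in a fully occupied lane, by processing it as a sorted monotone sequence'' does not help for stage~(1), since there the lane is precisely what you are trying to sort. The paper's proof closes exactly this gap: it shows that any swap of two adjacent robots can be carried out in $\mathcal{O}(1)$ steps inside a $2\times 3$ block containing them (Lemma~\ref{lem:3times3} and Figure~\ref{fi:swap}), covers the rectangle by twelve layers of disjoint $2\times 3$ and $3\times 2$ blocks so that any parallel batch of disjoint swaps costs $\mathcal{O}(1)$ transformation steps, and then runs the swap-based mesh sort {\sc RotateSort}, which needs $\mathcal{O}(n_1+n_2)$ such batches.

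There is a second, independent problem: your balancing claim after stage~(1) is false as stated. A global snakelike sort keyed by target column places the $n_2$ robots destined for any fixed column $j$ into $n_2$ consecutive snakelike positions and hence into at most two rows, so a single row may carry up to $n_2$ (not $\mathcal{O}(1)$) robots headed for column $j$; stage~(2) is then not even a well-defined row permutation, because several robots contend for the same cell. The classical three-phase routing you seem to have in mind requires a preliminary \emph{column} permutation chosen via an edge-coloring/Hall argument so that every row receives exactly one robot per target column; sorting by target column does not produce this.
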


\new{The high-level idea of the algorithm of Lemma~\ref{lem:naive} is the
following. We apply a sorting algorithm called {\sc
RotateSort}~\cite{marberg:sorting} that computes a corresponding permutation of
an $n_1 \times n_2$ (orthogonal) grid within $\mathcal{O}(n_1+n_2)$
\emph{parallel steps}. Each parallel step is made up of a set of pairwise
disjoint \emph{swaps}, each of which causes two neighbouring robots to exchange
their positions. Because in our model direct swaps are not allowed, we simulate one
parallel step by a sequence of $\mathcal{O}(1)$ transformation steps. This
still results in a sequence of $\mathcal{O}(n_1+n_2)$ transformation steps.
A detailed description of the algorithm used in the proof of
Lemma~\ref{lem:naive} is given in Section~\ref{sec:scheduleNaive}}.

\new{Based on the algorithm of Lemma~\ref{lem:naive}, we can give a constant-factor approximation algorithm.}

\begin{theorem}\label{thm:main}
	There is an algorithm with running time $\mathcal{O}(dn_1n_2)$ that, given an arbitrary pair of start and target configurations of an $n_1 \times n_2$-rectangle with
maximum distance $d$ between any start and target position, computes a schedule of makespan $\mathcal{O}(d)$, i.e., an approximation algorithm with constant stretch.
\end{theorem}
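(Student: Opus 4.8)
The plan is to localize the problem: since every robot travels Manhattan distance at most $d$, no robot needs to leave a neighborhood of radius $d$ around its start, so it suffices to reconfigure on overlapping windows of side length $\Theta(d)$ and to invoke the naive schedule of Lemma~\ref{lem:naive} on each window, where it costs only $\mathcal{O}(d)$ steps. First I would fix a partition of $P$ into axis-aligned blocks of side length $\ell := \Theta(d)$, chosen so that $\ell \geq d$. Because a robot's start and target differ by at most $d \leq \ell$ in Manhattan distance, the start block and target block of every robot coincide or are (horizontally, vertically, or diagonally) adjacent. Assigning to each robot the \emph{home block} containing its target, the task decomposes into (i) transporting every robot into its home block and (ii) permuting the robots inside each home block to their exact target cells.

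For step (ii) I would apply Lemma~\ref{lem:naive} to all home blocks in parallel: each block is an $\ell \times \ell$-rectangle, so this costs $\mathcal{O}(\ell) = \mathcal{O}(d)$ steps, and by the full-occupancy assumption each block contains exactly as many robots as cells once step (i) is complete. The heart of the argument is step (i), which I would run in a constant number of phases. In each phase I group the blocks into super-blocks (e.g. $2 \times 2$ clusters, still of side $\Theta(d)$), using a different offset of this grouping in each phase so that, over the $\mathcal{O}(1)$ phases, every ordered pair of adjacent blocks lies in a common super-block in at least one phase. Within each super-block I then use Lemma~\ref{lem:naive} to move, in $\mathcal{O}(d)$ steps, exactly those robots whose home block lies inside the super-block and still needs them; the shifting schedule guarantees that after the last phase every robot sits in its home block. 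Summing the $\mathcal{O}(1)$ phases of step (i) with step (ii) yields $\mathcal{O}(1) \cdot \mathcal{O}(d) = \mathcal{O}(d)$ makespan.

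For the running time, each invocation of Lemma~\ref{lem:naive} on a window of side $\Theta(d)$ is polynomial in $d$, there are $\mathcal{O}(n_1 n_2 / d^2)$ windows, and the combined schedule has $\mathcal{O}(d)$ steps, each specified by the positions of all robots; writing it out therefore dominates and gives $\mathcal{O}(d\, n_1 n_2)$ overall.

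I expect the main obstacle to be the bookkeeping that makes step (i) correct, namely defining the intermediate target configuration used inside each super-block in each phase so that three conditions hold at once: every local target is realizable by Lemma~\ref{lem:naive} (so block capacities must be respected and the selected robots must exactly fill their intended cells), each robot's assigned position moves monotonically toward its home block across phases, and a constant number of offsets of the super-block grouping genuinely suffices to bring every robot home. Establishing this conservation-and-progress invariant --- essentially reducing the balanced flow of robots between adjacent blocks (balanced because of full occupancy and the permutation structure) to a constant sequence of within-super-block permutations --- is the crux; the geometric observation that displacements are bounded by $d$ and the reuse of Lemma~\ref{lem:naive} as a black box make everything else routine.
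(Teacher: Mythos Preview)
Your high-level tiling and your Step~(ii) coincide with the paper: partition into $\Theta(d)\times\Theta(d)$ tiles, and finish by applying Lemma~\ref{lem:naive} inside every tile. The divergence is Step~(i), and there the gap you yourself flag is real, not just bookkeeping. Within a $2\times 2$ super-block $S$, the map ``send every robot whose home tile lies in $S$ to that tile, leave the rest in place'' is \emph{not} a permutation of $S$: for a tile $T\in S$ one needs (robots in $S$ with home $T$) $=$ (robots in $T$ with home in $S$), and this fails whenever the circulation on the tile graph has net flow crossing $\partial S$ at $T$. Any fix must park surplus robots as placeholders somewhere in $S$, and a single boundary cycle on the tile graph already shows that the obvious placements push placeholders \emph{farther} from home, destroying the monotone-progress invariant; nothing in your outline bounds the number of phases once this happens. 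Put differently, the restriction of a global circulation to a super-block is not a circulation, so super-block-local permutations cannot realize it without extra structure.

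The paper sidesteps this entirely. It keeps the tiling but treats the inter-tile movement as a planar circulation $G_T$ of maximum edge weight $\mathcal{O}(d^2)$, preprocesses away crossings and bidirectional edges, and then proves (Lemma~\ref{lem:computePartition}) that $G_T$ decomposes into $\mathcal{O}(d)$ many $d$-subflows using a recursive peeling of nested clockwise/counterclockwise cycles. Each $d$-subflow moves at most $d$ robots across any tile boundary, so it can be realized by a \emph{single} transformation step after lining robots up along tile borders and matching incoming to outgoing robots via non-crossing paths inside each tile (Lemmas~\ref{lem:transStep} and~\ref{lem:sequenceTransStep}). The $\mathcal{O}(d)$ subflows thus cost $\mathcal{O}(d)$ steps total. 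This global thinning of the circulation is precisely the idea your shifted-super-block scheme lacks; without it, the ``conservation-and-progress'' invariant you correctly identify as the crux does not follow.
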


For the algorithm of Theorem~\ref{thm:main}, Lemma~\ref{lem:naive} is repeatedly applied to rectangles of side length~$\mathcal{O}(d)$, resulting in $\mathcal{O}(d)$ transformation steps in total.
Because $d$ is a lower bound on the makespan, this yields an $\mathcal{O}(1)$-approximation of the makespan.

\nnew{At} a high level, the algorithm of Theorem~\ref{thm:main} first computes the maximal Manhattan distance~$d$ between a robot's start and target position.
Then we partition $P$ into a set $T$ of pairwise disjoint rectangular \emph{tiles}, where each tile $t \in T$ is an $n_1' \times n_2'$-rectangle for $n_1',n_2' \leq 24d$.
We then use an algorithm based on flows to \new{compute a sequence of $\mathcal{O}(d)$ transformation steps, ensuring} that all robots are in their target tile.
Once all robots are in the correct tile, we use Lemma~\ref{lem:naive} simultaneously on all tiles to move each robot to the correct position within its target tile. \new{The details of the algorithm of Theorem~\ref{thm:main} are given further down in this section.}

The \new{above mentioned tiling} construction \new{ensures} that each square of $P$ belongs to one unambiguously defined tile and each robot has a \emph{start} and \emph{target tile}.

\new{
Based on the approach of Theorem~\ref{thm:main} we give a $\mathcal{O}(1)$-approximation algorithm for the makespan with a running time polynomial w.r.t.~the number $N$ of robots to be moved.

\begin{theorem}\label{thm:main2}
	There is an algorithm with running time $\mathcal{O}(N^5)$ that, given an arbitrary pair of start and target configurations of a rectangle $P$ with $N$ robots to be moved and maximum distance $d$ between any start and target position, computes a schedule of makespan $\mathcal{O}(d)$, i.e., an approximation algorithm with constant stretch.
\end{theorem}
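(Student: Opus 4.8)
The plan is to reduce a possibly sparse instance on a huge rectangle to a \emph{dense} instance on a grid of size polynomial in $N$, solve the latter with Theorem~\ref{thm:main}, and then lift the resulting schedule back. The only reason Theorem~\ref{thm:main} is not already polynomial in $N$ is that its running time $\mathcal{O}(dn_1n_2)$ pays for the $n_1n_2$ cells of $P$, of which only $\mathcal{O}(N)$ are ever occupied when $N \ll |P|$. Two structural facts make the reduction possible. First, in the schedule we build every robot stays within $\ell_1$-distance $\mathcal{O}(d)$ of its start, so any empty band wider than this interaction range cannot be crossed by any robot; such bands split the instance into clusters that never interact and may be scheduled simultaneously and independently. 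Second, inside a cluster an empty band is only ever \emph{traversed} by robots, never used for maneuvering, so it may be collapsed to $\mathcal{O}(1)$ width (keeping a buffer so that robots can still pass one another in a column of height $\ge 2$), provided we record the removed width for later.

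Concretely, I would first compute $d$, form the clusters above, and within each cluster apply coordinate compression to the occupied rows and columns: every maximal run of empty rows or columns is contracted to a constant number of buffer lines, and the amount removed is stored. A cluster of $k$ robots has at most $2k$ occupied coordinates per axis, so after contraction its grid has side lengths $\mathcal{O}(k)$ and compressed displacement $\tilde d \le \min(d,\mathcal{O}(k))$; summed over clusters the compressed instance has size $\mathrm{poly}(N)$. I would then run the algorithm of Theorem~\ref{thm:main} on each compressed cluster, which --- because every robot, tile and flow now lives in an $\mathcal{O}(N)\times\mathcal{O}(N)$ grid with compressed displacement $\tilde d \le d$ --- produces in $\mathrm{poly}(N)$ time a valid schedule of makespan $\mathcal{O}(\tilde d)$ and, in particular, handles all the nontrivial interactions, including robots that must pass one another head-on inside a former empty corridor.

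The final and most delicate step is \emph{decompression}. Each contracted band is re-expanded to its original width, and every compressed move across it is replayed on $P$ as a uniform straight translation whose length is the recorded removed width; the few local maneuvers that the compressed schedule performs inside a (former) empty region --- for example two robots crossing in opposite directions --- are placed at a single constant-size spot within the re-expanded band, with the surrounding motion being pure translation. Since the re-expanded regions are empty, these translations are mutually collision-free and can be synchronized by timing, so the lifted schedule is feasible; its makespan is the compressed makespan plus the band widths each robot traverses, which is $\mathcal{O}(d)$ with respect to the \emph{true} displacements and hence matches the lower bound $d$, giving constant stretch. All long translations are emitted in the compact form ``direction and length'' rather than cell by cell, so the size of the output, and thus the running time, is independent of the (possibly enormous) gap widths; the cost is dominated by Theorem~\ref{thm:main} on the $\mathrm{poly}(N)$-sized compressed instance together with the flow, assignment and crossing-placement bookkeeping, which I expect to total $\mathcal{O}(N^5)$.

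The main obstacle is proving that this decompression is correct, i.e., that (a) contracting empty bands never removes space that Theorem~\ref{thm:main}'s algorithm actually needs for maneuvering --- one must argue that all maneuvering happens within the retained occupied neighborhoods and the constant-width buffers --- and (b) re-expanding the bands inflates the makespan only by the \emph{real} traversed widths and never causes two robots to overtake or collide, which requires the buffers to be wide enough and the translations to be timed consistently across all robots and clusters. Establishing these two invariants, and checking that the compact schedule representation can indeed be produced and evaluated within the claimed $\mathcal{O}(N^5)$ time, is where the real work of the proof lies.
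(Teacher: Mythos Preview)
Your approach differs from the paper's, and the decompression step you flag as ``the main obstacle'' is indeed a genuine gap that your proposal does not close. The paper sidesteps compression entirely via a case split on whether $N \le \min(\lfloor n_1/4\rfloor, d)$.

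When $N$ is this small, the paper does \emph{not} call Theorem~\ref{thm:main} at all: after rounding every robot to odd coordinates, it realizes all rightward motions at once by lifting each right-moving robot one row, translating it horizontally, and dropping it back; left, up, and down motions are handled in three further sequential passes. Because $N \le \lfloor n_1/4\rfloor$, the lanes are guaranteed to be free, so four passes of length $\mathcal{O}(d)$ suffice --- no tiling, no flows, no compression.

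Otherwise either $d < N$, or $n_1 < 4N$ and hence $d \le n_1+n_2 = \mathcal{O}(N)$; in both sub-cases $d=\mathcal{O}(N)$. The paper therefore just takes the bounding rectangle of each start/target pair, iteratively merges overlapping rectangles, and runs Theorem~\ref{thm:main} directly on each resulting rectangle. Since the side lengths involved are bounded by $N\cdot d = \mathcal{O}(N^2)$, the $\mathcal{O}(d\,n_1n_2)$ cost of Theorem~\ref{thm:main} is polynomial in $N$ with no decompression needed.

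Your compression scheme, by contrast, cannot treat Theorem~\ref{thm:main} as a black box. That algorithm fills the grid with dummy robots, tiles it into $\Theta(d)\times\Theta(d)$ blocks whose boundaries may cut straight through your constant-width buffers, and runs {\sc RotateSort} inside tiles; these steps routinely use buffer cells for genuine maneuvering, not mere traversal, and move dummy robots that have no counterpart in the original instance. Lifting such a schedule while maintaining your invariants (a) and (b) would force you to redesign the tiling and flow-realization so they respect the band structure --- at which point you are no longer reducing to Theorem~\ref{thm:main} but reproving it. The paper's two-case argument is both simpler and complete.
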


	Intuitively speaking, the approach of Theorem~\ref{thm:main2} distinguishes two cases. 

(1)~Both~$\lfloor \frac{n_1}{4} \rfloor$ and the maximum distance $d$ between the robots' start and target positions, are lower-bounded by the number $N$ of input robots. 

(2) $N > \lfloor \frac{n_1}{4} \rfloor$ or $N > d$.
	
	In case (1), the grid is populated sparsely enough such that the robots' trajectories in northern, eastern, southern, and western direction can be done sequentially by four individual transformation sequences, see Figure~\ref{fig:achievingIntermediate}.
	
	
	 In order to ensure that each robot has locally enough space, we consider a preprocessed start configuration $C_o$ in which the robots have odd coordinates. We ensure that $C_s$ can be transformed into $C_o$ within $\mathcal{O}(d)$ steps. Analogously, we ensure that the outcome of the northern, eastern, southern, and western trajectories is a configuration $C_e$ with even coordinates, such that $C_e$ can be transformed into $C_t$ within $\mathcal{O}(d)$ transformation steps.
	
	In the second case, we apply the approach of Theorem~\ref{thm:main} as a subroutine to a union of smallest rectangles that contain the robots' start and target configurations, see Figure~\ref{fig:polyTimeSecondApproach}.
	
	The full detailed version of the proof of Theorem~\ref{thm:main2} can be found in Section~\ref{sec:polymain2}.

}

\new{In the rest of Section~\ref{sec:constappr}, we give the proof of Theorem~\ref{thm:main}, i.e. \nnew{we give} an algorithm that computes a schedule with makespan linear in the maximum distance between robots' start and target positions.} The remainder of the proof \new{of Theorem~\ref{thm:main}} is structured as follows.
In Section~\ref{sec:outline} we give an outline of our \emph{flow algorithm} that ensures that each robot reaches its target tile in $\mathcal{O}(d)$ transformation steps.
Section~\ref{sec:details} \new{gives the full intuition} of this algorithm and
its subroutines. \new{(For full details, we refer to
Section~\ref{sec:labeledgridperm_details})}.

\subsection{Outline of the Approximation Algorithm \new{of Theorem~\ref{thm:main}}}\label{sec:outline}
We model the trajectories of robots between tiles as a flow $f_T$, using the weighted directed graph $G_T = (T, E_T, f_T)$, which is dual to the tiling $T$ defined in the previous section.
In $G_T$, we have an edge $(v,w) \in E_T$ if there is at least one robot that has to move from $v$ into $w$.
Furthermore, we define the weight $f_T((v,w))$ of an edge as the \textcolor{black}{integer} number of robots that move from $v$ to $w$.
As $P$ is fully occupied, $f_T$ is a \emph{circulation}, i.e., a flow with no sources or sinks, in which flow conservation has to hold at all vertices.
\nnew{Because the side lengths of the tiles are greater than~$d$,} $G_T$ is a grid graph with additional diagonal edges and thus has degree at most $8$.

The maximum edge value of $f_T$ is $\Theta (d^2)$, but only $\mathcal{O}(d)$ robots can possibly leave a tile within a single transformation step.
Therefore, we decompose the flow $f_T$ of robots into a \emph{partition} consisting of $\mathcal{O}(d)$ \emph{subflows}, where each individual robot's motion is modeled by exactly one subflow and each edge in the subflow has value at most $d$.
\textcolor{black}{Thus we are able to \emph{realize} each subflow in a single transformation step by placing the corresponding robots adjacent to the boundaries of its corresponding tiles before we realize the subflow}.
To facilitate the decomposition into subflows, we first preprocess~$G_T$.
In total, the algorithm consists of the following subroutines, elaborated in detail in Section~\ref{sec:details}.
\begin{itemize}
	\item \bf Step 1: Compute $d$, the tiling $T$ and the corresponding flow $G_T$.
	\item \bf Step 2: Preprocess $G_T$ in order to remove intersecting and bidirectional edges.
	\item \bf Step 3: Compute a partition into $\mathcal{O}(d)$ $d$-subflows.
	\item \bf Step 4: Realize the $\mathcal{O}(d)$ subflows using $\mathcal{O}(d)$ transformation steps.
	\item \bf Step 5: Simultaneously apply Lemma~\ref{lem:naive} to all tiles, moving each robot to its target position.
\end{itemize}

\subsection{Details of the Approximation Algorithm \new{of Theorem~\ref{thm:main}}}\label{sec:details}
\new{In this section we only give more detailed descriptions of Steps~1-4 because Step 5 is a trivial application of Lemma~\ref{lem:naive} to all tiles in parallel.}

\subsubsection{Step 1: Compute $d$, the Tiling $T$, and the corresponding Flow $G_T$}\label{sec:tiling}

\new{The maximal distance between robots' start and target positions can be computed in a straightforward manner.}

\new{For the tiling, we assume that the rectangle $P$ is axis aligned and that its \nnew{bottom-left} corner is~$(0,0)$. We consider $k_v := \lfloor \frac{n_1}{12d} \rfloor$ vertical lines $\ell^v_1,\dots,\ell^v_{k_v}$ with $x$-coordinate modulo $12d$ equal to $0$. Analogously, we consider $k_h:=\lfloor \frac{n_2}{12d} \rfloor$ horizontal lines $\ell^h_1,\dots,\ell^h_{k_h}$ with $y$-coordinate modulo $12d$ to $0$. Finally, we consider the tiling of $P$ that is induced by the arrangement induced by $\ell^v_1,\dots,\ell^v_{k_v -1}, \ell^h_1,\dots,\ell^h_{h_v -1}$ and the boundary of $P$, see Figure~\ref{fig:tiling}.} \nnew{This implies that the side length of a tile is upper-bounded by $24d - 1$.}

\new{Finally, computing the flow $G_T$ is straightforward by considering the tiling $T$ and the robots' start and target positions.}

 
\subsubsection{Step 2: Ensuring Planarity and Unidirectionality}
After initialization, we preprocess $G_T$, removing edge intersections and bidirectional edges by transforming the start configuration $C_s$ into an intermediate start configuration $C_s'$, obtaining a planar flow without bidirectional edges.
This transformation consists of two steps: (1) ensuring planarity and (2) ensuring unidirectionality.

\begin{figure}[ht]

\begin{center}
\begin{tabular}{cccc}
\includegraphics[scale=0.4]{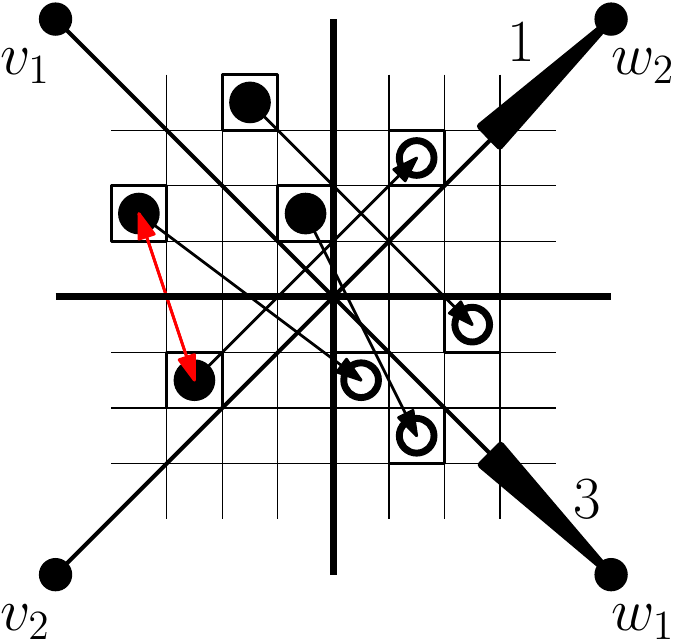}  &
\includegraphics[scale=0.4]{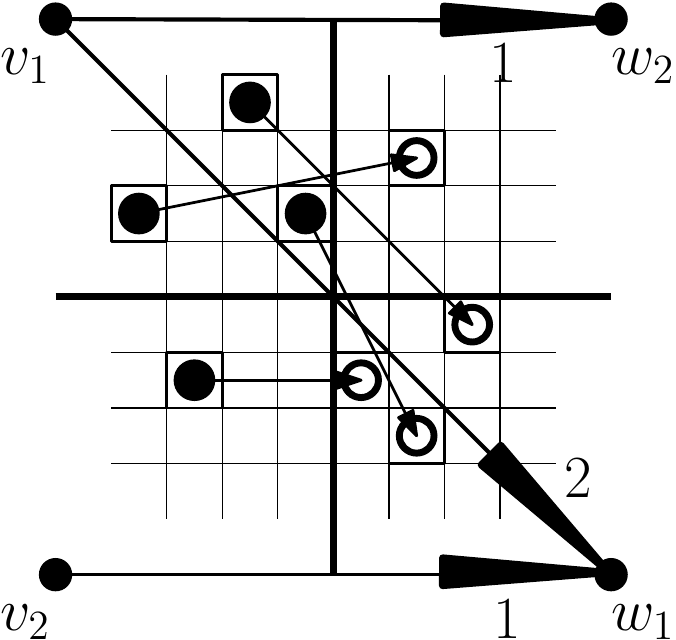} &
\includegraphics[scale=0.4]{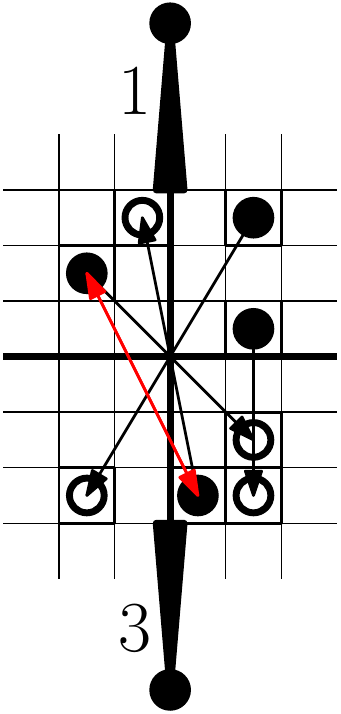}  &
\includegraphics[scale=0.4]{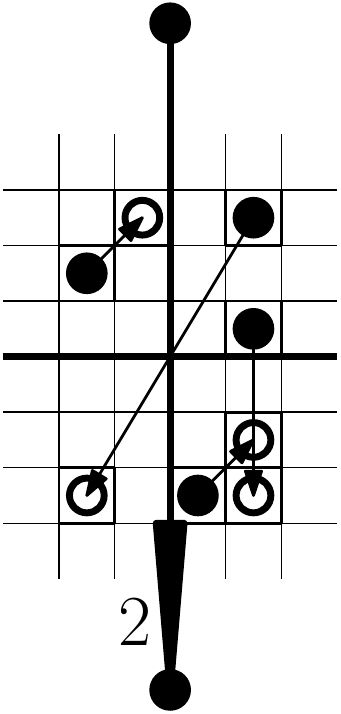} \\
{\small (a)}  &
{\small (b) }&
{\small (c) }&
{\small (d) }\\
\end{tabular}
\end{center}
\vspace*{-12pt}
\caption{Illustration of the preprocessing (step (1): before and after removing crossing edges (a)+(b) and step (2): before and after removing bidirectional edges (c)+(d)). The red arrows indicate how robots change their positions during the preprocessing steps.}
\label{fig:crossing}
\vspace*{-10pt}
\end{figure}

\textbf{Step (1):} We observe that edge crossings only occur between two diagonal edges with adjacent source tiles, as illustrated in Figure~\ref{fig:crossing}(a)+(b).
To remove a crossing, it suffices to eliminate one of the diagonal edges by exchange robots between the source tiles.
To eliminate all crossings, each robot is moved at most once, because after moving, the robot does no longer participate in a diagonal edge.
Thus, all necessary exchanges can be done in $\mathcal{O}(d)$ steps by Lemma~\ref{lem:naive}, covering the tiling $T$ by constantly many layers, similar to the proof of Lemma~\ref{lem:naive}.

\begin{sloppypar}
\textbf{Step (2):} We delete a bidirectional edge $(v,w),(w,v)$ by moving $\min \{ f_T((v,w)),\,f_T((w,v)) \}$ robots with target tile $w$ from $v$ to $w$ and vice versa \nnew{which achieves that $\min \{ f_T((v,w)),\,f_T((w,v)) \}$ robots achieve their target tile $w$ and $\min \{ f_T((v,w)),\,f_T((w,v)) \}$ robots achieve their target tile $v$,} thus eliminating the edge with lower flow value.
This process is depicted in Figure~\ref{fig:crossing}(c)+(d).
Like step (1), this can be done in $\mathcal{O}(d)$ parallel steps by Lemma~\ref{lem:naive}.
As we do not add any edges, we maintain planarity during step (2).
Observe that during the preprocessing, we do not destroy the grid structure of $G_T$.
\end{sloppypar}

\new{Step (1) and step (2) maintain the flow property of $f_T$ without any other manipulations to the flow $f_T$, because both preprocessing steps can be represented by local circulations.}

 

\subsubsection{Step 3: Computing a Flow Partition}\label{sec:flowPartition}
After preprocessing, we partition the flow $G_T$ into $d$-subflows.

\begin{definition}\label{def:subflow}
	A \emph{subflow} of $G_T$ is a \textcolor{black}{circulation} $G'_T = (T, E',f'_T)$, such that $E' \subseteq E_T$, and $0 \leq f_T'(e) \leq f_T(e)$ for all $e \in E'$. 
	If $f_T'(e) \leq z$ for all $e \in E'$ and some $z \in \mathbb{N}$, we call $G'_T$ a $z$-flow.
\end{definition}

The flow partition relies on an upper bound on the maximal edge weight in $G_T$.
By construction, tiles have side length at most $24d$; therefore, each tile consists of at most $576d^2$ unit squares.
This yields the following upper bound; a tighter constant factor can be achieved using a more sophisticated argument.

\begin{observation}\label{obs:maxEdgeWeight}
	We have $f_T(e) \leq 576d^2$ for all $e \in E_T$.
\end{observation}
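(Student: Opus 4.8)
The plan is to prove this purely by a counting argument on how many robots a single tile can hold. Recall from Step~1 that the flow value $f_T(e)$ on an edge $e = (v,w) \in E_T$ is, by definition, the integer number of robots whose start position lies in tile $v$ and whose target position lies in tile $w$. Hence every robot contributing to $f_T((v,w))$ occupies a unit square of the source tile $v$ in the start configuration $C_s$, and these squares are pairwise distinct, since a configuration is injective on the set of robot labels.

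First I would bound the number of unit squares contained in any tile. By the tiling construction of Step~1, each tile is an axis-aligned rectangle whose side lengths are at most $24d - 1 < 24d$ in both the horizontal and the vertical direction. A rectangle of integer side lengths $n_1' \times n_2'$ contains exactly $n_1' \cdot n_2'$ unit squares, so any tile contains at most $(24d)(24d) = 576 d^2$ unit squares.

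Next I would combine the two facts. Since the robots counted by $f_T((v,w))$ start in pairwise distinct unit squares of $v$, their number cannot exceed the number of unit squares of $v$. Therefore $f_T((v,w)) \leq 576 d^2$, and because $e$ was an arbitrary edge of $E_T$, the claimed bound $f_T(e) \leq 576 d^2$ holds for all $e \in E_T$.

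There is no substantial obstacle here: the statement is just a capacity bound following directly from the tile size fixed in Step~1. The only point requiring a little care is to invoke the definition of $f_T$ so as to guarantee that all robots counted on a single edge share the same source tile and are thus bounded by its area; the loose constant $576$ stems from squaring the worst-case side length, and, as the preceding remark indicates, a sharper constant can be obtained by a more careful accounting of how the robots leaving $v$ distribute among its outgoing edges.
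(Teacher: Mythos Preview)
Your proposal is correct and follows essentially the same argument as the paper: the paper notes that tiles have side length at most $24d$, hence at most $576d^2$ unit squares, and concludes the bound from this capacity constraint on the source tile. Your write-up just makes explicit the (implicit) step that robots counted by $f_T((v,w))$ occupy distinct squares of $v$.
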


\begin{definition}\label{def:partition}
	A $(z,\ell)$-partition of $G_T$ is a set of $\ell$ $z$-subflows $\{ G_1=(V_1,E_1,f_1),\dots,G_{\ell}=(V_{\ell},E_{\ell},f_{\ell}) \}$ of $G_T$, such that $G_1,\dots,G_{\ell}$ sum up to $G_T$.
\end{definition}

\begin{lemma}
\label{lem:computePartition}
We can compute a $(d,\mathcal{O}(d))$-partition of $G_T$ in polynomial time.
\end{lemma}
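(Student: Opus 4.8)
The plan is to exploit the planarity and unidirectionality secured by Step~2 by representing the circulation $f_T$ through a \emph{face potential} and then slicing it into \emph{level cycles}. After Step~2, $G_T$ is a plane graph in which every edge carries nonnegative flow in a single direction, so I would first invoke planar duality to obtain an integer function $\phi$ on the faces of $G_T$ (normalized so that the outer face has value $0$) with $f_T(e)=\phi(L(e))-\phi(R(e))$ for every directed edge $e$, where $L(e)$ and $R(e)$ are the left and right face of $e$. The well-definedness of $\phi$ is exactly flow conservation: summing the signed edge flows around any single vertex gives $0$, and these elementary dual cycles generate all dual cycles, so fixing $\phi$ on the outer face and propagating the edge values through the dual graph yields a consistent potential. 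This propagation is a single traversal of the dual and runs in polynomial time.

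Second, for each integer level $k$ I would define the $0/1$ edge function $C_k$ that equals $1$ precisely on those edges $e$ with $\phi(R(e))<k\le\phi(L(e))$. Geometrically, $C_k$ is the boundary separating the faces with $\phi\ge k$ from those with $\phi<k$, hence a union of directed cycles and in particular a circulation. Because all edges are unidirectional with nonnegative flow, the set of levels on which a fixed edge $e$ is active is the \emph{contiguous} range $\{\phi(R(e))+1,\dots,\phi(L(e))\}$, of cardinality exactly $f_T(e)$; summing over all levels therefore reproduces $f_T=\sum_k C_k$.

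The key step is to group these level cycles so that both the number of groups and the per-edge thickness stay bounded. By Observation~\ref{obs:maxEdgeWeight}, every edge is active on a contiguous range of at most $576d^2$ levels. I would set $\ell:=576d$ and, for each residue $g\in\{0,\dots,\ell-1\}$, form the subflow $G_g:=\sum_{k\equiv g\,(\mathrm{mod}\ \ell)}C_k$. Each $G_g$ is a sum of circulations and hence a circulation, and $\sum_g G_g=\sum_k C_k=f_T$, so the $G_g$ partition $G_T$. For the thickness, a contiguous range of at most $576d^2$ consecutive integers meets any fixed residue class modulo $\ell=576d$ in at most $\lceil 576d^2/(576d)\rceil=d$ values; thus $G_g(e)\le d$ for every edge, so each $G_g$ is a $d$-subflow and we obtain a $(d,\mathcal{O}(d))$-partition. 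Since each $G_g(e)$ is read off from $\phi(L(e))$ and $\phi(R(e))$ by a direct counting formula, the whole construction is polynomial.

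I expect the main obstacle to be the bookkeeping around the planar-dual correspondence: fixing a consistent orientation convention for $L(e)$ and $R(e)$, treating the outer face, and confirming that each $C_k$ is genuinely a circulation rather than just an edge set. Once the potential representation is in place, the \emph{striped} (round-robin) grouping is what makes the number of subflows independent of the potential range $\max\phi-\min\phi$, which may be as large as $\Theta((n_1+n_2))$: partitioning into blocks of $d$ consecutive levels would also keep thickness $\le d$ but could create $\Theta((\max\phi-\min\phi)/d)$ subflows, whereas distributing levels by residue pins the count at $\ell=\mathcal{O}(d)$ while still guaranteeing thickness at most $d$.
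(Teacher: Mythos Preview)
Your argument is correct and reaches the same $(d,576d)$-partition as the paper, but via a genuinely different route. The paper first decomposes $f_T$ into simple $1$-cycles, splits them by orientation, and applies a recursive peeling procedure to obtain four families $\mathbb{C}_{\circlearrowright}^1,\mathbb{C}_{\circlearrowright}^2,\mathbb{C}_{\circlearrowleft}^1,\mathbb{C}_{\circlearrowleft}^2$ in each of which any two cycles are nested or edge-disjoint; it then builds the nesting forest and labels cycles by depth modulo $576d$, using that all cycles through a fixed edge lie on a single root-to-leaf path of length at most $576d^2$. Your approach replaces the peeling and the four-way orientation split by a single integer face potential $\phi$ and its level curves $C_k$: the nesting depth is encoded directly as the potential value, and the ``cycles through $e$ form a path in the forest'' observation becomes the trivially contiguous active range $\{\phi(R(e))+1,\dots,\phi(L(e))\}$. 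Both arguments then finish with the identical residue-class trick. Your version is cleaner and avoids the combinatorial bookkeeping of peeling and forest construction; the paper's version is more hands-on and sidesteps the planar-duality formalism (the $L/R$ orientation conventions and the verification that each $C_k$ is a circulation) that you correctly flag as the main point requiring care.
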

\begin{proof}[Proof sketch]
In a slight abuse of notation, throughout this proof, the elements in sets of cycles are not necessarily unique.
A $(d,\mathcal{O}(d))$-partition can be constructed using the following steps.
\begin{itemize}
        \item We start by computing a $(1,h)$-partition $\mathbb{C}_{\bigcirc}$ of $G_T$ consisting of $h \leq n_1n_2$ cycles.
        This is possible because $G_T$ is a circulation.
        If a cycle $C$ intersects itself, we subdivide $C$ into smaller cycles that are intersection-free.
        Furthermore, $h$ is clearly upper bounded by the number of robots $n_1n_2$, because every robot can contribute only $1$ to the sum of all edges in $G_T$.
        As the cycles do not self-intersect, we can partition the cycles $\mathbb{C}_{\bigcirc}$ by their orientation, obtaining the set $\mathbb{C}_{\circlearrowright}$ of clockwise and the set $\mathbb{C}_{\circlearrowleft}$ of counterclockwise cycles.

        \item We use $\mathbb{C}_{\circlearrowright}$ and $\mathbb{C}_{\circlearrowleft}$ to compute a $(1,h')$-partition $\mathbb{C}_{\circlearrowright}^1 \cup \mathbb{C}_{\circlearrowright}^2 \cup \mathbb{C}_{\circlearrowleft}^1 \cup \mathbb{C}_{\circlearrowleft}^2$ with $h' \leq n_1n_2$, such that two cycles from the same subset $\mathbb{C}_{\circlearrowright}^1$,  $\mathbb{C}_{\circlearrowright}^2$, $\mathbb{C}_{\circlearrowleft}^1$, or $\mathbb{C}_{\circlearrowleft}^2$ share a common orientation.
        Furthermore, we guarantee that two cycles from the same subset are either edge-disjoint or one lies nested in the other.
        A partition such as this can be constructed by applying a recursive peeling algorithm to $\mathbb{C}_{\circlearrowright}$ and $\mathbb{C}_{\circlearrowleft}$ as depicted in Figure~\ref{fig:intersectionFree}, yielding a decomposition of the flow induced by $\mathbb{C}_{\circlearrowright}$ into two cycle sets $\mathbb{C}_{\circlearrowright}^1$ and $\mathbb{C}_{\circlearrowright}^2$, where $\mathbb{C}_{\circlearrowright}^1$ consists of clockwise cycles and $\mathbb{C}_{\circlearrowright}^2$ consists of counterclockwise cycles, and a similar partition of $\mathbb{C}_{\circlearrowleft}$\nnew{, see the appendix for details.} 

        \begin{figure}[h]
                \begin{center}
                        \resizebox{.75\linewidth}{!}{\includegraphics{}}
                \end{center}
                \caption{Recursive peeling of the area bounded by the cycles from $\mathbb{C}_{\circlearrowright}$, resulting in clockwise cycles (thick black cycles).
                        Cycles constituting the boundary of \emph{holes} are counterclockwise (thick red cycles).
                        Note that an edge $e$ vanishes when $f_T(e)$ cycles containing that edge are removed by the peeling algorithm described above.}
        \label{fig:intersectionFree}
        \end{figure}

        \item Afterwards, we partition each set $\mathbb{C}_{\circlearrowright}^1$, $\mathbb{C}_{\circlearrowright}^2$, $\mathbb{C}_{\circlearrowleft}^1$, and $\mathbb{C}_{\circlearrowleft}^2$ into $\mathcal{O}(d)$ subsets, each inducing a $d$-subflow of $G_T$\nnew{, see the appendix for details.}
        \end{itemize}

 	\noindent 
 \end{proof}

\subsubsection{\textcolor{black}{A Subroutine of} Step 4: Realizing a Single Subflow}\label{sec:realizseOneFlow}
In this section, we present a procedure for \emph{realizing} a single $d$-subflow $G_T'$ of $G_T$.

\begin{definition}
	A \textcolor{black}{schedule $t := C_1 \rightarrow \dots \rightarrow C_{k+1}$} \emph{realizes} a subflow $G'_T = (T,E',f'_T)$ if, for each pair $v,w$ of tiles, the number of robots moved by $t$ from their start tile $v$ to their target tile $w$ is $f_T'((v,w))$, where we let $f_T'((v,w)) = 0$ if $(v,w) \notin E'$.
\end{definition}

\begin{lemma}
\label{lem:transStep}
Let $G'_T=(T,E_T',f_T')$ be a planar unidirectional $d$-subflow.
	There is a polynomial-time algorithm that computes a \textcolor{black}{schedule $C_1 \rightarrow \dots \rightarrow C_{k+1}$ realizing $G'_T$ for a constant $k \in \mathcal{O}(1)$}.
\end{lemma}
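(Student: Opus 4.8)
The plan is to realize $G'_T$ by first \emph{staging} the crossing robots against the relevant tile boundaries and then pushing them across in a constant number of synchronized transformation steps. For a directed edge $(v,w) \in E_T'$ we must move the $f_T'((v,w)) \le d$ robots whose start tile is $v$ and target tile is $w$ across the shared boundary. Since a robot's start and target positions are at Manhattan distance at most $d$ while each tile has side length at least $12d$, every such robot already lies within distance $d$ of the boundary it must cross; moreover, because $f_T'((v,w)) \le d$ but each shared side has length at least $12d$, I can reserve for every edge a contiguous boundary segment of length $f_T'((v,w))$, and these reserved segments (together with the two short stretches near each corner that diagonal edges need) are pairwise disjoint, since a tile has at most $8$ incident edges and the total reserved length along any one side is at most $3d < 12d$. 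Using Lemma~\ref{lem:naive} inside the source tiles, in parallel over all tiles, I move the crossing robots of each edge so that they occupy the cells adjacent to their reserved segment. Within this subroutine this staging is the only non-constant ingredient and is a local $\mathcal{O}(d)$ sort; in the overall algorithm it is performed once and shared by all $\mathcal{O}(d)$ subflows, so it does not recur per subflow.

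Next I execute the crossings. \textbf{Orthogonal edges} are realized in a single transformation step: each staged robot steps across its boundary into the reserved destination cells. Two facts make this collision-free. First, \emph{unidirectionality} guarantees that no boundary is crossed in both directions, so there are no head-on conflicts at a shared side. Second, because $G'_T$ is a \emph{circulation}, the number of robots entering any tile equals the number leaving it; combined with the capacity bound ($\le d$ robots entering versus a boundary strip of length $\ge 12d$), this lets me arrange the staging so that the destination cells are exactly those vacated by the tile's own outgoing robots, so room is available precisely when needed. \textbf{Diagonal edges} cannot be crossed in one orthogonal step, so each diagonal robot is routed through the shared corner in a constant number of steps, e.g.\ one step into the corner cell of an orthogonally adjacent tile and then one step into the target tile. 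Here \emph{planarity} is essential: after the preprocessing of Step~2 no two diagonal edges cross, so at most one diagonal edge passes through any corner, and the few corner cells used by a diagonal crossing can be reserved so as not to interfere with the orthogonal crossings sharing the same side.

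Finally I argue correctness and the step count. By the disjointness of the reserved segments and corner cells, together with the unidirectionality and planarity arguments above, all crossings can be carried out simultaneously without two robots ever occupying the same cell or swapping positions, so the resulting sequence $C_1 \rightarrow \dots \rightarrow C_{k+1}$ is a valid schedule; by construction it moves exactly $f_T'((v,w))$ robots from $v$ to $w$ for every pair of tiles, hence realizes $G'_T$. The crossing itself consists of a constant number of phases — one for the orthogonal edges and a constant number for routing diagonals through corners — giving $k \in \mathcal{O}(1)$, and all choices (reserved segments, corner routes, sorting targets inside tiles) are computable in polynomial time. I expect the main obstacle to be the simultaneous collision-freeness at the shared corners: one must verify that the orthogonal crossing and the (at most one) diagonal crossing meeting at a corner, together with the robots being vacated from the destination tiles, can be scheduled within the same constant number of steps without blocking one another, and that the circulation property genuinely frees up the destination cells in time. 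Fixing the reserved corner cells and the order of the few crossing phases is the delicate part; the long boundaries ($\ge 12d$ against a flow of $\le d$) provide exactly the slack that makes this possible.
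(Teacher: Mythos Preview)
Your staging idea and the use of planarity/unidirectionality are on the right track, but there is a genuine gap in the crossing step. You write that ``the destination cells are exactly those vacated by the tile's own outgoing robots, so room is available precisely when needed.'' In the setting of this lemma the grid is \emph{fully occupied}, so the cell a robot enters when it crosses into tile $w$ is occupied by some robot that is \emph{not} one of $w$'s outgoing robots (those are staged at a different side of $w$). The circulation property only tells you the \emph{number} of entering and leaving robots match; it does not give you a free cell at the entry point. Your single-step crossing therefore does not work as described, and this is not a corner issue --- it fails already for a tile with one orthogonal in-edge on the left and one orthogonal out-edge on the right.

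The paper's fix, which you are missing, is a \emph{tunnel} construction: inside each tile one computes a non-crossing matching between incoming and outgoing robots, and for the $i$th matched pair a path is laid through the tile along its $(d+i)$th hull. The realization step then shifts every robot on each such path by one position simultaneously; the incoming robot pushes the whole chain, and the outgoing robot at the far end pops out. This is what makes the single crossing step legal in a fully occupied grid, and it is where the flow-conservation is actually used (the matching exists because in-degree equals out-degree). The paper also handles diagonal edges differently from your proposal: rather than routing $f_T'((w,v))\le d$ robots through corner cells at crossing time, it eliminates each diagonal $(w,v)$ during preprocessing by exchanging those robots with stationary robots in an orthogonally adjacent tile $u$, replacing the diagonal by the pair $(w,u),(u,v)$; this avoids the corner-contention analysis you flag as delicate.
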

\begin{proof}[Proof sketch]
\new{We give a high-level description of the proof and refer to Section~\ref{sec:RealizSingleSubflow} for details.}

\begin{figure}[ht]
\begin{center}
	\begin{subfigure}[b]{.25\linewidth}
		\resizebox{\linewidth}{!}{\includegraphics{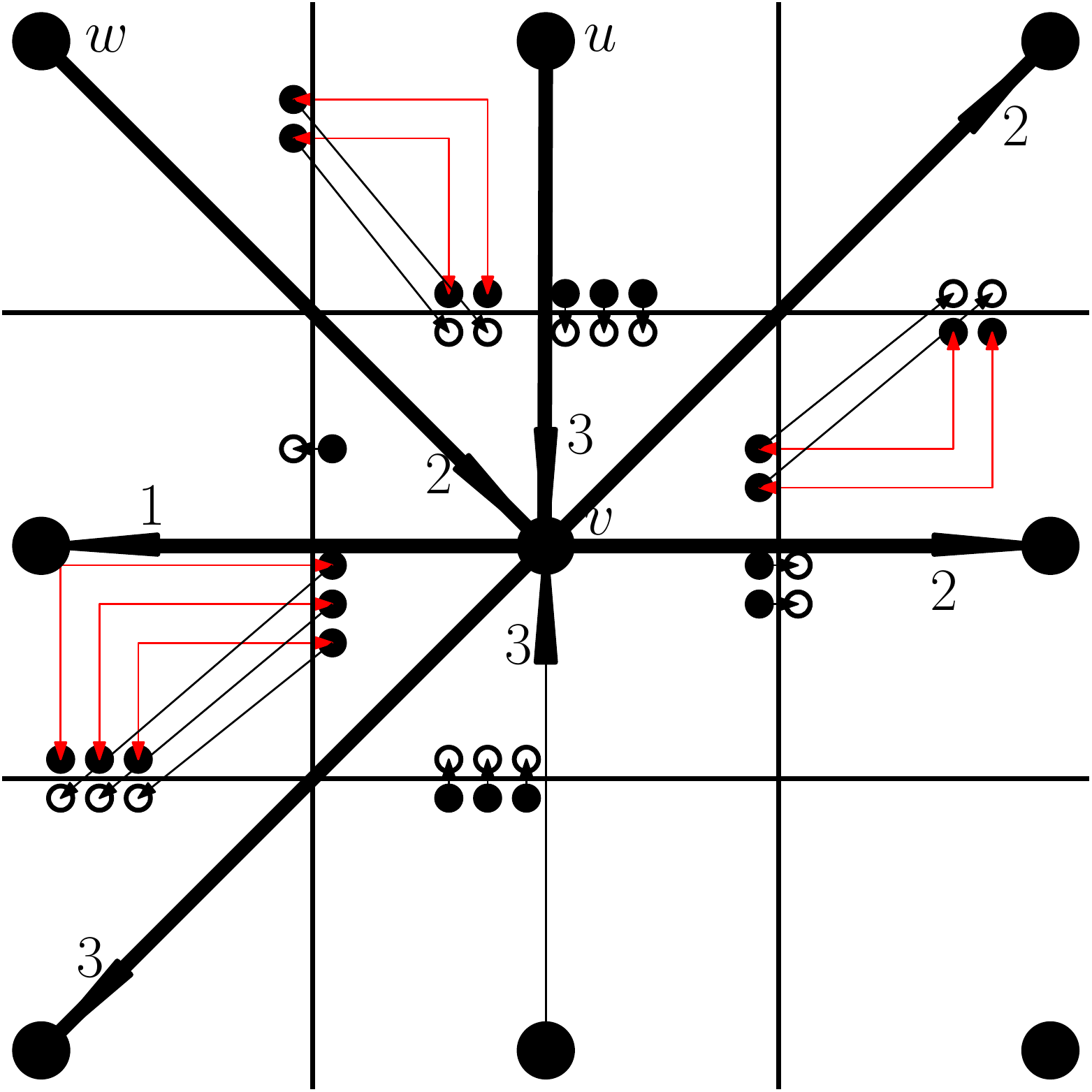}}
		\caption{Preprocessing of diagonal edges.}
	\end{subfigure}
	\hspace{.03\linewidth}
	\begin{subfigure}[b]{.25\linewidth}
		\resizebox{\linewidth}{!}{\includegraphics{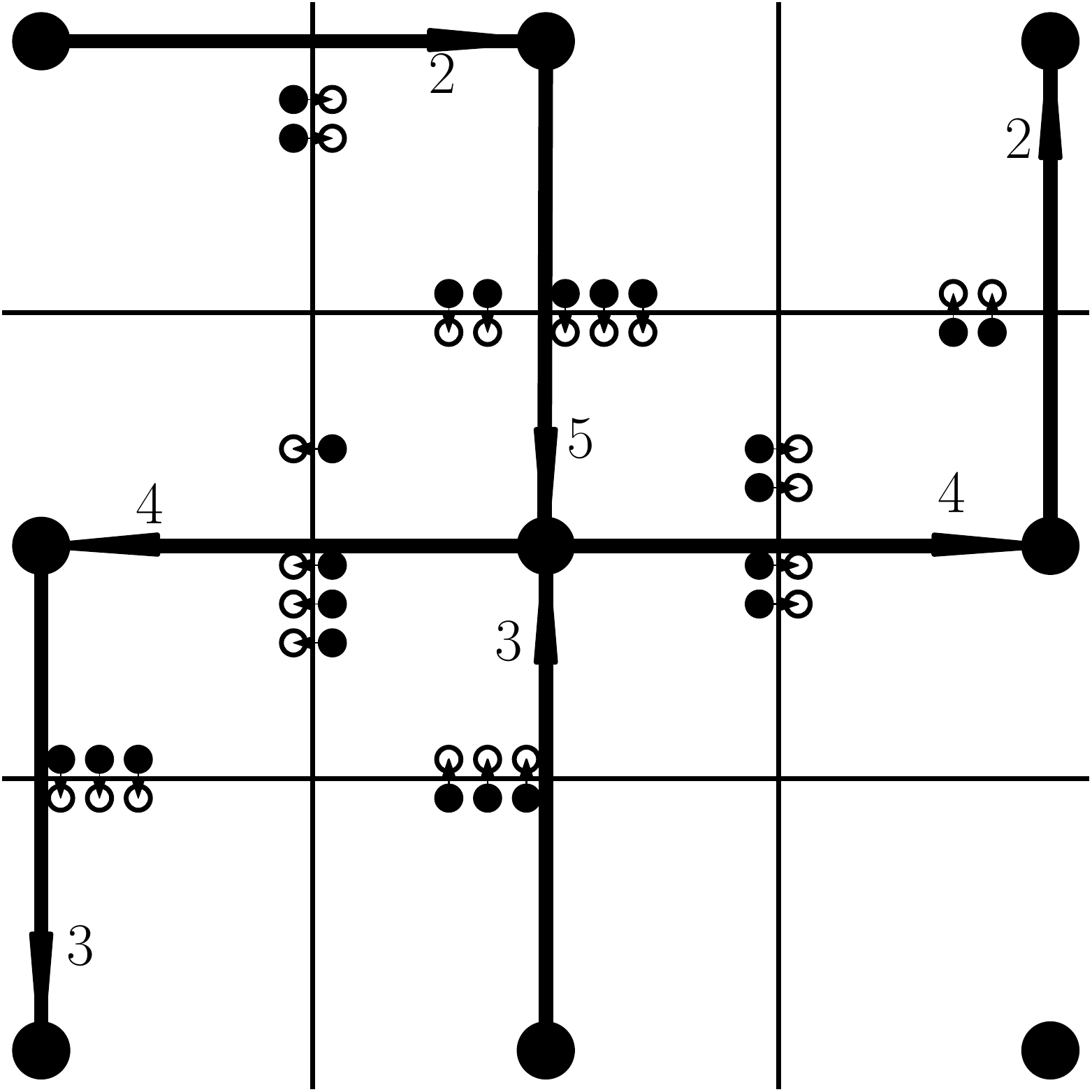}}
		\caption{Configuration and flow after preprocessing.}
	\end{subfigure}
	\hspace{.03\linewidth}
	\begin{subfigure}[b]{.35\linewidth}
		\resizebox{\linewidth}{!}{\includegraphics{}}
		\caption{A crossing-free matching of incoming and outgoing robots and the connecting paths inside the corresponding tile, for $d=3$.}
	\end{subfigure}
\end{center}
\vspace*{-12pt}
\caption{Procedure for computing transformation steps that realize a $d$-subflow. Figures (a) and~(b) illustrate how we preprocess $G_T'$ such that $E_T'$ consists of horizontal and vertical edges only. Figure (c) illustrates the main approach. \nnew{White disks illustrate start positions and black disks illustrate target positions.}}
\label{fig:transStepA}
\end{figure}

Our algorithm uses $k = \mathcal{O}(d)$ preprocessing steps $C_1 \rightarrow \dots \rightarrow C_k$, as depicted in Figure~\ref{fig:transStepA}(a)+(b), and one final realization step $C_k \rightarrow C_{k+1}$, shown in Figure~\ref{fig:transStepA}(c), pushing the robots from their start tiles into their target tiles.
The preprocessing eliminates diagonal edges and places the moving robots next to the border of their target tiles. \new{For the final realization step we compute a pairwise disjoint matching between incoming and outgoing robots, such that each pair is connected by a tunnel inside the corresponding tile in which these tunnels do not intersect, see Figure~\ref{fig:transStepA}(a). The final realization step is given via the robots' motion induced by pushing each robot into the interior of the tile and by pushing this one-step motion through the corresponding tunnel into the direction of the corresponding outgoing robot.}
\end{proof}

\subsubsection{Step 4: Realizing All Subflows}\label{sec:iterativelyRealise}
Next we extend the idea of Lemma~\ref{lem:transStep} to $\ell \leq d$ subflows instead of one and demonstrate how this can be leveraged to move all robots to their target tile using $\mathcal{O}(d)$ transformation steps.

\begin{lemma}
\label{lem:sequenceTransStep}
Let $\mathcal{S} := \langle G_1=(V_1,E_1,f_1),\dots,G_{\ell}=(V_{\ell},E_{\ell},f_{\ell}) \rangle$ be a sequence of $\ell \leq d$ unidirectional planar $d$-subflows of $G_T$.
	There is a polynomial-time algorithm computing $\mathcal{O}(d)  + \ell$ transformation steps $C_1 \rightarrow \dots \rightarrow C_{k+\ell}$ realizing $\mathcal{S}$.
\end{lemma}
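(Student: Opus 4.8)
The plan is to build directly on the single-subflow construction of Lemma~\ref{lem:transStep}, while avoiding the naive cost of $\ell \cdot \mathcal{O}(d) = \mathcal{O}(d^2)$ that would result from realizing the subflows one after another. The key observation is that the single-subflow realization already splits into an expensive preprocessing phase of $\mathcal{O}(d)$ steps --- which routes the relevant robots to the boundaries of their target tiles along pairwise non-crossing tunnels inside each tile --- followed by a single cheap \emph{push} step that advances every tunnel by one cell and thereby moves exactly the front robots across the tile boundaries. I would pipeline this: perform the preprocessing for all $\ell$ subflows at once, and then emit the $\ell$ realizations as $\ell$ consecutive push steps, so that the preprocessing cost is paid only once.

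Concretely, the first phase is to set up, inside each tile, a single system of pairwise non-crossing tunnels that serves all $\ell$ subflows simultaneously, together with the diagonal-to-orthogonal edge conversion used in Lemma~\ref{lem:transStep}. Each tunnel ends at one of the tile's outgoing boundary edges, and along it the outgoing robots are stacked in the order given by the sequence $\langle G_1,\dots,G_\ell\rangle$: the robot crossing on behalf of $G_1$ sits at the exit, the robot for $G_2$ directly behind it, and so on. Since each $G_i$ is a $d$-subflow and every tile has only $\mathcal{O}(1)$ boundary edges, at most $\mathcal{O}(d)$ robots leave a tile per subflow, so the total number of robots ever staged inside one tile is $\mathcal{O}(\ell d) = \mathcal{O}(d^2)$, which fits inside the $\Theta(d)\times\Theta(d)$ tile. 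Because the tile has diameter $\mathcal{O}(d)$, building this layered arrangement is an in-tile reordering that can be carried out in $\mathcal{O}(d)$ transformation steps by invoking Lemma~\ref{lem:naive} in parallel on all tiles; this accounts for the $\mathcal{O}(d)$ term and for the prefix $C_1 \to \dots \to C_k$ with $k \in \mathcal{O}(d)$.

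After preprocessing I would run the $\ell$ push steps $C_k \to C_{k+1} \to \dots \to C_{k+\ell}$. Push number $i$ advances every tunnel by one cell, so the layer belonging to $G_i$ (currently at the exits) crosses the corresponding boundaries, realizing exactly $f_i((v,w))$ crossings for every edge $(v,w)$, while the layers for $G_{i+1},\dots,G_\ell$ each move one cell closer to their exits and the robots entering a tile from its neighbours refill the vacated tunnel cells. Because the subflows are circulations, inflow and outflow balance at every tile push-by-push, which is precisely what guarantees that the cells vacated inside a tile coincide with the cells the incoming robots occupy, keeping each push a valid collision-free simultaneous one-cell shift. Summing the two phases yields a schedule of length $k + \ell = \mathcal{O}(d) + \ell$, and all subroutines (tunnel construction, matching, Lemma~\ref{lem:naive}) run in polynomial time.

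The main obstacle is the correctness of the pipelined push phase: I must exhibit a single global tunnel-and-layer assignment that is simultaneously (i)~non-crossing inside every tile, (ii)~consistent across tile boundaries, so that the robot pushed out of $v$ during push $i$ lands precisely in the cell freed inside the adjacent tile $w$ and, when $w$ is its target tile, comes to rest there, and (iii)~correctly ordered, so that the $i$-th push realizes the $i$-th subflow and nothing else. Establishing invariant~(ii) --- that the circulation property turns every push into a valid simultaneous shift of all staged and incoming robots without deadlock --- is the crux, and this is exactly the place where the planarity and unidirectionality hypotheses on the $G_i$, together with the $d$-bound on their edge values, are indispensable.
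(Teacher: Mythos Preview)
Your high-level pipelining idea --- do one $\mathcal{O}(d)$ preprocessing pass that stages the robots for all $\ell$ subflows inside each tile, then fire off $\ell$ cheap realization steps --- is exactly the paper's strategy. The discrepancy is in the mechanism you propose for the preprocessing and the pushes.

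You commit to \emph{one fixed system of non-crossing tunnels per tile} and stack the outgoing robots for $G_1,\dots,G_\ell$ along each tunnel, so that ``push $i$ advances every tunnel by one cell''. This does not account for the fact that the edge values $f_i((v,w))$ vary with $i$: the number of robots that must cross the $v$--$w$ boundary in step $i$ is $f_i((v,w))$, which changes from subflow to subflow, and hence so does the in/out matching inside each tile. A single static tunnel system fixes that matching once and for all; it cannot simultaneously realize subflows whose flow patterns on the tile's incident edges differ. You flag exactly this as ``the crux'' under invariant~(ii), but you do not resolve it --- and with a fixed tunnel system it is genuinely not resolvable, because a tunnel that carries an outgoing $G_1$-robot at its head need not carry any $G_2$-robot at all.

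The paper sidesteps the obstacle rather than confronting it. Its preprocessing does not build tunnels; it merely \emph{stacks the outgoing robots in rows parallel to each tile side}, one row per subflow $G_i$, in the order of the sequence, and then shoves the whole stack toward the boundary (Figure~\ref{fig:sequenceSubflowsA}). After this, the $\ell$ realization steps are $\ell$ independent applications of the final step of Lemma~\ref{lem:transStep}: for each $G_i$ a \emph{fresh} crossing-free matching between that subflow's incoming and outgoing robots is computed, and the hull-based paths of Lemma~\ref{lem:transStep} are used only for that one step. Because the matching is recomputed per subflow, the varying $f_i$ values pose no difficulty; and because each realization step consumes exactly the bottom row of each stack, the row for $G_{i+1}$ automatically drops to the boundary, ready for the next step. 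Thus the ``single global tunnel-and-layer assignment'' you are trying to construct is never needed.
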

\begin{proof}[Proof sketch]
	\new{We give a high level description of the proof and refer for details to Section~\ref{sec:realizAllSubflows}.}

        Let $t$ be an arbitrary tile.
        Similar to the approach of Lemma~\ref{lem:transStep}, we first apply a preprocessing \textcolor{black}{step} guaranteeing that the robots to be moved into or out of $t$ are in the right position close to the boundary of $t$\nnew{, see Figure~\ref{fig:sequenceSubflowsASketch}}.
        Thereafter we move the robots into their target tiles, using $\ell$ applications of the algorithm from Lemma~\ref{lem:transStep} without the preprocessing phase. \new{In particular, we realize a sequence of $\ell$ $d$-subflows by applying $\ell$ times the single realization step of Lemma~\ref{lem:transStep}.}
        
        
        \begin{figure}[h]
		\begin{center}
			\includegraphics[scale=0.3]{}
		\end{center}
		\vspace{-3ex}
		\caption{Stacking robots in lines induced by flows of the edges of the subflows to be realized.}
		\label{fig:sequenceSubflowsASketch}
		\vspace{-3ex}
	\end{figure}

\end{proof}

\begin{lemma}\label{lem:iterativelyRealise}
	There is a polynomial-time algorithm computing $\mathcal{O}(d)$ transformation steps moving all robots into their target tiles.
\end{lemma}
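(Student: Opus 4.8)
The plan is to combine the flow partition of Lemma~\ref{lem:computePartition} with the batch-realization of Lemma~\ref{lem:sequenceTransStep}, exploiting the fact that both the number of subflows produced by the partition and the cost of realizing a single batch of subflows are controlled by~$d$. The whole point is that $\mathcal{O}(d)$ subflows can be grouped into $\mathcal{O}(1)$ batches of size at most~$d$, each of which is cheap to realize.

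First I would invoke Lemma~\ref{lem:computePartition} to obtain a $(d,\mathcal{O}(d))$-partition of the preprocessed flow $G_T$ into subflows $G_1,\dots,G_m$ with $m \le c\cdot d$ for some constant~$c$. Since Step~2 already made $G_T$ planar and free of bidirectional edges, every $G_i$---being a circulation supported on a subset of $E_T$---inherits planarity and unidirectionality, so each $G_i$ is a planar unidirectional $d$-subflow and hence a legal input for Lemma~\ref{lem:sequenceTransStep}. Moreover, because the partition is built so that each robot's inter-tile motion is carried by exactly one $G_i$, realizing all of $G_1,\dots,G_m$ moves every robot from its start tile to its target tile exactly once, which is precisely the conclusion we need.

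Next I would group the $m = \mathcal{O}(d)$ subflows into $\lceil m/d \rceil = \mathcal{O}(1)$ batches, each consisting of at most~$d$ subflows, and apply Lemma~\ref{lem:sequenceTransStep} to the batches one after another. Each batch of $\ell \le d$ subflows is realized in $\mathcal{O}(d)+\ell = \mathcal{O}(d)$ transformation steps, and since there are only $\mathcal{O}(1)$ batches, the entire process uses $\mathcal{O}(1)\cdot\mathcal{O}(d) = \mathcal{O}(d)$ steps in total. The running time is polynomial: the partition is computed in polynomial time by Lemma~\ref{lem:computePartition}, the grouping is trivial, and each of the constantly many calls to Lemma~\ref{lem:sequenceTransStep} runs in polynomial time.

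The main obstacle I expect is arguing that the sequential, batch-by-batch application of Lemma~\ref{lem:sequenceTransStep} is genuinely consistent: after one batch is realized the configuration has changed, and the preprocessing phase of the next batch must still be able to stack the relevant robots against the correct tile boundaries \emph{without disturbing} robots that have already reached their target tiles. The resolution rests on the partition property that each robot participates in only one subflow---once a robot is delivered to its target tile in some batch it never reappears as a moving robot---combined with the bounded edge weights (at most~$d$) and the bounded tile side length (at most~$24d$), which together guarantee that every tile always retains enough free space near its boundary to accommodate the at-most-$d$ robots crossing any single edge in the remaining batches. Making this interference-freeness precise, rather than the counting argument for the step bound, is where the real work lies.
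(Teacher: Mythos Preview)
Your argument is correct and matches the paper's proof essentially line for line: partition via Lemma~\ref{lem:computePartition}, group the $\mathcal{O}(d)$ resulting $d$-subflows into $\mathcal{O}(1)$ batches of size at most~$d$, and apply Lemma~\ref{lem:sequenceTransStep} to each batch for a total of $\mathcal{O}(d)$ steps. The interference concern you raise at the end is not actually where the work lies---the preprocessing phase of each batch is carried out by Lemma~\ref{lem:naive}, which can reconfigure an \emph{arbitrary} configuration inside every tile in $\mathcal{O}(d)$ steps, so robots already delivered to their target tiles may be freely shuffled during later batches and will be sorted into their final positions by Step~5 anyway; the paper accordingly does not address this point at all.
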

\begin{proof}
	By Lemma~\ref{lem:computePartition}, we can compute a $(d, c d)$-partition of $G_T$ for $c \in \mathcal{O}(1)$.
	We group the corresponding $d$-subflows into $\frac{cd}{d}=c$ sequences, each consisting of at most $d$ $d$-subflows.
	We realize each sequence by applying Lemma~\ref{lem:sequenceTransStep}, using $\mathcal{O}(d)$ transformation steps for each sequence.
	This leads to $\mathcal{O}(cd) = \mathcal{O}(d)$ steps for realizing all sequences of $d$-subflows.
\end{proof}

For the proof of Theorem~\ref{thm:main}, we still need to analyze the time complexity of our approach, for which we refer to Section~\ref{sec:togther}.

\section{Variants on Labeling}\label{sec:labeling}
A different version is the unlabeled variant, in which all robots are the same.
A generalization of both this and the labeled version arises when robots belong to one of $k$ color classes,
with robots from the same color class being identical.

\new{We formalize this problem variant by
using a coloring $c: \{ 1,\dots,n_1n_2 \} \rightarrow \{
1,\ldots,k \}$ for grouping the robots.
By populating unoccupied cells with robots carrying color $k+1$, we may assume that
each unit square in the environment $P$ is occupied.
The robots \emph{draw an image} $I = \big(I^1,\ldots,I^k\big)$, where $I^i$ is the set of cells occupied by a robot with color $i$. We say that two images $I_s$ and $I_t$ are \emph{compatible} if in $I_s$ and $I_t$ the number of cells colored with color $i$ are equal for each color $i = 1,\dots,k$.
By moving the robots, we want to transform a start image $I_s$ into a \nnew{compatible} target image $I_t$, minimizing the makespan.}

\begin{theorem}\label{thm:unlabeled}
There is an algorithm with running time $\mathcal{O}(k(N)^{1.5} \log (N) + N^5)$ for computing, given start and target images $I_s,I_t$
with maximum distance~$d$ between start and target positions, 
an $\mathcal{O}(1)$-approximation of the optimal makespan $M$ and a corresponding \textcolor{black}{schedule}.
\end{theorem}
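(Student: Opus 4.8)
The plan is to reduce the colored problem (and hence the unlabeled special case) to the labeled permutation problem already solved by Theorem~\ref{thm:main2}, by fixing, within each color class, which start robot is sent to which target cell. First I would compute, separately for each color $i \in \{1,\dots,k\}$, a \emph{bottleneck matching} between the start cells of $I_s^i$ and the target cells of $I_t^i$, i.e.\ a bijection minimizing the maximum Manhattan distance between a start cell and its matched target cell. Since $I_s$ and $I_t$ are compatible, $I_s^i$ and $I_t^i$ have equal cardinality for every $i$ (including the padding color $k+1$ used to fill the grid), so each such matching exists. Taken together over all colors, these matchings assign to every robot a unique target cell and thereby turn the image instance into a labeled permutation instance on the same fully occupied grid.

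Let $d^\ast$ denote the largest bottleneck distance over the $k$ color classes. The key structural claim is that $d^\ast$ lower-bounds the optimal makespan $M$. Any feasible schedule realizing $I_s \to I_t$ ends with every target cell occupied by a robot of the correct color and therefore induces a color-respecting assignment $\sigma$. The makespan is at least the number of unit steps taken by the robot that travels farthest, hence at least the maximum Manhattan distance realized by $\sigma$; restricting $\sigma$ to any single color $i$ shows this quantity is at least the minimum bottleneck of color $i$, so it is at least $\max_i(\text{minimum bottleneck of color } i) = d^\ast$. Thus $M \ge d^\ast$. Conversely, applying the labeled algorithm of Theorem~\ref{thm:main2} to the labeled instance constructed above produces a schedule whose maximum start--target distance is exactly $d^\ast$, hence of makespan $\mathcal{O}(d^\ast) = \mathcal{O}(M)$. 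This yields the claimed constant-factor (constant-stretch) approximation.

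For the running time, the labeled subroutine contributes the $\mathcal{O}(N^5)$ term directly through Theorem~\ref{thm:main2}. The bottleneck matchings contribute the remaining term: for each color I would \emph{not} build the potentially dense threshold graph explicitly, but instead combine the augmenting-path search of Hopcroft--Karp with a geometric range-searching structure that represents the graph implicitly, binary searching over the sorted list of $\mathcal{O}(N^2)$ candidate distances (which requires $\mathcal{O}(\log N)$ feasibility tests). This computes one color's bottleneck matching in $\mathcal{O}(N^{1.5}\log N)$ time, and $\mathcal{O}(k N^{1.5}\log N)$ over all colors. Summing the two contributions gives the stated $\mathcal{O}(k N^{1.5}\log N + N^5)$ bound.

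The main obstacle I anticipate is twofold. Conceptually, it is justifying that the \emph{bottleneck} objective, rather than, say, total travelled distance, is the correct reduction; this rests entirely on the clean lower bound $M \ge d^\ast$ above paired with the stretch guarantee of Theorem~\ref{thm:main2}, and on verifying that the color-$(k+1)$ padding robots are handled uniformly by the same per-color matching so that the resulting labeled instance is a genuine permutation of the fully occupied grid. Technically, the delicate point is attaining the advertised matching running time: a naive Hopcroft--Karp run on the explicit distance graph is too slow on dense instances, so the geometric, implicitly represented matching is what keeps the matching cost from dominating the $\mathcal{O}(N^5)$ term.
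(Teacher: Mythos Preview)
Your proposal is correct and follows essentially the same route as the paper: reduce to the labeled problem by computing, for each color class, a bottleneck matching between start and target cells, observe that the resulting maximum matched distance lower-bounds $M$, and then invoke the labeled constant-stretch algorithm. The only difference is cosmetic: where you sketch the implicit Hopcroft--Karp/range-searching scheme for the $\mathcal{O}(N^{1.5}\log N)$ matching bound, the paper simply cites the geometric bottleneck-matching result of Efrat et~al.\ (be a little careful with the ``binary search over the sorted $\mathcal{O}(N^2)$ distances'' phrasing, since explicitly sorting them would already cost $\Theta(N^2\log N)$; the Efrat et~al.\ algorithm avoids this).
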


The basic idea is to transform the given unlabeled problem setting into a labeled problem setting by solving a geometric bottleneck matching problem\nnew{, see the appendix for details}.

\section{Continuous Motion}
\label{sec:cont}
The continuous case considers $N$ unit disks 
that have to move into a target configuration;
the velocity of each robot is bounded by 1, and we want to minimize the makespan. 
For arrangements of disks that are not well separated, we show that constant stretch \nnew{is {\em impossible}}.

\begin{theorem}\label{thm:lowerBoundContinuousMakespan}
	There is an instance with optimal makespan $M \in \Omega (N^{1/4})$, see Figure~\ref{fig:lowerBoundStartAndTarget}.
\end{theorem}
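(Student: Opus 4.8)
The plan is to construct an explicit instance of $N$ unit disks for which any trajectory set realizing the reconfiguration needs makespan $\Omega(N^{1/4})$, even though the maximum start-target distance $d$ is only $O(1)$ (or at least much smaller than $N^{1/4}$), thereby ruling out constant stretch. First I would build a densely packed arrangement: take a large disk-shaped (or square) cluster of roughly $N$ unit disks arranged so that each disk touches its neighbors, i.e.\ the $\|s_is_j\|\ge 2$ separation is tight. Inside this dense blob I would designate a start configuration and a compatible target configuration that differ by a global ``cyclic'' or rotational rearrangement — for instance, rotating the labels around the center of the cluster, or swapping the labeling of two interleaved concentric rings — so that every individual disk has a nearby target (small $d$) but the disks collectively must circulate past one another.

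The core of the argument is a volume/area-flux lower bound. The key step is to exhibit a closed curve $\gamma$ (a circle concentric with the cluster of some radius $r \in \Theta(N^{1/2})$ in diameter units, enclosing area $\Theta(N)$) such that the reconfiguration forces a net number $\Phi = \Theta(N)$ of disk-crossings of $\gamma$: the chosen target permutation must move $\Theta(N)$ disks from inside $\gamma$ to outside or vice versa, or must impart a net angular transport so that the signed area swept across $\gamma$ is $\Theta(N)$. I would then bound the rate at which disk-area can cross $\gamma$ per unit time. Because each disk has area $\Theta(1)$, moves with speed at most $1$, and the disks cannot overlap, the instantaneous flux of disk-area across $\gamma$ is at most (length of $\gamma$) $\times$ (speed) $= O(r) = O(N^{1/2})$ per unit time. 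Integrating, the total area transported across $\gamma$ in time $T$ is $O(T \cdot N^{1/2})$. Setting $T\cdot N^{1/2} = \Omega(N)$ gives $T = \Omega(N^{1/2})$, which is in fact \emph{stronger} than claimed; so the real instance likely uses a thinner ``bottleneck'' so that the relevant crossing curve has length $\Theta(N^{1/4})$ while still forcing $\Theta(N^{1/2})$ crossings, yielding exactly $T=\Omega(N^{1/4})$. Concretely, I expect the construction in Figure~\ref{fig:lowerBoundStartAndTarget} to funnel the required exchanges through a channel of width $\Theta(N^{1/4})$ so that only $O(N^{1/4})$ disks can pass simultaneously, while $\Theta(N^{1/2})$ disks must pass through it.

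Carrying this out, the steps in order are: (1) define the packed instance and the start/target labeling precisely, verifying compatibility and the separation constraint $\|s_is_j\|,\|t_it_j\|\ge 2$, and checking that $d=O(1)$ or at least $d \ll N^{1/4}$ so the stretch is genuinely $\Omega(N^{1/4})$; (2) identify the bottleneck cut $\gamma$ and prove a \emph{combinatorial} lower bound $\Phi = \Omega(N^{1/2})$ on the number of disks that must traverse it, using the structure of the target permutation (a parity, winding-number, or inside/outside counting argument); (3) prove the \emph{geometric} throughput bound, namely that at most $O(N^{1/4})$ disks (matching the channel width) can occupy and thus cross the cut at any instant, and that each needs time $\Omega(1)$ to clear it, via the speed bound and non-overlap; (4) combine (2) and (3) to conclude any schedule needs $T\ge \Phi / O(N^{1/4}) = \Omega(N^{1/4})$.

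The main obstacle is step (2)–(3): making the crossing lower bound \emph{rigorous} for continuous, arbitrarily curved trajectories rather than for a discrete grid. A disk may weave back and forth across $\gamma$, so I cannot simply count net crossings; the cleanest route is an area- or momentum-based potential argument. I would define a monotone potential — e.g.\ the total signed swept area across $\gamma$, or the sum over disks of a function measuring ``progress'' of the required rearrangement — show it must change by $\Omega(N^{1/2})$ between start and target, and bound its time-derivative by $O(N^{1/4})$ using that at most $O(N^{1/4})$ non-overlapping unit disks can simultaneously straddle the width-$\Theta(N^{1/4})$ channel while moving at speed $\le 1$. Getting this continuous throughput bound airtight, and ensuring the potential's required change is forced by \emph{every} compatible target (not circumventable by clever rerouting outside the channel), is the delicate part; a careful choice of geometry that makes the channel the \emph{only} passage between the two regions the disks must exchange is what makes the bound unavoidable.
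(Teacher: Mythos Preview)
Your proposal takes a genuinely different route from the paper, and the route you guess does not match the actual construction, so as written it has a real gap.

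The paper's instance has \emph{no} channel, obstacle, or geometric bottleneck at all: it is a dense hexagonal packing of $N$ unit disks in the unbounded plane, with each disk's target at distance $d=2$ from its start (a global shift of the packing). Your flux-through-a-cut idea, applied to this instance with a concentric circle $\gamma$ of radius $\Theta(\sqrt{N})$, does not force $\Phi=\Theta(N)$ crossings: for a uniform shift by $2$, only the $O(\sqrt{N})$ disks near $\gamma$ need to cross it, and the resulting bound is $T=\Omega(1)$. You recognize this and conjecture that the figure must contain a narrow channel of width $\Theta(N^{1/4})$, but it does not; the whole point is that the $\Omega(N^{1/4})$ bound arises \emph{without} any obstacle restricting motion.

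The paper's mechanism is an area-expansion argument on the convex hull, not a flux argument. In the start and target configurations every interior disk has six neighbors at distance exactly $2$, so its Voronoi cell has area $2\sqrt{3}$. But in the start configuration each disk has some neighbor at distance $2\sqrt{3}>2.2$, while in the target that distance is $2$; by $2$-Lipschitz continuity of inter-center distances, each interior disk experiences a time interval of length $\tfrac{1}{20}$ during which its farthest neighbor is at distance in $[2.1,2.2]$, and a geometric computation shows its Voronoi cell then has area at least $3.479>2\sqrt{3}$. Pigeonhole over $[0,M]$ gives a single time $t$ at which $\Theta(N/M)$ cells are simultaneously this large, so the convex hull of the centers at time $t$ has area at least $2\sqrt{3}N+\Omega(N/M)$. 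On the other hand, since the initial convex hull has perimeter $O(\sqrt{N})$ and centers move at speed $\le 1$, the hull at time $t$ has area at most $2\sqrt{3}N+O(\sqrt{N}\cdot M)$ (plus lower-order terms). Comparing, $N/M=O(\sqrt{N}\,M)$, hence $M=\Omega(N^{1/4})$.

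So the missing idea is that the lower bound comes from the packing having to ``breathe'': to reconfigure, neighbor distances must transit through an intermediate range that forces each Voronoi cell to temporarily swell by a fixed constant, and the aggregate swelling is bottlenecked not by a channel but by the \emph{perimeter} of the whole cluster, which can grow only at rate $O(\sqrt{N})$. Your throughput-through-a-cut template cannot produce this bound here because no cut separates what must be exchanged; the obstruction is global area, not local passage.
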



The basic proof idea 
is as follows. Let $\{ m_1,\dots,m_N \}$ be an arbitrary trajectory set with
makespan $M$.  We show that there must be a point in time $t \in [0,M]$ where
the area of $\CHull(m_1(t),\dots,m_N(t))$ is lower-bounded by
$cN+\Omega(N^{3/4})$, where $cN$ is the area of the convex hull
$\CHull(m_1(0),\dots,m_N(0))$ of $m_1(0),\dots,m_N(0)$.
Assume $M \in o \left( N^{1/4} \right)$ and consider the area of $\CHull(m_1(t'),\dots,m_N(t'))$ at some point $t' \in [0,M]$.
This area is at most $cN + \mathcal{O}(\sqrt{N}) \cdot o\left(N^{1/4} \right)$ which is a contradiction. Proof details are given in Section~\ref{sec:lowerBoundCont}.

Conversely, we give a non-trivial upper bound on the stretch, as follows.

\begin{theorem}\label{thm:continuousCaseUpperBound}
	There is an algorithm that computes a trajectory set with continuous makespan \nnew{of} $\mathcal{O}(d+\sqrt{N})$. If $d \in \Omega(1)$, this implies a $\mathcal{O}(\sqrt{N})$-approximation algorithm.
\end{theorem}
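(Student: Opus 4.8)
The plan is to establish the upper bound $\mathcal{O}(d+\sqrt{N})$ on the continuous makespan by reducing the continuous problem to the grid setting already solved in Theorem~\ref{thm:main2}, and then paying an additive $\mathcal{O}(\sqrt{N})$ overhead for the passage between the continuous configurations and a coarse grid. Since the disks are separated (centers at pairwise distance $\geq 2$), I would first impose an integer grid of spacing proportional to a unit-disk diameter, say a grid whose cells have side length $2$, scaled so that each disk comfortably fits in a cell with enough clearance for collision-free motion. The separation hypothesis $|s_is_j|,|t_it_j|\geq 2$ is exactly what lets me assign each start and each target position to a distinct grid cell (a rounding/snapping step), so that the continuous instance is sandwiched between two discrete grid configurations $C_s$ and $C_t$.

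First I would handle the snapping itself: move every robot from its exact continuous start $s_r$ to the center of its assigned grid cell, and symmetrically, at the end, from the center of its target cell to $t_r$. Each such local move is over a distance $\mathcal{O}(1)$ and can be carried out in $\mathcal{O}(1)$ time, but the disks must move simultaneously without collisions. Because all robots snap to a grid of bounded local density (at most a constant number of disks compete for nearby cells, thanks to separation), these snapping motions can be scheduled in $\mathcal{O}(1)$ continuous time; I would argue this via a constant-round local coordination, or by invoking a bounded-degree conflict graph whose chromatic number is $\mathcal{O}(1)$. The key point is that snapping contributes only an additive constant to the makespan and does not dominate.

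Second, I would invoke Theorem~\ref{thm:main2} on the discrete instance $(C_s,C_t)$ to obtain a grid schedule of makespan $\mathcal{O}(d')$, where $d'$ is the maximum discrete (Manhattan) distance between matched cells; by the triangle inequality and the $\mathcal{O}(1)$ snapping displacement, $d' = \mathcal{O}(d)$. Each discrete transformation step moves robots to an adjacent cell, i.e.\ over continuous distance $\mathcal{O}(1)$, and since robots may only translate to free neighboring cells one step at a time, each such parallel step is realizable by simultaneous unit-speed translations in $\mathcal{O}(1)$ continuous time. Summing over the $\mathcal{O}(d)$ steps yields a continuous makespan of $\mathcal{O}(d)$ for the transport phase. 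Combined with the $\mathcal{O}(1)$ snapping, this already gives $\mathcal{O}(d)$ — so the $\sqrt{N}$ term must come from a separate source, namely the weakly-separated regime where the grid spanning the configuration has side length $\Theta(\sqrt{N})$ and $d$ itself may be much smaller; in that case the additive $\mathcal{O}(\sqrt{N})$ accounts for the unavoidable cost of routing within the bounding box whose dimensions are $\mathcal{O}(\sqrt{N})$, matching the $\mathcal{O}(n_1+n_2)$ bound of Lemma~\ref{lem:naive} when $n_1,n_2 = \mathcal{O}(\sqrt{N})$.

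The main obstacle I anticipate is making the continuous realization of a single discrete transformation step rigorous under the no-swap constraint and the $\geq 2$ separation: a parallel grid step must be decomposed into simultaneous continuous translations that never bring two disk centers within distance $2$, including at intermediate times, and including during the snapping phase where motions are not axis-aligned. I would handle this by keeping every continuous motion inside a thin tube around the corresponding grid move and showing these tubes are pairwise non-interfering, so that unit-speed traversal of each tube in $\mathcal{O}(1)$ time is collision-free. A secondary subtlety is bounding $N$ against the bounding-box dimensions: I must verify that the relevant grid has $n_1 n_2 = \mathcal{O}(N)$ occupied-or-relevant cells so that the $\mathcal{O}(\sqrt{N})$ term is correct, which follows from the separation bound forcing at most $\mathcal{O}(1)$ disks per unit area. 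The clean statement $\mathcal{O}(d+\sqrt{N})$, and the corollary that $d\in\Omega(1)$ yields an $\mathcal{O}(\sqrt{N})$-approximation (since $\sqrt{N}$ and $d$ are both lower bounds in this regime, $d$ trivially and $\sqrt{N}$ via the area/diameter argument underlying Theorem~\ref{thm:lowerBoundContinuousMakespan}), then follows by combining the two phases.
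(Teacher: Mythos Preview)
Your proposal has a genuine gap at its first step. With only the minimal separation $|s_is_j|\geq 2$ (disks of radius~$1$ that may touch), you \emph{cannot} snap each robot to its own cell of a constant-mesh grid in $\mathcal{O}(1)$ time. A $2\times 2$ cell has diagonal $2\sqrt{2}$, so two centers at distance $2$ can easily lie in the same cell; in fact, for any constant mesh $c$, a cell can contain $\Theta(c^2)$ disk centers under this hypothesis, so there is no bounded-degree conflict graph to color. Your subsequent explanation that the $\sqrt{N}$ term arises from a bounding box of side $\Theta(\sqrt{N})$ is also incorrect: the robots may be spread over an arbitrarily large region, and nothing forces $n_1,n_2=\mathcal{O}(\sqrt{N})$. (And $\sqrt{N}$ is \emph{not} a lower bound on the makespan in general; only $d$ is, which is all that is needed for the $\mathcal{O}(\sqrt{N})$-approximation claim when $d\in\Omega(1)$.)

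What the paper actually does is pay $\mathcal{O}(\sqrt{N})$ for a \emph{spreading} phase that manufactures the extra separation your snapping argument silently assumes. Concretely, it sorts the robots and partitions them into $\lceil\sqrt{N}\rceil$ vertical slabs of $\lceil\sqrt{N}\rceil$ robots each, shifts the slabs apart by inserting buffer strips, and then within each slab pushes robots vertically to enforce pairwise separation $\geq 4\sqrt{2}$. Each robot moves $\mathcal{O}(\sqrt{N})$ in this phase, and only \emph{after} this spreading are the robots guaranteed to occupy distinct cells of a mesh-$2\sqrt{2}$ grid, at which point the discrete result applies to give $\mathcal{O}(d+\sqrt{N})$. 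The $\sqrt{N}$ is thus the cost of turning weak separation into strong separation, not a bounding-box artifact. Your outline would become correct if you replaced the ``$\mathcal{O}(1)$ snapping'' step by such a spreading procedure and traced the $\sqrt{N}$ to it.
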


The approach of Theorem~\ref{thm:continuousCaseUpperBound} applies an underlying grid with mesh size $2\sqrt{2}$.
Our algorithm
(1) moves the robots to vertices of the grid,
(2) applies our $\mathcal{O}(1)$-approximation for the discrete case, and
(3) moves the robots from the vertices of the grid to their targets. \new{For a detailed description of the Algorithm of Theorem~\ref{thm:continuousCaseUpperBound} see Section~\ref{subsec:upper_cont}.}

\section{Details for Labeled Grid Permutation}\label{sec:labeledgridperm_details}

\new{In this section, we give the details omitted in the high-level description of our results for the problem variant of labeled grid permutations that we considered in Section~\ref{sec:constappr}.}

\subsection{Details for the NP-Completeness of the Grid Case}
\statement{Theorem}{thm:optimalNPcomplete}
	{\em
	The minimum makespan parallel motion planning problem on a grid is strongly NP-hard.
}
\begin{proof}
	The proof is based on a reduction from the NP-hard problem \textsc{Monotone 3-Sat}, which asks to decide whether a Boolean 3-CNF formula $\varphi$ is satisfiable, where in each clause the literals are either all positive or all negative.
	All coordinates and the makespan are constructed to be polynomial in the input size, implying strong NP-hardness.
	Thus, there is no FPTAS for the problem, unless $\mathrm{P} = \mathrm{NP}$.

	\begin{sloppypar}For the remainder of the proof, let $\varphi$ have $n$ variables $\{x_0,\ldots,x_{n-1}\}$ and $m$ clauses $\{C_1,\ldots,C_m\}$.
	From $\varphi$, we construct an instance of the minimum makespan parallel motion planning problem that has optimum makespan $M$ if $\varphi$ is satisfiable and $M+1$ otherwise.
	During the description of the construction, we keep $M$ variable, fixing its value once the construction is complete.
	The structure of the resulting instance is sketched in Figure~\ref{fig:reduction-overview}.\end{sloppypar}
	\begin{figure}
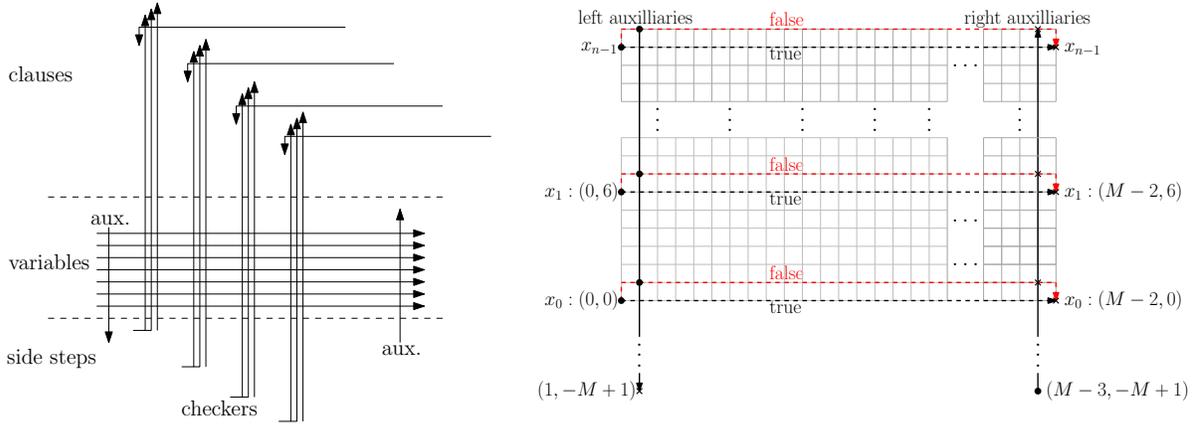

		\begin{center}
			\begin{subfigure}[c]{.45\linewidth}
				\begin{center}
					\resizebox{\linewidth}{!}{\includegraphics{./figures/reduction/overview.pdf}}
				\end{center}
			\end{subfigure}
			\begin{subfigure}[c]{.54\linewidth}
				\begin{center}
					\resizebox{\linewidth}{!}{\includegraphics{./figures/reduction/variable_gadget.pdf}}
				\end{center}
			\end{subfigure}
		\end{center}
		\caption{\emph{Left}: The structure of the resulting parallel motion planning problem instance. \emph{Right}: Start configuration (disks) and target configuration (crosses) of variable robots and their auxiliaries.
			The \emph{left auxiliary robot} for $x_j$ starts at position $(1,6j+1)$ and has to move down towards its target position $(1,6j+1-M)$ in each time step.
			The \emph{right auxiliary robot} for $x_j$ starts at $(M-3,-M+6j+1)$ and has to move up towards its target position $(M-3,6j+1)$.}
		\label{fig:reduction-overview}
		\vspace{-8pt}
	\end{figure}

	Each variable $x_j$ is represented by a \emph{variable robot}.
	Additionally, for each variable there are two \emph{auxiliary robots} that force the variable robot to take one of two different paths to its goal in any solution with makespan $M$, see Figure~\ref{fig:reduction-overview}.
	\revised{The \emph{left auxiliary robots} start at positions $(1,6j+1)$ and move down towards their target positions $(1,6j+1-M)$ in each time step.
	The \emph{right auxiliary robots} start at positions $(M-3,-M+6j+1)$ and have to move up towards their target positions $(M-3,6j+1)$.}
	The variable robot for variable $x_j$ starts at position $(0,6j)$ and has to travel $M-2$ units to the right towards its goal position $(M-2,6j)$.
	In the first time step, each variable robot can either wait or move upwards.
	Afterwards, it must move to the right in every time step until passing the right group of auxiliary robots at $x = M-3$.
	It cannot wait or move down before this point, as this would lead to a collision with the corresponding right auxiliary robot.
	Therefore in any schedule with makespan $M$, after the $k$th time step, each variable robot has $x$-coordinate $k-1$ for any $1 \leq k \leq M-3$.

	For each clause $C_i = \{x_{j_1},x_{j_2},x_{j_3}\}$ with $j_1 < j_2 < j_3$, we have three \emph{checker robots} $c_i^1, c_i^2, c_i^3$ checking whether their corresponding literal satisfies the clause.
	The checkers for clause $C_i$ start at positions $\alpha_i^1 := \big(6(ni+j_1),-6ni-f_i\big),\ldots,\alpha_i^3 := \big(6(ni+j_3),-6ni-f_i\big)$, where $f_i = 1$ iff $C_i$ is negative and $f_i = 0$ otherwise. 
	As depicted in Figure~\ref{fig:reduction-variable-clause}, a checker has to wait one time step for the corresponding variable iff the checked literal is not true.
	\begin{figure}[h]
		\begin{center}
			\resizebox{.75\linewidth}{!}{\includegraphics{./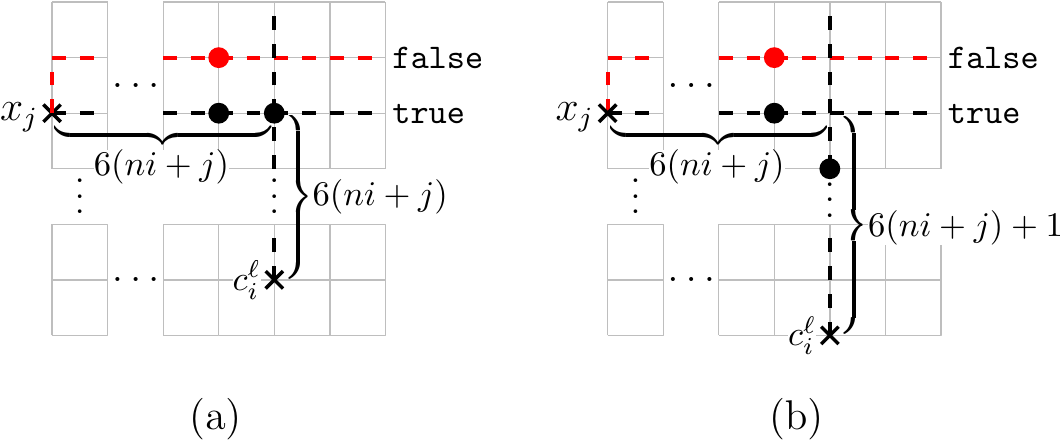}}
		\end{center}
		\vspace{-8pt}
		\caption{(a): A checker $c_i^\ell$ for variable $x_j$ in a positive clause $C_i$. (b): A checker $c_i^\ell$ for variable $x_j$ in a negative clause $C_i$. Checkers must wait iff the variable assignment does not match.}
		\label{fig:reduction-variable-clause}
	\end{figure}

	Checker $c_i^3$ has to move $M-1$ units up to its target position $t_i^3 := \alpha_i^3 + (0,M-1)$.
	Let $d_1 := 6(j_3-j_1)$ be the horizontal distance between the initial positions of $c_i^1$ and $c_i^3$, and let $d_2 := 6(j_3-j_2)$ analogously.
	Both $d_1$ and $d_2$ are always even and at least six; therefore $s_1 := \frac{d_1}{2}+2 < d_1$ and $s_2 := \frac{d_2}{2}+1 < d_2$ are integer.
	We force $c_i^1$ to take $s_1$ steps to the right towards its target position $t_i^1 := \alpha_i^1 + (s_1, M-1-s_1)$.
	Analogously, $c_i^2$ has target position $t_i^2 := \alpha_i^2 + (s_2, M-1-s_2)$.
	Each checker travels a total distance of $M-1$; thus they are allowed to wait for one time step, but have to move on an $xy$-monotone path towards their target position.

	Because moves to the right do not change the position of a checker relative to the variables, we may assume the checkers to move to the right from their initial position before moving up.
	In fact, we enforce this behavior using auxiliary robots as depicted in Figure~\ref{fig:reduction-sidesteps}.
	\begin{figure}[h]
		\begin{center}
			\begin{subfigure}[c]{.45\linewidth}
				\begin{center}
					\resizebox{\linewidth}{!}{\includegraphics{./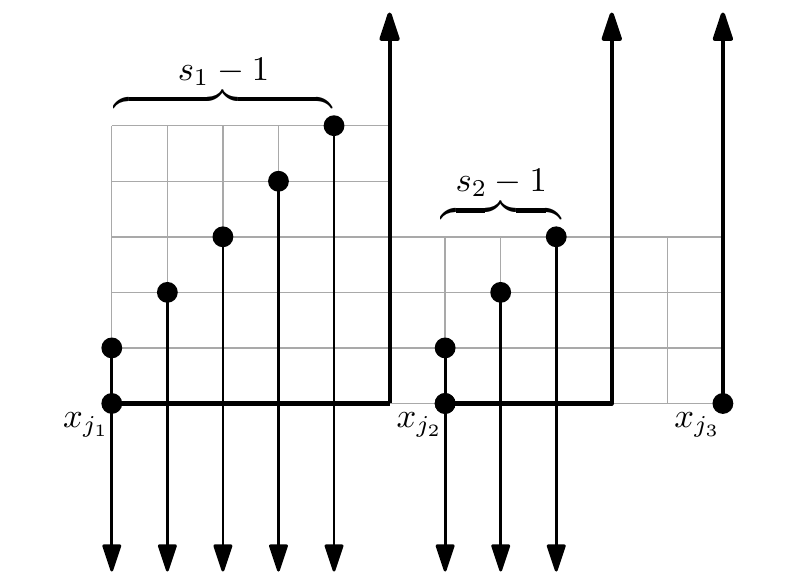}}
				\end{center}
			\end{subfigure}
			\hfill
			\begin{subfigure}[c]{.5\linewidth}
				\begin{center}
					\resizebox{\linewidth}{!}{\includegraphics{./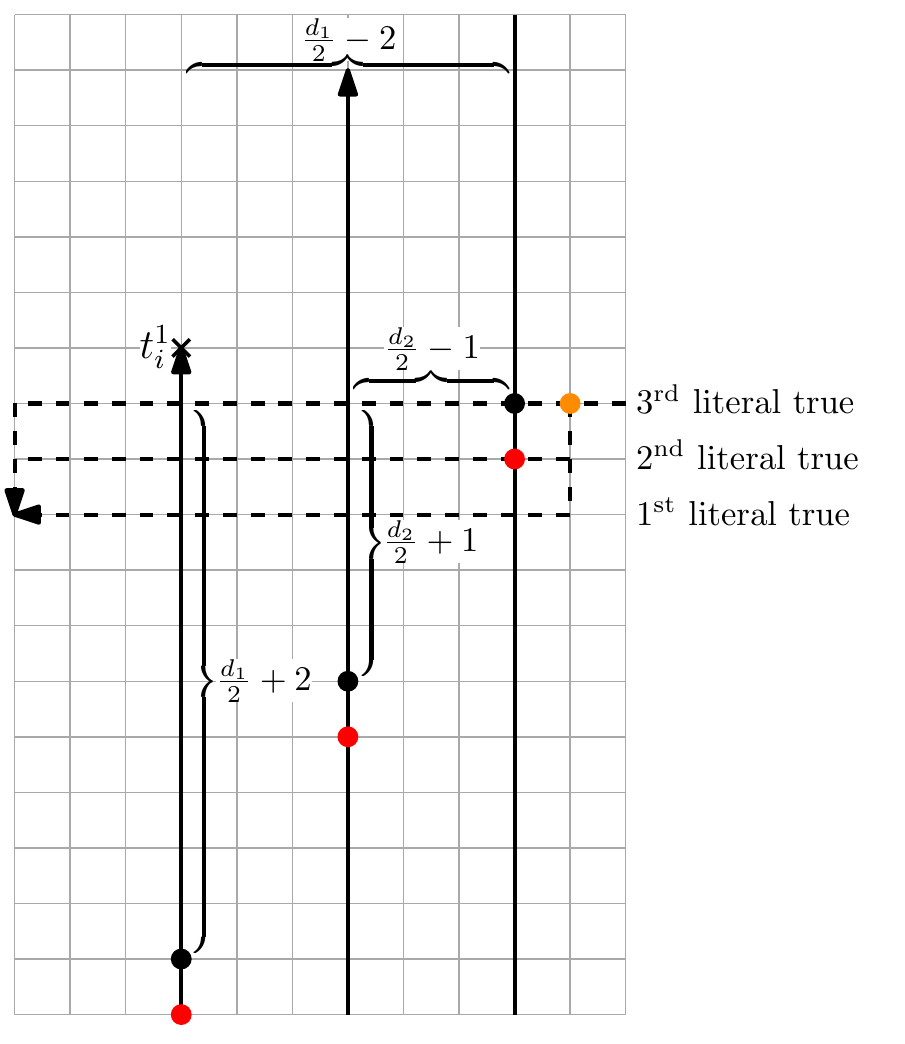}}
				\end{center}
			\end{subfigure}
		\end{center}
		\caption{\emph{Left}: A group of auxiliary robots is used to force the first two checkers of each clause to perform their side steps before moving up. Each auxiliary robot has to move downwards $M$ units.
		\emph{Right}: A clause robot (orange) meeting the corresponding checkers (black for satisfied checkers, red for non-satisfied checkers).}
		\label{fig:reduction-sidesteps}
	\end{figure}
	Moreover, each clause $C_i$ also has a \emph{clause robot} ensuring that there is at least one satisfied literal.
	The clause robots start to the right of the checkers and above the variables and have to move $M-2$ units to the left and two units downwards, and therefore have to move towards their target in every round without waiting for the checkers.
	The clause robot of each clause is placed such that checkers for other clauses cannot interfere with its path, see Figure~\ref{fig:reduction-overview}.
	To be more precise, as shown in Figure~\ref{fig:reduction-sidesteps}, the clause robot stops at position $t_i^1 - (3,3)$ and starts at position $t_i^1 + (-1,M-5)$.
	The vertical offset between the checkers introduced by the side steps that $c_i^1$ and $c_i^2$ perform is chosen such that the clause robot can pass through the checkers without waiting iff one of the checkers did not wait.
	This is the case iff at least one literal of the clause is satisfied.
	
	It remains to determine the critical makespan $M$.
	This critical makespan $M$ must be large enough to allow the checkers of the last clause $C_m$ to pass through the variable robots and their clause robot.
	Moreover, it must also allow the variable robots to cross paths with all checkers.
	The checkers of the last variable travel left of the line $x = 6n(m+1)-6$.
	Therefore, a makespan $M \geq 6n(m+1)$ suffices for the variable robots.
	Regarding the clauses, if the last clause is negative, the starting points of its checkers are located on the line $y = -6nm-1$.
	The topmost variable robot travels below the line $y = 6(n-1) + 1$.
	To keep our argument simple, we want to make sure that the clauses stay strictly above all variables.
	Due to the position of the clauses, this means that we have to ensure that the checker for the first literal of the last clause has target position above the line $y = 6(n-1)+5$.
	Therefore, not accounting for the side steps of the checkers, we have to set $M \geq (6nm+1) + (6(n-1)+5) = 6n(m+1)$.
	Clearly, the number of side steps performed by each checker is less than $6n$.
	Therefore, in total, a critical makespan of $M := 6n(m+2)$ is sufficient.

	In our construction, a makespan of $M$ is feasible iff for every clause robot there is one checker that does not wait, which implies that each clause has a satisfied literal under the assignment induced by the variable robots.
	Therefore, a makespan of $M$ is feasible iff $\varphi$ is satisfiable.

	Finally, observe that even though our reduction uses individually labeled robots, three colors are already sufficient.
	One can use color $1$ for variables, color $2$ for checkers and color $3$ for clauses and all auxiliaries.
\end{proof}

\subsection{Details on Computing a Schedule With a Makespan of $\mathcal{O}(n_1+n_2)$}\label{sec:scheduleNaive}

\new{Next, we give the details of an algorithm that computes a sequence of $\mathcal{O}(n_1+n_2)$ steps transforming an arbitrary start configuration $C_s$ into an arbitrary target configuration $C_t$ of an $n_1 \times n_2$ rectangle, see Lemma~\ref{lem:naive}. This algorithm is based on a sorting algorithm, called {\sc RotateSort} that uses swap operations, in which two robots exchanging their positions within one single step, as elementary operations. As our model does not allow swap operations, we first have to show how to simulate swap operations at the expense of increasing the makespan by a factor upper-bounded by some constant.}

\new{In order to simulate swap operations, we first observe that} LaValle and Yu~\cite{yl-omppgcaeh-16} proved that for a $3 \times 3$-square, each start configuration can be transformed into an arbitrary target configuration. 
This result is easily established for $2 \times 3$-rectangles; 
see Figure~\ref{fi:swap} for how to realize a transposition. 

	\begin{figure}[h]
		\begin{center}
			\resizebox{.41\textwidth}{!}{\includegraphics{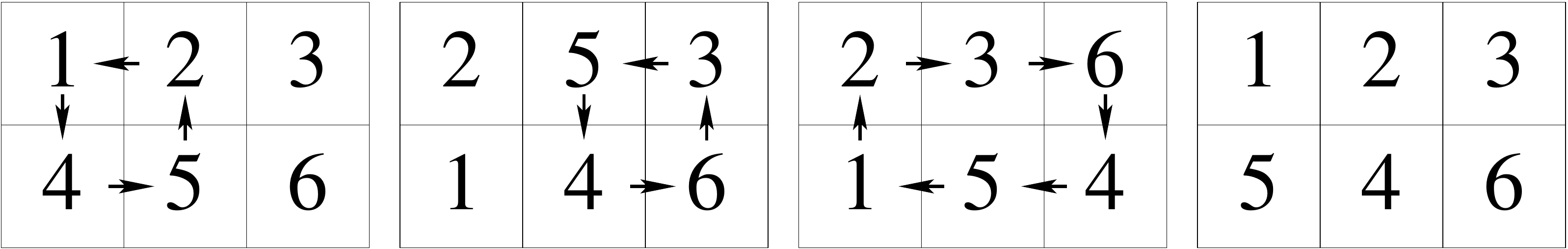}}
		\end{center}
\vspace*{-6pt}
		\caption{Using three moves for swapping two positions in a $2 \times 3$-arrangement.}
		\label{fi:swap}
\vspace*{-12pt}
	\end{figure}

\begin{lemma}\label{lem:3times3}
	For a pair of start and target configurations $C_s$ and $C_t$ of a $2 \times 3$-rectangle, we can compute a sequence of at most seven steps transforming $C_s$ into $C_t$.
\end{lemma}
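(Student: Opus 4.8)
The plan is to treat Lemma~\ref{lem:3times3} as a finite reconfiguration problem on the tiny $2 \times 3$ grid graph and to exhibit a concrete schedule whose length is bounded by the claimed small constant. Label the six cells as a $2 \times 3$ array; the admissible single-step transformations are exactly the moves in which every displaced robot slides to an orthogonally adjacent cell and no two \emph{occupied} cells exchange (direct $2$-swaps are forbidden by the model), whereas longer cyclic rotations of occupied robots and shifts of robots into an empty cell are permitted. The core device is the \emph{adjacent transposition gadget} of Figure~\ref{fi:swap}: whenever an empty cell is available, two neighbouring robots can be interchanged in exactly three transformation steps by routing one of them through the empty cell, thereby circumventing the no-swap constraint. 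This three-move swap is the primitive we ultimately invoke when simulating a single swap of \textsc{RotateSort} in the proof of Lemma~\ref{lem:naive}.

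First I would dispose of the fully occupied case. If all six cells carry robots, the only legal moves are rotations along cycles of the grid graph, namely the boundary $6$-cycle and the two $4$-cycles sharing the central edge; a compatible target $C_t$ is then reached by composing a bounded number of such rotations, each costing one step, and the precise count is absorbed into the overall bound below. In the main case at least one cell is empty, and here I would route robots to their targets in two phases: first bring the correct set of robots into each of the three columns, using parallel shifts toward the empty cell (a single parallel step advances an entire path of robots while the blank travels the opposite way), and then correct the order within the columns using the three-step gadget. Bounding each phase by a constant and accounting for the final placement is intended to yield a schedule of length at most seven, with the swap of Figure~\ref{fi:swap} being the tightest sub-case.

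The main obstacle is the bookkeeping: turning ``a bounded number of steps'' into the sharp value $7$ while verifying, in every sub-case, that each constructed step is a legal transformation (every robot moves to a neighbour and no two occupied robots swap simultaneously). Because the state space is finite and very small --- at most $\sum_{k=0}^{6} \binom{6}{k} k!$ configurations in total, and far fewer once symmetry is exploited --- the cleanest rigorous certificate for the diameter bound is an exhaustive breadth-first search over all configurations for each population size $k$, which confirms that any compatible pair is joined by a path of length at most seven. The explicit column-routing strategy together with the gadget of Figure~\ref{fi:swap} then serves as the human-readable witness matching this bound.
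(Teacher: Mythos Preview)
The paper does not actually give a proof of Lemma~\ref{lem:3times3}: it simply asserts that the analogue of the $3\times 3$ result of LaValle and Yu ``is easily established for $2\times 3$-rectangles'' and points to Figure~\ref{fi:swap}, which exhibits the three-move adjacent transposition. No argument is offered for the specific bound of seven. Your proposal is therefore strictly more detailed than what the paper supplies, and your central idea---that the state space is tiny and a breadth-first search over all configurations for each occupancy level certifies the diameter bound---is a perfectly legitimate and, as you say, the cleanest way to establish the exact constant.

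One remark on the informal part of your write-up: the two-phase column-routing sketch (``bring the right robots into each column, then fix each column with the three-step gadget'') as stated does not by itself pin down the value $7$; it only yields $\mathcal{O}(1)$, and indeed multiple invocations of the three-step swap already overshoot seven. So in your final version you should present the BFS as the proof and the three-move transposition of Figure~\ref{fi:swap} as the illustrative witness for the worst case, rather than pretending the column-routing argument independently delivers the sharp bound. That is also how the paper implicitly treats it: the transposition is the ``human-readable'' content, and the bound of seven is taken as a small-instance fact. Your handling of the fully occupied case via rotations of the $6$-cycle and the two $4$-cycles is correct and is something the paper does not spell out.
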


Lemma~\ref{lem:3times3} is the building block for 
permuting $n_1 \times n_2$ rectangles within makespan $\mathcal{O}(n_1+n_2)$.

\statement{Lemma}{lem:naive}
	{\em For a pair of start and target configuration $C_s$ and $C_t$ of an $n_1 \times n_2$-rectangle, we can compute in polynomial time a sequence of $\mathcal{O}(n_1+n_2)$ steps transforming $C_s$ into~$C_t$.
}
\new{
\begin{proof}
	The straightforward proof relies on covering the rectangle by a set of disjoint $2 \times 3$- and $3 \times 2$-rectangles, on which swap operations are performed in parallel, with each swap operation exchanging the position of two adjacent robots.
	We say that two swap operations are {\em disjoint} if all four positions \nnew{of the two swaps} are distinct.
	Although direct swap operations of adjacent robots are not possible, 
Lemma~\ref{lem:3times3} allows us to perform an arbitrary number of pairwise disjoint swap operations within each $2 \times 3$-rectangle
 with $\mathcal{O}(1)$ transformation steps.
As illustrated in Figure~\ref{fig:2times3covering}, we cover $P$ by twelve different layers of rectangles, 
such that each pair of adjacent unit squares from $P$ lies in one of the $2 \times 3$-rectangles or in one of the $3 \times 2$-rectangles. 

In particular, we distinguish between $2 \times 3$- and $3 \times 2$-rectangles
inside the $n_1 \times n_2$-rectangle. Furthermore, we distinguish between
different positions of $2 \times 3$-rectangles w.r.t.~line numbers modulo $2$
and w.r.t.~column numbers modulo $3$; see
Figures~\ref{fig:2times3covering}a)-f). Analogously, we distinguish between different
positions of $3 \times 2$-rectangles w.r.t.~line numbers modulo~$3$ and w.r.t.~column numbers modulo $2$; see the Figures~\ref{fig:2times3covering}g)-l).
This results in twelve different classes of rectangles. 
	
	\begin{figure}[h]
		\begin{center}
			\resizebox{0.6\textwidth}{!}{\includegraphics{}}
		\end{center}
		\caption{Covering of $P$ by pairwise disjoint $2 \times 3$- and $3 \times 2$-rectangles in twelve layers.}
		\label{fig:2times3covering}
	\end{figure}

	Given a set $S$ of pairwise disjoint swap operations, we subdivide $S$ into these twelve layers, 
        such that the two robots of each swap operation lie in the same small rectangle of the corresponding layer.
	Lemma~\ref{lem:3times3} implies that all swap operations of one layer can be done in parallel with $\mathcal{O}(1)$ transformation steps.
	Therefore, all swap operations in $S$ can be done in $\mathcal{O}(1)$ transformation steps.

	This allows us to apply a sorting algorithm for $n_1 \times n_2$-meshes, called {\sc Rotatesort}~\cite{marberg:sorting}, whose only elementary steps are swap operations of adjacent cells.
	We employ {\sc Rotatesort} by labeling the robots in the target configuration based on the snake-like ordering guaranteed by {\sc Rotatesort}.
	Applying {\sc Rotatesort} to the start configuration with the robots labeled in this way, we obtain the required target configuration.
	Marberg and Gafni~\cite{marberg:sorting} show that {\sc Rotatesort} needs $\mathcal{O}(n_1+n_2)$ phases, where each phase consists of pairwise disjoint swap operations.
	This leads to $\mathcal{O}(n_1+n_2)$ transformation steps in our model.
\end{proof}
}

Figure~\ref{fig:tiling}, shows an example of a pair of start and target configuration and the resulting flow.

\begin{figure}[h]
\begin{center}
	\resizebox{.4\textwidth}{!}{\includegraphics{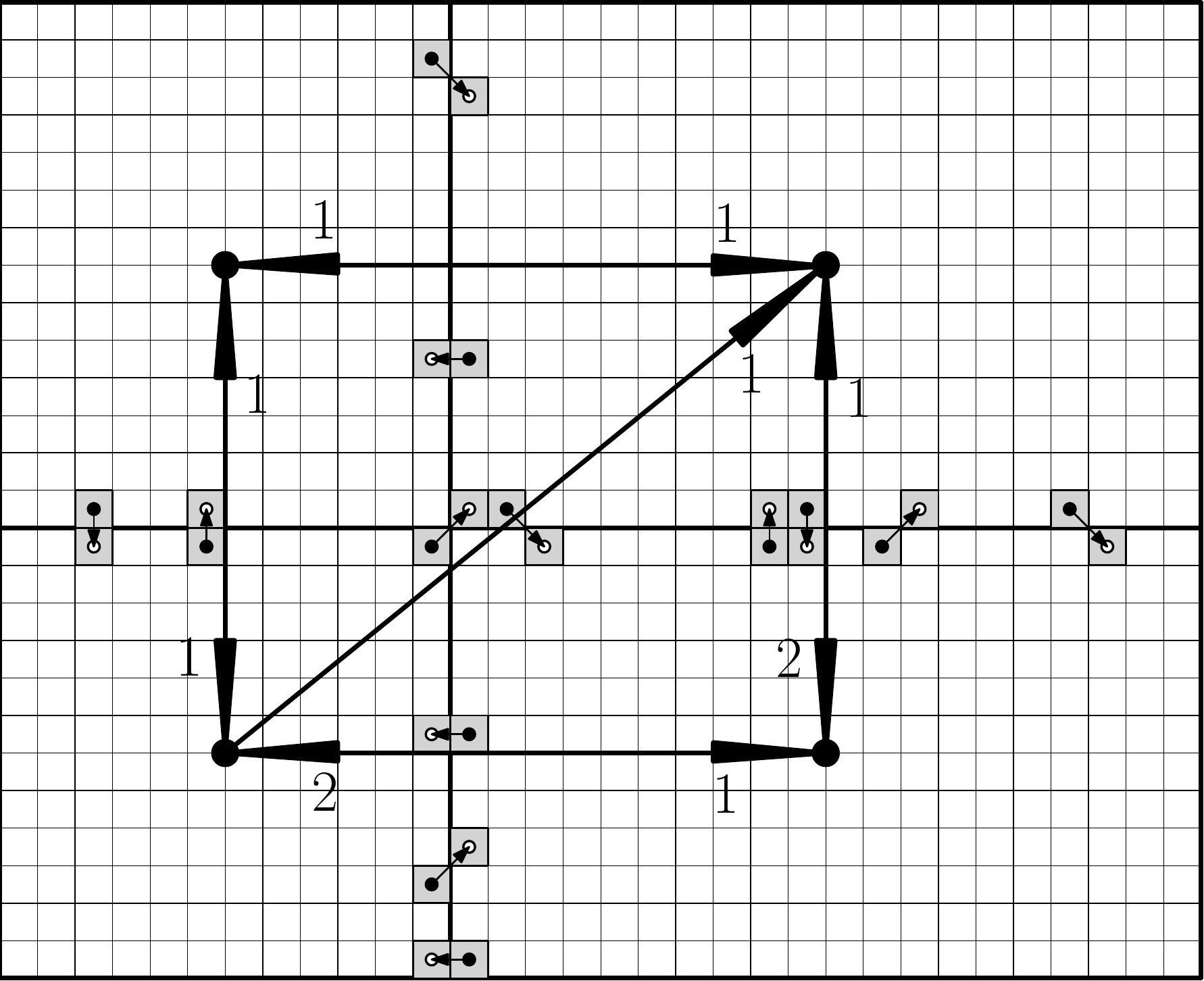}}
\end{center}
\vspace*{-6pt}
\caption{A tiling of an $26 \times 32$-rectangle by four tiles with $d=1$.
Robots not in their target tile are illustrated by small dots.
Their target positions are depicted as white disks.
The dual graph of the tiling is illustrated by large dots and directed edges between them.
The edges of the dual graph are annotated with the value of the flow on the corresponding edge.
In general, it is not guaranteed that robots that have to change the tile lie adjacent to the border between their start and target tile.
However, this is the case for $d=1$, as illustrated in this figure.}
\label{fig:tiling}
\vspace*{-12pt}
\end{figure}

\subsection{Details on the Approach of Theorem~\ref{thm:main2}}\label{sec:polymain2}

\new{
\statement{Theorem}{thm:main2}
	{\em There is an algorithm with running time $\mathcal{O}(N^5)$ that, given an arbitrary pair of start and target configurations of a rectangle $P$ with $N$ robots to be moved and maximum distance $d$ between any start and target position, computes a schedule of makespan $\mathcal{O}(d)$, i.e., an approximation algorithm with constant stretch.
}
\begin{proof} Our algorithm considers the two cases (1) $N \leq \lceil \frac{n_1}{4} \rceil, d$ and (2) $N > \lceil \frac{n_1}{4} \rceil$ or $N>d$ separately as follows:

	In case (1), we apply the following approach whose steps, described next, are all realizable because $N \leq \lceil \frac{n_1}{4} \rceil,d$. We assume w.l.o.g. that $n_1$ and $n_2$ are even. Otherwise, starting from the start configuration, we move all robots from the last line into the second-to-last line and all robots from the last column into the second-to-last column within $\mathcal{O}(d)$ transformation steps. The reversed argument implies that there is a sequence of $\mathcal{O}(d)$ transformation steps leading from an even-sized configuration to the target configuration. Thus, from now on, we restrict our considerations to even-sized rectangles.
	
	For each pair of start and target configurations~$C_s$ and $C_t$ of $P$, there are two configurations~$C_o$ and $C_e$, such that the \nnew{two following conditions are fulfilled: (1) The} coordinates of the robots in $C_o$ are odd and the coordinates of the robots in~$C_e$ are even and (2) $C_s$ and $C_e$ can be transformed into $C_o$ and $C_t$ within $\mathcal{O}(d)$ transformation steps. Thus, we still have to give an approach for how $C_o$ can be transformed into $C_e$ within $\mathcal{O}(d)$ transformation steps.
	
	First of all, we ensure in parallel for all robots that they achieve the position that is induced by the $x$-coordinate of their position in $C_e$ and the $y$-coordinate of their position in $C_o$. We call the corresponding configuration \emph{intermediate configuration $C_i$} with \emph{intermediate positions and coordinates}. In order to obtain the intermediate configuration, starting from $C_o$, we first push in parallel all robots, that have to move to the right, one position upwards, then move them simultaneously to the right until they achieve their intermediate $x$-coordinate, and, push a robot immediately one position downwards when it reaches its intermediate $x$-coordinate, see Figure~\ref{fig:achievingIntermediate}.
	
\begin{figure}[h]
\begin{center}
	\includegraphics[scale=0.4]{}
\end{center}
\vspace*{-6pt}
\caption{A stepwise illustration of the approach for case (1) of Theorem~\ref{thm:main2}.}
\label{fig:achievingIntermediate}
\end{figure}

\begin{figure}[h]
\begin{center}
	\includegraphics[scale=0.5]{}
\end{center}
\vspace*{-6pt}
\caption{An illustration how the approach for case (2) of Theorem~\ref{thm:main2} clusters the pairs of robots' start and target positions.}
\label{fig:polyTimeSecondApproach}
\end{figure}
	
	After that, we apply the analogous approach for robots that have to move to the left, resulting in the intermediate configuration.
	
	Secondly, starting from the intermediate configuration, we apply the above described two-stepped approach for horizontal movements in an analogous version in order to ensure that the $y$-coordinate of each robot $r$ is equal to the $y$-coordinate of $r$ in the configuration $C_e$, while guaranteeing that the $x$-coordinate of $r$ stays the same. This results in the configuration~$C_e$.
	
	The transformation steps leading from $C_s$ to $C_o$ and leading from $C_e$ to $C_t$ can be computed in $\mathcal{O}(N \cdot N)$ time by making use of the fact that the robots' positions are explicitly given via their coordinates. The same reasoning implies that the sequences of transformation steps leading from $C_o$ to the intermediate configuration and leading from the intermediate configuration to $C_e$ can be computed in $\mathcal{O}(N \cdot N)$ time.
	
	In case (2), we apply the approach of Theorem~\ref{thm:main} as a subroutine in the following approach: For each robot we consider the smallest rectangle that contains the robot's start and target positions. If the rectangle has a height or width of $1$, we extend the height or width to $2$. Now we iteratively replace two rectangles $R_1$ and $R_2$ intersecting each other by the smallest rectangle that contains $R_1$ and $R_2$. 
	
	This results in a set of rectangles that are pairwise intersection free, allowing us to apply the approach of Theorem~\ref{thm:main} to each resulting rectangle in parallel, while ensuring that each robot is involved in at most one application of the approach of Theorem~\ref{thm:main}.
	
	As the side lengths of the initial rectangles are upper-bounded by $d$, we conclude that the sum of the lengths of the finally computed rectangles is upper bounded by $N \cdot d$, which in turn is upper bounded by $N^2$ in that case. This implies a running time of $\mathcal{O}(d \cdot N^2) \leq \mathcal{O}(N^3)$.

	\end{proof}
}

\subsection{Details on Step 3: Computing a Flow Partition}\label{sec:computeFlowPartition}

\statement{Lemma}{lem:computePartition}
	{\em
	We can compute a $(d,\mathcal{O}(d))$-partition of $G_T$ in polynomial time.
}
\begin{proof}
In a slight abuse of notation, throughout this proof, the elements in sets of cycles are not necessarily unique.
A $(d,\mathcal{O}(d))$-partition can be constructed using the following steps.
\begin{itemize}
        \item We start by computing a $(1,h)$-partition $\mathbb{C}_{\bigcirc}$ of $G_T$ consisting of $h \leq n_1n_2$ cycles.
        This is possible because $G_T$ is a circulation.
        If a cycle $C$ intersects itself, we subdivide $C$ into smaller cycles that are intersection-free.
        Furthermore, $h$ is clearly upper bounded by the number of robots $n_1n_2$, because every robot can contribute only $1$ to the sum of all edges in $G_T$.
        As the cycles do not self-intersect, we can partition the cycles $\mathbb{C}_{\bigcirc}$ by their orientation, obtaining the set $\mathbb{C}_{\circlearrowright}$ of clockwise and the set $\mathbb{C}_{\circlearrowleft}$ of counterclockwise cycles.

        \item We use $\mathbb{C}_{\circlearrowright}$ and $\mathbb{C}_{\circlearrowleft}$ to compute a $(1,h')$-partition $\mathbb{C}_{\circlearrowright}^1 \cup \mathbb{C}_{\circlearrowright}^2 \cup \mathbb{C}_{\circlearrowleft}^1 \cup \mathbb{C}_{\circlearrowleft}^2$ with $h' \leq n_1n_2$, such that two cycles from the same subset $\mathbb{C}_{\circlearrowright}^1$,  $\mathbb{C}_{\circlearrowright}^2$, $\mathbb{C}_{\circlearrowleft}^1$, or $\mathbb{C}_{\circlearrowleft}^2$ share a common orientation.
        Furthermore, we guarantee that two cycles from the same subset are either edge-disjoint or one lies nested in the other.
        A partition such as this can be constructed by applying a recursive peeling algorithm to $\mathbb{C}_{\circlearrowright}$ and $\mathbb{C}_{\circlearrowleft}$ as depicted in Figure~\ref{fig:intersectionFree}, yielding a decomposition of the flow induced by $\mathbb{C}_{\circlearrowright}$ into two cycle sets $\mathbb{C}_{\circlearrowright}^1$ and $\mathbb{C}_{\circlearrowright}^2$, where $\mathbb{C}_{\circlearrowright}^1$ consists of clockwise cycles and $\mathbb{C}_{\circlearrowright}^2$ consists of counterclockwise cycles, and a similar partition of $\mathbb{C}_{\circlearrowleft}$. \new{In particular, we apply the following approach iteratively to $\mathbb{C}_{\circlearrowright}$: We consider the union $A$ of the area bounded by the cycles from $\mathbb{C}_{\circlearrowright}$. We remove a flow value of $1$ from all edges of the outer boundary component of $A$. In particular, we add the corresponding $1$-subflow $G_1$ to $\mathbb{C}_{\circlearrowright}^1$ and remove $G_1$ from $\mathbb{C}_{\circlearrowright}$. Analogously, we remove $1$-subflows from $\mathbb{C}_{\circlearrowright}$ that are induced by inner boundary components and add these $1$-subflows to $\mathbb{C}_{\circlearrowright}^2$.}

        \begin{figure}[h]
                \begin{center}
                        \resizebox{.75\linewidth}{!}{\includegraphics{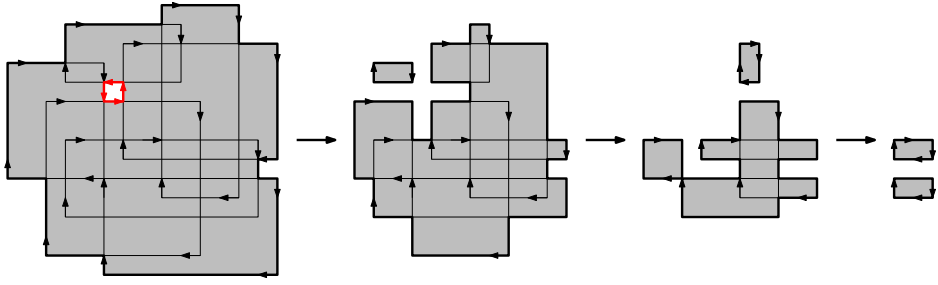}}
                \end{center}
                \caption{Recursive peeling of the area bounded by the cycles from $\mathbb{C}_{\circlearrowright}$, resulting in clockwise cycles (thick black cycles).
                        Cycles constituting the boundary of \emph{holes} are counterclockwise (thick red cycles).
                        Note that an edge $e$ vanishes when $f_T(e)$ cycles containing that edge are removed by the peeling algorithm described above.}
        \end{figure}

        \item Afterwards, we partition each set $\mathbb{C}_{\circlearrowright}^1$, $\mathbb{C}_{\circlearrowright}^2$, $\mathbb{C}_{\circlearrowleft}^1$, and $\mathbb{C}_{\circlearrowleft}^2$ into $\mathcal{O}(d)$ subsets, each inducing a $d$-subflow of $G_T$.
        This can be done as follows.
        Let $\mathbb{C} \in \{ \mathbb{C}_{\circlearrowright}^1, \mathbb{C}_{\circlearrowright}^2, \mathbb{C}_{\circlearrowleft}^1, \mathbb{C}_{\circlearrowleft}^2 \}$.
        Recall that every pair of cycles from $\mathbb{C}$ either consists of one cycle nested inside the other or of edge-disjoint cycles.
        The cycles induce a dual forest $D=(\mathbb{C},E_D)$, where a cycle $v$ has a child $w$ iff $w$ lies inside $v$ and there is no other cycle lying in $v$ that $w$ lies in.
        We label the cycles by their depth in $D$ modulo $576d$ and let $G_i$ be the flow induced by all cycles carrying label $i$, thus obtaining $\mathcal{O}(d)$ subflows $G_i$.
        \end{itemize}

 	\noindent Finally, we show that each subflow $G_i$ obtained in this way is a $d$-subflow of $G_T$.
        To this end, we observe the following.
        Let $e \in E_T$ be an arbitrarily chosen edge and let $v,w \in \mathbb{C}$ be two cycles sharing $e$.
        This implies that $v$ and $w$ lie nested inside of each other; w.l.o.g., assume that $w$ lies inside $v$.
        Thus, in $D$, $v$ lies on the path from $w$ to its root, and $e$ is contained in all cycles on the path between $v$ and $w$.
        On the other hand, due to Observation~\ref{obs:maxEdgeWeight}, all cycles containing $e$ lie on a path of length at most $576d^2$ in $D$.
        Therefore, $e$ has a weight of at most $\frac{576d^2}{576d} = d$ in each $G_i$, and $G_i$ is a $d$-subflow.%
 \end{proof}

\subsection{Details on a Subroutine of Step 4: Realizing a Single Subflow}\label{sec:RealizSingleSubflow}

\new{By Lemma~\ref{lem:transStep}, we give an approach that computes a schedule of constant length for a given $d$-subflow.}

\statement{Lemma}{lem:transStep}
	{\em Let $G'_T=(T,E_T',f_T')$ be a planar unidirectional $d$-subflow.
	There is a polynomial-time algorithm that computes a \textcolor{black}{schedule $C_1 \rightarrow \dots \rightarrow C_{k+1}$ realizing $G'_T$ for a constant $k \in \mathcal{O}(1)$}.
}
\begin{proof}
Our algorithm uses $k = \mathcal{O}(d)$ preprocessing steps $C_1 \rightarrow \dots \rightarrow C_k$, as depicted in Figure~\ref{fig:transStepA}(a)+(b), and one final realization step $C_k \rightarrow C_{k+1}$, shown in Figure~\ref{fig:transStepA}(c), moving the robots from their start tiles into their target tiles.
The preprocessing \new{replaces} diagonal edges \new{by pairs of orthogonal edges, see the red arrows in Figure~\ref{fig:transStepA}(a),} and places the moving robots next to the border of their target tiles. \new{Note that the replacements of the diagonal edges cannot be done as part of the preprocessing of Step 2. \nnew{This is} because the replaced diagonal edges may be part of circular flows that cannot be realized locally, as it is done for crossing or bidirectional edges in Step~2 of our algorithm.}

\new{For the final realization step we compute a pairwise disjoint matching
between incoming and outgoing robots, such that each pair is connected by a
tunnel inside the corresponding tile in which these tunnels do not intersect each
other, see Figure~\ref{fig:transStepA}(a). The final realization step is given
via the robots' motion induced by moving each robot into the interior of the
tile and by moving this one-step motion through the corresponding tunnel into
the direction of the corresponding outgoing robot.}

        \noindent \textbf{The preprocessing steps $C_1 \rightarrow \dots \rightarrow C_{k}$:}
        Let $v$ be an arbitrary tile.
        We place all robots corresponding to horizontal and vertical edges $(v,w)$ of $G_T'$ in a row adjacent to the side shared by $v$ and $w$.
        We can do this for all tiles using $\mathcal{O}(d)$ parallel steps by applying Lemma~\ref{lem:naive}.

        Next, we eliminate diagonal edges $(w,v) \in E_T'$ as follows.
        There are two tiles sharing a side with both $w$ and $v$; let $u$ be one of them.
        First we place the $f_T'((w,v))$ robots with start tile $w$ and target tile $v$ in a row next to the side between $w$ and $u$.
        Then, we move them to $u$ by exchanging them with $f_T'((w,v))$ robots with start and target tile $u$ that lie next to the side between $u$ and $v$, as shown in Figure~\ref{fig:transStepA}(a).
        In the resulting flow, the diagonal edge $(w,v)$ with weight $f_T'((w,v))$ is replaced by adding a flow of value $f_T'((w,v))$ on the edges $(w,u),(u,v)$.

        We process all tiles as described above in two parallel phases by applying Lemma~\ref{lem:naive} twice: first on all rows with even index and then on all rows with odd index, thus ensuring that parallel applications of Lemma~\ref{lem:naive} do not interfere with each other.

        \noindent \textbf{The realization step $C_{k} \rightarrow C_{k+1}$:}
        Let $t$ be an arbitrary tile.
        For the transformation step $C_{k} \rightarrow C_{k+1}$, we need a matching between incoming and outgoing robots of $t$, such that there is a set of non-intersecting paths in $t$ connecting each incoming robot with its corresponding outgoing robot.
        As illustrated in Figure~\ref{fig:transStepA}(c), these paths induce the required transformation $C_{k} \rightarrow C_{k+1}$.

        We compute this matching by selecting an incoming robot $r_{in}$ and matching it to a robot $r_{out}$, such that there is a path $p \subseteq \partial t$ between $r_{in}$ and $r_{out}$ that does not touch another incoming or outgoing robot.
        We remove the matched robots from consideration and repeat the matching procedure until no further unmatched robots exist.

        The non-intersecting paths between the positions of the matched robots are constructed as follows.
        For $i \geq 1$, the $i$th \emph{hull} of~$t$ is the union of all squares on the boundary of the rectangle remaining after the hulls $1,\dots,i-1$ are removed.
        The path between $r_{in}$ and $r_{out}$ consists of three pieces, as shown in Figure~\ref{fig:transStepA}(c).
        For the $i$th matched pair of robots, the initial and the last part of the path are straight line segments orthogonal to $\partial t$, from the position of $r_{in}$ to the $d+i$th hull and from the $d+i$th hull towards the position of $r_{out}$.
        The main part of the path lies on the $d+i$th hull, connecting the end of the initial part to the beginning of the last part.%
\end{proof}

\subsection{Details on Step 4: Realizing All Subflows}\label{sec:realizAllSubflows}

\statement{Lemma}{lem:sequenceTransStep}
{\em
Let $\mathcal{S} := \langle G_1=(V_1,E_1,f_1),\dots,G_{\ell}=(V_{\ell},E_{\ell},f_{\ell}) \rangle$ be a sequence of $\ell \leq d$ unidirectional planar $d$-subflows of $G_T$.
	There is a polynomial-time algorithm computing $\mathcal{O}(d)  + \ell$ transformation steps $C_1 \rightarrow \dots \rightarrow C_{k+\ell}$ realizing $\mathcal{S}$.
}
\begin{proof}
        Let $t$ be an arbitrary tile.
        Similar to the approach of Lemma~\ref{lem:transStep}, we first apply a preprocessing \textcolor{black}{step} guaranteeing that the robots to be moved into or out of $t$ are in the right position close to the boundary of $t$.
        Thereafter we move the robots into their target tiles, using $\ell$ applications of the algorithm from Lemma~\ref{lem:transStep} without the preprocessing phase. \new{In particular, we realize a sequence of $\ell$ $d$-subflows by applying $\ell$ times the single realization step of the algorithm from Lemma~\ref{lem:transStep}.}
        
        \new{In order to ensure that a sequence of $\ell$ realization steps from Lemma~\ref{lem:transStep} without intermediate preprocessing steps realizes a sequence of $\ell$ $d$-subflows, we apply the following $\mathcal{O}(d)$ preprocessing steps for all $\ell$ realization steps in advance: For each side of the tile $t$, we place all leaving or entering robots that belong to the same subflow in a common row and stack these rows in the order which is induced by the sequence of the subflows to be realized, see Figure~\ref{fig:sequenceSubflowsA}(a). Finally, pushing all stacked robots downwards into the direction of the boundary $\partial t$ of the tile ensures, that processing one realization step implies that all robots involved in the following realization step lie in a row adjacent to $\partial t$, see Figure~\ref{fig:sequenceSubflowsA}.}

\begin{figure}[ht]
	\begin{center}
		\begin{subfigure}[b]{.55\linewidth}
			\resizebox{\linewidth}{!}{\includegraphics{}}
			\vspace{5pt}
			\caption{Stacking the rows of robots corresponding to the flow values on the edges of the subflows to be realized.}
		\end{subfigure}
		\hfill
		\begin{subfigure}[b]{.43\linewidth}
			\resizebox{\linewidth}{!}{\includegraphics{}}
			\caption{Our preprocessing step applied to the example of Figure~\ref{fig:transStepA} and the path matching (red) of the first step.}
		\end{subfigure}
	\end{center}
	\vspace*{-12pt}
	\caption{Realizing a sequence of subflows by stacking the rows of robots to be moved onto each other in the order the subflows are realized in.}
	\label{fig:sequenceSubflowsA}
\end{figure}

In the following we describe how we place the robots in their start tiles as a preprocessing step.
First, we use the same preprocessing step as in Lemma~\ref{lem:transStep} to eliminate diagonal edges.
For a simplified illustration, we describe the remainder of the preprocessing in two steps that can be realized by just one application of Lemma~\ref{lem:naive}.
After elimination of diagonal edges, we proceed by stacking the rows of robots moving out of $t$ in the order in which the subflows are to be processed, see Figure~\ref{fig:sequenceSubflowsA}(a).
Then we push the robots towards the boundary $\partial t$ of their start tile until they meet either $\partial t$ or another moving robot.
See Figure~\ref{fig:sequenceSubflowsA}(a),~image 2 for an example.

This preprocessing ensures that, after each application of the algorithm of Lemma~\ref{lem:transStep}, all robots moving out of $t$ in the next transformation step lie in a row adjacent to $\partial t$.
Therefore this preprocessing can be used to replace the preprocessing done in Lemma~\ref{lem:transStep}.
For an example, see Figure~\ref{fig:sequenceSubflowsA}(a), images 3--9.

As $\ell \leq d$, the stacked rows have a height of at most $d$.
Thus, they are contained in hulls $1$ to $d$.
Therefore, and because the flows are unidirectional and diagonals are eliminated, the structure of the stacks is not damaged by the applications of Lemma~\ref{lem:transStep}, allowing us to realize $\ell \leq d$ subflows in $\mathcal{O}(d)$ transformation steps instead of one.
\end{proof}

\subsection{Runtime Analysis \new{of the Algorithm of Theorem~\ref{thm:main}}}\label{sec:togther}

\statement{Theorem}{thm:main}
	{\em
		There is an algorithm with running time $\mathcal{O}(dn_1n_2)$ that, given an arbitrary pair of start and target configurations of an $n_1 \times n_2$-rectangle with
maximum distance $d$ between any start and target position, computes a schedule of makespan $\mathcal{O}(d)$, i.e., an approximation algorithm with constant stretch.
	}

\begin{proof}[Proof of Theorem~\ref{thm:main}]
	The steps of our algorithm have the following time complexity:
	
	\noindent\textbf{Initialization step 1:} Computing $d$, $T$ and $G_T$ is possible in $\mathcal{O}(n_1n_2)$ time.
	 
	\noindent\textbf{Step 2 \& 5:} The application of Lemma~\ref{lem:naive} requires $\mathcal{O}(d^3)$ time for each tile, so these steps can be done in $\mathcal{O}(dn_1n_2)$ time.
	 
	\noindent\textbf{Step 3:} All subroutines of Step 3 can be done in an overall time of $\mathcal{O}(n_1n_2)$. 
		In particular, the $(1,h)$-partition $\mathbb{C}_{\bigcirc}$ of $G_T$ can be computed in $\sum_{e \in E_T}f_T(e) \in \mathcal{O}(n_1n_2)$ time by a simple greedy algorithm.
		The number of edges in all cycles from $\mathbb{C}_{\bigcirc}$ combined is at most $n_1n_2$, which is the number of robots in $P$.
		Thus, resolving self-intersections of cycles in $\mathbb{C}_{\bigcirc}$ can be done in $\mathcal{O}(n_1n_2)$ time.
		As $|\mathbb{C}_{\bigcirc}| \in \mathcal{O}(n_1n_2)$, the partition of $\mathbb{C}_{\bigcirc}$ into $\mathbb{C}_{\circlearrowright}^1$, $\mathbb{C}_{\circlearrowright}^2$, $\mathbb{C}_{\circlearrowleft}^1$, and $\mathbb{C}_{\circlearrowleft}^2$ takes time $\mathcal{O}(n_1n_2)$.
		Furthermore, the partitioning of $\mathbb{C}_{\circlearrowright}^1$, $\mathbb{C}_{\circlearrowright}^2$, $\mathbb{C}_{\circlearrowleft}^1$, and $\mathbb{C}_{\circlearrowleft}^2$ into $\mathcal{O}(d)$ $d$-subflows can be done in time~$\mathcal{O}(n_1n_2)$.
		
	\noindent\textbf{Step 4:} The parallel applications of Lemma~\ref{lem:naive} to disjoint rectangles can be computed in $\mathcal{O}(dn_1n_2)$.
		Furthermore, the construction of all connecting paths between incoming and outgoing robots for all tiles needs $\mathcal{O}(dn_1n_2)$ time per application of the algorithm of Lemma~\ref{lem:sequenceTransStep}.
		By applying Lemma~\ref{lem:sequenceTransStep} constantly many times, Step 4 needs $\mathcal{O}(dn_1n_2)$ time.
\end{proof}

\section{Details for Variants on Labeling}\label{sec:unlabaled_details}

In this section we show how to extend our approach of Section~\ref{sec:constappr}
to 
the \emph{unlabeled} and, more generally, the \emph{colored} variant of the
parallel robot motion-planning problem.

\statement{Theorem}{thm:unlabeled}
	{\em
There is an algorithm with running time $\mathcal{O}(k(N)^{1.5} \log (N) + N^5)$ for computing, given start and target images $I_s,I_t$
with maximum distance~$d$ between start and target positions, 
an $\mathcal{O}(1)$-approximation of the optimal makespan $M$ and a corresponding \textcolor{black}{schedule}.
}

\begin{proof}
	We transform the input into an instance of the labeled variant, such that an $\mathcal{O}(1)$-approximation for the labeled instance provides an $\mathcal{O}(1)$-approximation for the colored instance. 
	For each color $i$, we consider the two point sets $A^i,B^i \subset \mathbb{R}^2$, where $A^i$ contains the center points $a^i_v$ of all unit squares $v \in I^i_s$ and $B^i$ contains the center points $b^i_v$ of all $v \in I^i_t$.

	A {\em bottleneck matching} between $A^i$ and $B^i$ is a perfect matching between $A^i$ and $B^i$ that minimizes the maximal distance.
	The cost of an optimal bottleneck matching between $A^i$ and $B^i$ is in $\mathcal{O}(M)$, because a transformation sequence induces a bottleneck matching on all color classes.
	Efrat et al.~\cite{efrat:geometry} show that the geometric bottleneck matching problem can be solved in $\mathcal{O}(|A+B|^{1.5} \log |A+B|)$ time.

	A set of $k$ bottleneck matchings between the sets $A^i$ and $B^i$ induces labeled start and target configurations $C_s,C_t$.
	Applying the algorithm from Section~\ref{sec:constappr} to these yields a sequence of transformation steps of length $\mathcal{O}(M)$.
\end{proof}

\section{Details for Continuous Motion}\label{sec:continuous_details}

In this section, we consider the continuous geometric case in which the robots 
are identical geometric objects that have to move into a target configuration in the plane without overlapping at any point in time.
We want to minimize the makespan under these conditions, where the velocity of each robot is bounded by 1.

\subsection{A Lower Bound for Unbounded Environments}\label{sec:lowerBoundCont}
In this section we give a worst-case lower bound of $\Omega(N^{1/4}d)$ for the continuous makespan \textcolor{black}{where $N$ is the number of robots}.
To be more precise, we construct a pair of start and target configurations of $N$ robots as illustrated in Figure~\ref{fig:lowerBoundStartAndTarget}(a).
In this instance, we have $d=2$.
In Theorem~\ref{thm:lowerBoundContinuousMakespan}, we show that the optimal continuous makespan of this instance is in $\Omega(N^{1/4})$, yielding the worst-case lower bound stated above.

\begin{figure}[ht]
  \begin{center}
	\begin{subfigure}[b]{.28\linewidth}
		\resizebox{\linewidth}{!}{\includegraphics{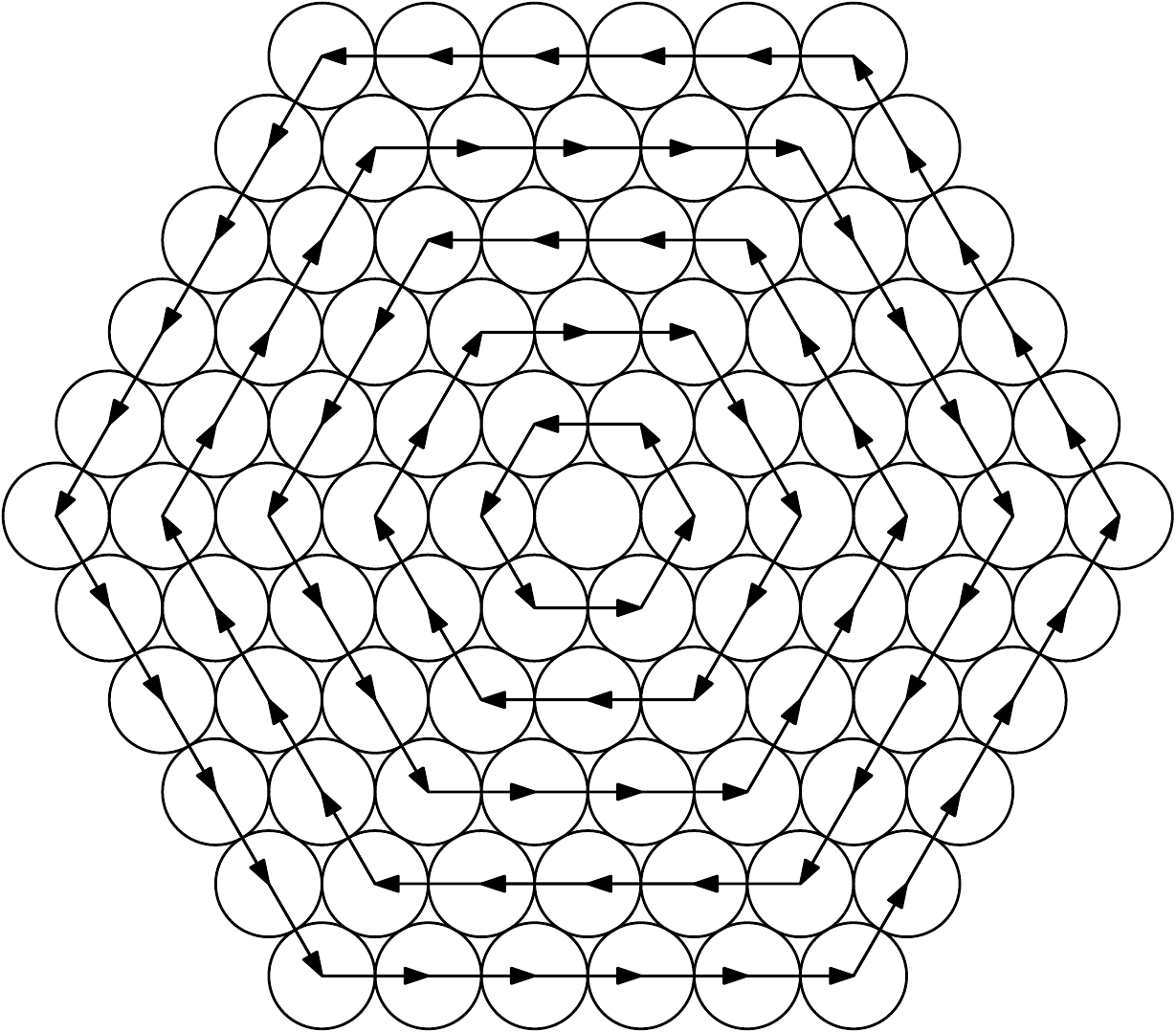}}
		\caption{Start and target positions of the robots.}
	\end{subfigure}
	\hspace{.015\linewidth}
	\begin{subfigure}[b]{.28\linewidth}
		\resizebox{\linewidth}{!}{\includegraphics{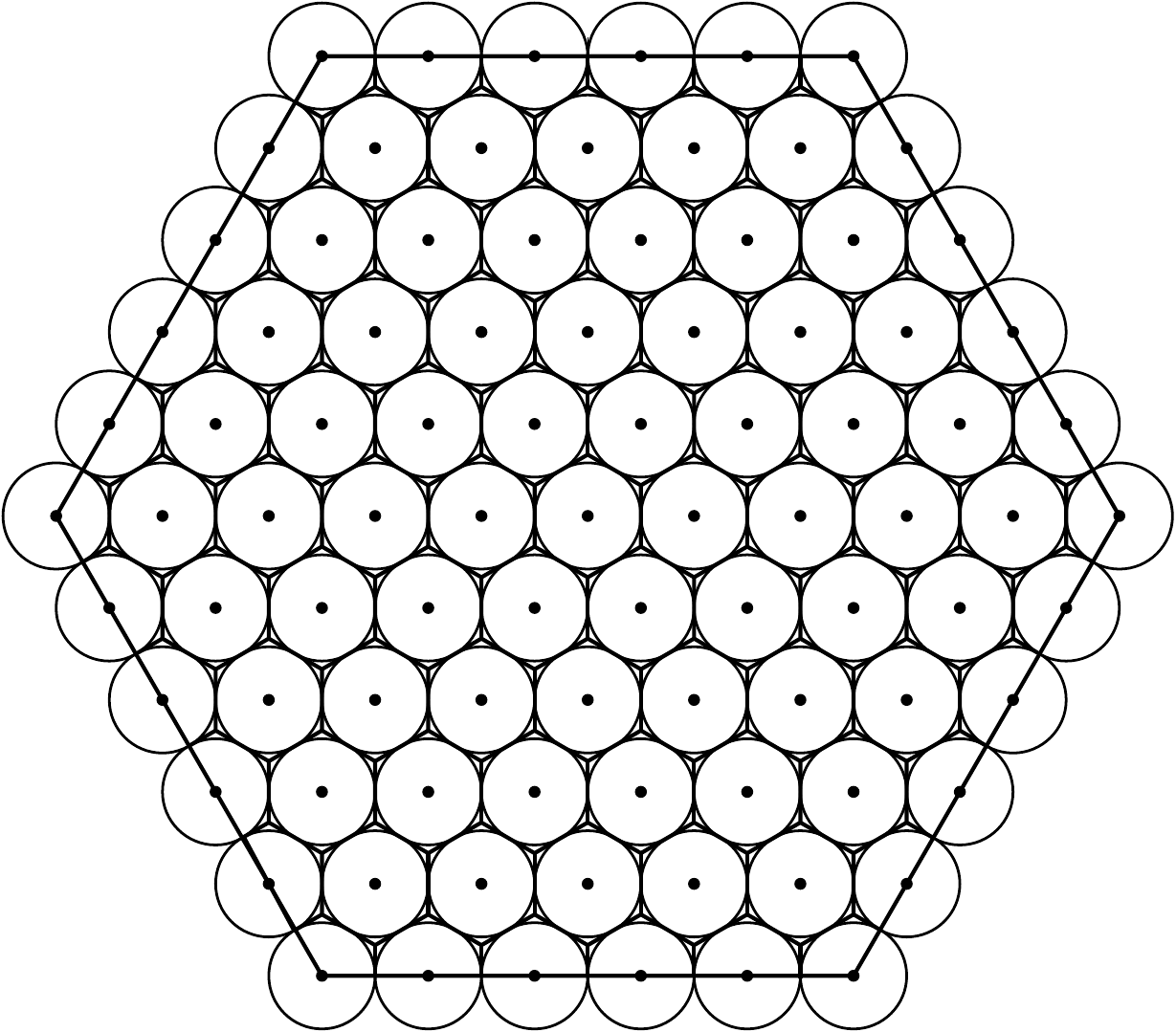}}
		\caption{Voronoi diagram in the start and target configuration.}
	\end{subfigure}
	\hspace{.015\linewidth}
	\begin{subfigure}[b]{.37\linewidth}
		\resizebox{\linewidth}{!}{\includegraphics{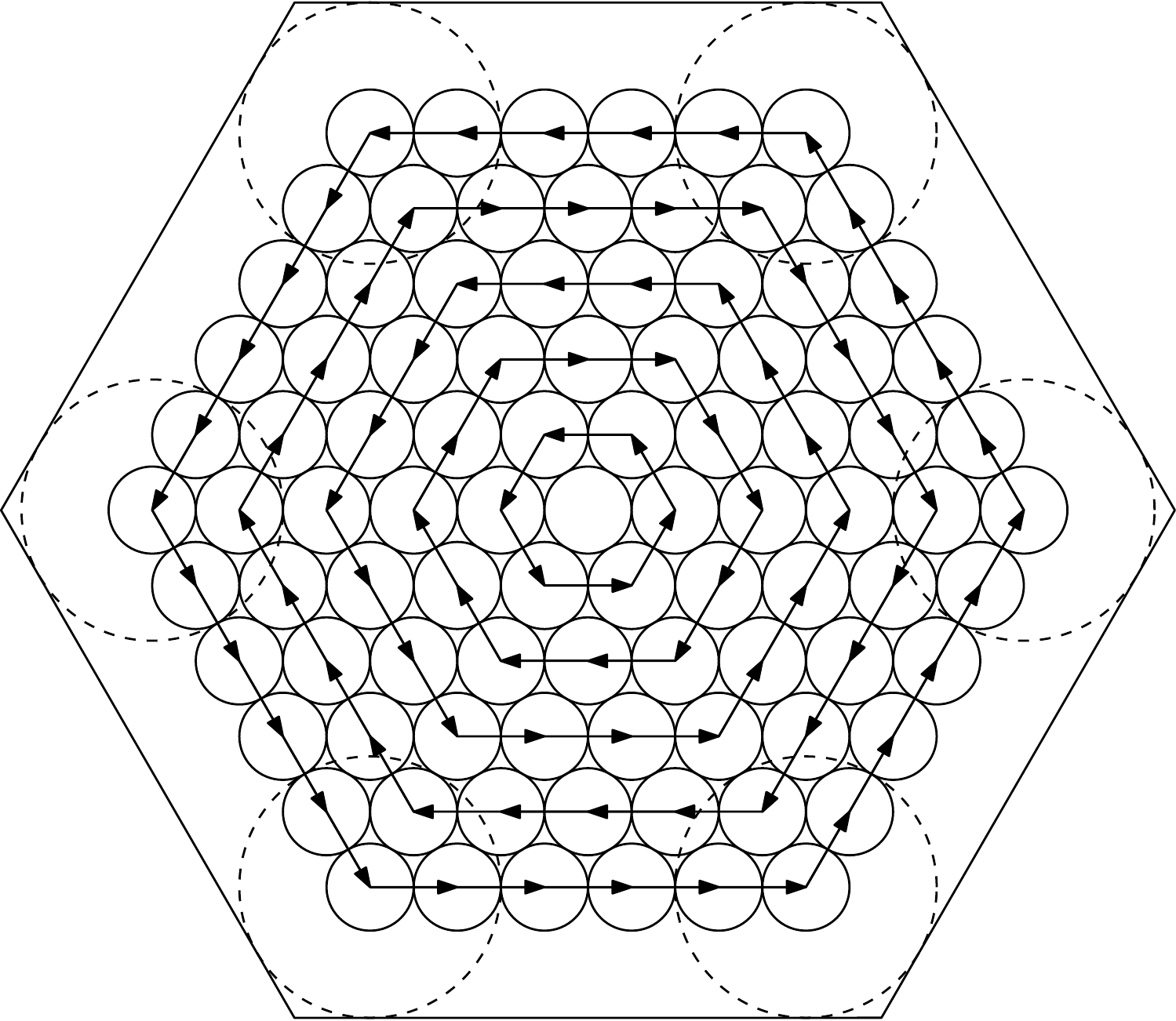}}	
		\caption{Bounding polygon for the moving robots.}
	\end{subfigure}
  \end{center}
  \vspace*{-6pt}
  \caption{The start and target configurations of our lower-bound construction \textcolor{black}{where an arrow points from a start position to the corresponding target position.}}
  \label{fig:lowerBoundStartAndTarget}
\end{figure}

More formally, let $\{ m_1,\dots,m_N \}$ be an arbitrary trajectory set with makespan $M$, realizing the start and target configurations as illustrated in Figure~\ref{fig:lowerBoundStartAndTarget}(a).
By applying a simple continuity argument, we show that there must be a point in time $t \in [0,M]$ such that the area of \textcolor{black}{the convex hull} $\CHull(m_1(t),\dots,m_N(t))$ \textcolor{black}{of $m_1(t),\dots,m_N(t)$} is lower bounded by $cN+\Omega(N^{3/4})$, where $cN$ is the area of the of $\CHull(m_1(0),\dots,m_N(0))$.
Assume $M \in o \left( N^{1/4} \right)$ and consider the area of $\CHull(m_1(t'),\dots,m_N(t'))$ at some point $t' \in [0,M]$.
This area is at most $cN + \mathcal{O}(\sqrt{N}) \cdot o\left(N^{1/4} \right)$, because asymptotically, the area gained during the movement is bounded by the product of makespan and circumference.
This contradicts the lower bound stated above.

	A key ingredient for the construction of the time point $t \in [0,M]$ is the fact that the distance between the centers of two robots change continuously.
	In fact, we know that the Euclidean distance between two centers is \emph{$2$-Lipschitz}, because the velocity of the robots is bounded by 1.
	
	\begin{definition}\label{def:lipschitz}
		A function $f : \mathbb{R} \rightarrow \mathbb{R}$ is $\lambda$-Lipschitz (continuous) if $|f(x)-f(y)| \leq \lambda |x-y|$ holds for all $x,y \in \mathbb{R}$.
	\end{definition}
	\begin{observation}\label{obs:distLipschitz}
		For all $i,j \in R$, the distance between the centers $m_i(\cdot)$ and $m_j(\cdot)$ of robots $i$ and $j$ is $2$-Lipschitz.
	\end{observation}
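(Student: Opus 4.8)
The plan is to reduce the claim to two elementary ingredients: the fact that each individual trajectory is $1$-Lipschitz, and the reverse triangle inequality for the Euclidean norm.

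First I would establish that each trajectory $m_r$ is $1$-Lipschitz, i.e.\ $|m_r(s)m_r(t)| \leq |s-t|$ for all times $s,t$. This follows directly from the speed bound imposed in Section~\ref{sec:pre}: since both the left and the right derivative of $m_r$ have Euclidean length at most $1$ at every point, the arc length of $m_r$ restricted to the interval $[\min\{s,t\},\max\{s,t\}]$ is at most $|s-t|$, and the straight-line distance $|m_r(s)m_r(t)|$ between the endpoints never exceeds this arc length. Formally, one may integrate the (one-sided) speed, or invoke the standard real-analysis fact that a continuous curve whose one-sided derivatives are bounded in norm by $L$ is $L$-Lipschitz.

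Second, I would write the center-distance function as $g(t) := |m_i(t)m_j(t)| = \|m_i(t) - m_j(t)\|_2$ and bound $|g(s)-g(t)|$. Setting $a := m_i(s) - m_j(s)$ and $b := m_i(t) - m_j(t)$, the reverse triangle inequality for $\|\cdot\|_2$ gives $|g(s)-g(t)| = \bigl|\,\|a\|_2 - \|b\|_2\,\bigr| \leq \|a-b\|_2$. Regrouping $a-b = (m_i(s)-m_i(t)) - (m_j(s)-m_j(t))$ and applying the ordinary triangle inequality yields $\|a-b\|_2 \leq |m_i(s)m_i(t)| + |m_j(s)m_j(t)|$. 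By the first step each of these two terms is at most $|s-t|$, so $|g(s)-g(t)| \leq 2|s-t|$, which is exactly $2$-Lipschitz continuity in the sense of Definition~\ref{def:lipschitz}.

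There is essentially no hard part here; the only point requiring a moment of care is the $1$-Lipschitz claim, since the trajectories are only assumed left- and right-differentiable rather than everywhere differentiable. I would dispatch this via the arc-length argument above, which needs only continuity of $m_r$ together with the one-sided speed bound, both guaranteed by the model. The factor $2$ is then immediate and tight, arising from the two robots each being able to contribute a displacement of up to $|s-t|$.
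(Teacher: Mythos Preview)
Your proof is correct and follows exactly the reasoning the paper has in mind: the paper states this as an observation with only the one-line justification ``because the velocity of the robots is bounded by $1$'', and you have simply written out the standard reverse-triangle-inequality argument that makes this precise.
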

	
	\begin{sloppypar}
		Let $V$ be the Voronoi diagram of the centers $\{ m_1(M),\dots, m_N(M)\}$ restricted to $\CHull(m_1(M),\dots, m_N(M))$ in the target configuration, as illustrated in Figure~\ref{fig:lowerBoundStartAndTarget}(b).
		For $m \in \{ m_1,\dots,m_N \}$ and $t \in [0,M]$, we denote the Voronoi region of $m(t)$ w.r.t.\, $\{ m_1(t),\dots,m_N(t) \}$ by $V(m(t))$.
		Let $p$ be the trajectory of an arbitrary robot not on the convex hull in the target configuration.
		Furthermore, let $p_1,\dots,p_6 \in \{ m_1,\dots,m_N \}$ be the trajectories of the six robots $1,\dots,6$ adjacent to $p$ in the target configuration.
	\end{sloppypar}
	
	In the following, we show that there is a time interval $I = [t', t' + \frac{1}{20}]$ such that the area of $V(p(t''))$ is lower bounded by $3.479$ for all $t'' \in I$, see Lemma~\ref{lem:largeTriangleInterval}.
	This is larger than the area of $V(p(0))$ and $V(p(M))$ by a constant factor.
	Based on that, we construct the time point $t \in [0,M]$ such that the area of $\CHull(m_1(t),\dots,m_N(t))$ is lower bounded by $cN + \Omega (N^{3/4})$, see Lemma~\ref{lem:lowerBoundArea}.
	To this end, we need to relate the area of a Voronoi region to the length of the corresponding Delaunay edges.

\begin{lemma}
\label{lem:largeArea}
Let $t' \in [0,M]$ and $p(t') \in \{ m_1(t'),\dots,m_N(t') \}$.
	If the maximal distance between $p(t')$ and its Voronoi neighbors is $\lambda \in \big[2, 4 \cos(50^{\circ})\big)$, the area of $V(p(t'))$ is at least
		$$\frac{\lambda}{4} \left( 3 \sin \left( \arccos \left( \frac{\lambda}{4} \right) \right)-  \frac{\lambda}{4} \tan \left( 90^{\circ} - \arccos \left( \frac{\lambda}{4} \right) \right)\right)+\frac{4}{\sqrt{3}},$$ 
	which is at least $3.479$ for $\lambda \in [2.1,2.2]$.
	Furthermore, the area of $V(p(M))$ in the target configuration is $\frac{6}{\sqrt{3}} = 2 \sqrt{3} \leq 3.465$.
\end{lemma}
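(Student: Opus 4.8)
I would normalize by placing $p := p(t')$ at the origin and rotating so that a Voronoi neighbor realizing the maximal distance $\lambda$ lies on the positive $x$-axis, say $F = (\lambda,0)$. Two facts drive everything. First, since the robots are unit disks whose centers stay at mutual distance at least $2$, every Voronoi neighbor of $p$ sits at distance $\geq 2$, so each Voronoi edge is a perpendicular bisector lying at distance $\geq 1$ from $p$; in particular the edge induced by $F$ lies on the line $x = \lambda/2$. Second, the cell area is a function of where the \emph{other} neighbors sit, and these are constrained both by $|p\,q|\geq 2$ and by the mutual constraints $|q\,q'|\geq 2$. The plan is to lower-bound $\mathrm{area}(V(p))$ by identifying the feasible placement of the remaining neighbors that \emph{minimizes} the cell, and then to evaluate the area of that extremal cell in closed form.

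\textbf{The angle $\theta=\arccos(\lambda/4)$ and the extremal configuration.} The key geometric observation is a keep-out cone around the far direction. A neighbor placed at distance exactly $2$ from $p$ at angle $\alpha$ has squared distance to $F$ equal to $\lambda^2 - 4\lambda\cos\alpha + 4$, so it is compatible with $F$ (distance $\geq 2$) exactly when
\[ \lambda^2 - 4\lambda\cos\alpha \geq 0 \iff \cos\alpha \leq \tfrac{\lambda}{4} \iff \alpha \geq \theta := \arccos\!\big(\tfrac{\lambda}{4}\big). \]
Hence no distance-$2$ neighbor can enter the cone $|\alpha| < \theta$, and the strongest possible ``cut'' of the region in front of $p$ is obtained by pressing the two adjacent cutters $N_{\pm}$ to the corner of this constraint, i.e.\ to angle $\pm\theta$ at distance exactly $2$ (tight against both $|pN_{\pm}|=2$ and $|FN_{\pm}|=2$). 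Pushing every remaining neighbor inward to distance $2$ can only shrink the cell, and the mutual $60^{\circ}$ spacing forced by $|q\,q'|\geq 2$ then places the back neighbors at $\pm(\theta+60^{\circ})$ and at $180^{\circ}$. I would argue by a standard exchange/monotonicity step that this symmetric six-neighbor configuration is the area minimizer: moving any neighbor from distance $>2$ to distance $2$ does not increase the area, and within the front cone the binding cut is exactly $N_{\pm}$ at angle $\theta$. The hypothesis $\lambda \in [2, 4\cos(50^{\circ}))$, i.e.\ $\theta > 50^{\circ}$, is precisely what keeps the combinatorial type of this extremal hexagonal cell fixed (the adjacent cutters, not some third neighbor, remain the front-bounding edges, and all mutual distances stay $\geq 2$).

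\textbf{Evaluating the extremal area.} In the extremal configuration the cell is a hexagon whose vertices are the pairwise intersections of the six bisectors (for instance, $E_F \cap E_{N_+}$ lies at $\big(2\cos\theta,\,(1-2\cos^2\theta)/\sin\theta\big)$, and the intersections of two distance-$2$ edges sit at the circumradius $2/\sqrt{3}$). Summing the six triangles with apex $p$ (equivalently, a shoelace computation, or splitting into a front contribution from the far edge plus its two neighbors and a remaining back contribution) yields, after substituting $\cos\theta = \lambda/4$, exactly the stated expression $\frac{\lambda}{4}\big(3\sin\theta - \frac{\lambda}{4}\tan(90^{\circ}-\theta)\big) + \frac{4}{\sqrt{3}}$. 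As a sanity check I would verify the boundary case $\lambda = 2$, where $\theta = 60^{\circ}$ and the six neighbors sit at $0^{\circ},\pm 60^{\circ},\pm120^{\circ},180^{\circ}$: the cell degenerates to the regular hexagon of apothem $1$ and the formula collapses to $\frac{6}{\sqrt{3}} = 2\sqrt{3}$. The numerical claim then follows because the expression is increasing in $\lambda$ on the relevant range (the far edge moves outward faster than the front cutters erode the corners), so its minimum over $[2.1,2.2]$ is attained at $\lambda = 2.1$, where a direct evaluation gives a value $\geq 3.479$. The final sentence is immediate: the target configuration is a triangular lattice of spacing $2$, so an interior cell is a regular hexagon with apothem $1$ and area $\frac{6}{\sqrt{3}} = 2\sqrt{3} \leq 3.465$.

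\textbf{Main obstacle.} The routine part is the closed-form area evaluation and the monotonicity check; the delicate part is rigorously justifying that the six-neighbor configuration above is genuinely the minimizer. This requires an exchange argument handling (i) neighbors left at distance strictly greater than $2$, (ii) configurations with fewer or differently placed back neighbors, and (iii) neighbors lying \emph{inside} the front cone at distance in $(2,\lambda)$, which must be shown to cut less than the tight cutters $N_{\pm}$ at angle $\theta$. Establishing that these perturbations only enlarge the cell, together with confirming feasibility (all pairwise distances $\geq 2$) throughout the admissible range $\theta > 50^{\circ}$, is where the real care is needed.
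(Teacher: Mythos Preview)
Your strategy is close to the paper's in spirit but differs in a key structural point, and that difference creates a gap in your computation claim.

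The paper does \emph{not} locate a single global minimizing configuration and evaluate its Voronoi cell. Instead it decomposes $V(p)$ via the six adjacent Delaunay triangles $\triangle_1,\ldots,\triangle_6$ and lower-bounds $|V(p)\cap\triangle_i|$ for each triangle \emph{independently}. The two ``front'' triangles $\triangle_1,\triangle_6$ (those containing the far neighbor $p_1$) are minimized when $|p_1p_2|=|p_1p_6|=2$, i.e.\ when $p_2,p_6$ sit at angle $\pm\theta$; this yields the $\tfrac{\lambda}{4}\big(3\sin\theta-\tfrac{\lambda}{4}\tan(90^\circ-\theta)\big)$ term. The four ``back'' triangles are minimized when they are all equilateral of side~$2$ (densest packing), giving exactly $\tfrac{4}{\sqrt 3}$. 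The formula in the lemma is the \emph{sum of these two separate minima}, and for $\lambda>2$ the two extremal configurations are incompatible (four $60^\circ$ back gaps plus a front angle $2\theta$ do not add to $360^\circ$ unless $\theta=60^\circ$). So the stated expression is a genuine lower bound that is not realized by any single hexagonal cell.

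This is precisely where your proposal slips. You fix a concrete back placement (neighbors at $\pm(\theta+60^\circ)$ and $180^\circ$) and assert that a shoelace computation on that hexagon ``yields exactly the stated expression''. It does not: in your configuration the two middle back gaps are $120^\circ-\theta>60^\circ$ for $\theta<60^\circ$, so the back contribution strictly exceeds $\tfrac{4}{\sqrt 3}$ and your hexagon's area is strictly larger than the lemma's formula. Your route can still prove the inequality $|V(p(t'))|\ge 3.479$ if you (i) rigorously establish that your configuration is the global minimizer and (ii) correctly compute its area, but it will not reproduce the closed form in the statement. If you want the exact expression, you should switch to the paper's Delaunay-triangle decomposition and bound the front and back pieces separately; this also sidesteps the delicate global-minimizer exchange argument you flagged as the main obstacle, since each triangle's bound only involves two neighbors at a time.
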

\begin{proof}
	Let $p_1(t') \in \{ m_1(t'), \dots, m_N(t') \}$ be the center of a robot with $|p(t')p_1(t')| = \lambda$.
	Because all Voronoi neighbors of $p(t')$ have distance less than $4 \cos (50^{\circ})$, the angle between two Voronoi neighbors of $p(t')$ in $p(t')$ is greater than $50^{\circ}$.
	Thus, $p := p(t')$ has at most six Voronoi neighbors and the area of $V(p(t'))$ is minimized if $p(t')$ has five further Voronoi neighbors $p_2(t'),\dots,p_6(t')$.
	We can assume $|p(t')p_2(t')| = \dots = |p(t')p_6(t')| = 2$ because this does not increase the area of $V(p(t'))$.
	W.l.o.g., let $p_1 := p_1(t'),\dots,p_6 := p_6(t')$ be in counterclockwise order around $p$.
	This situation is depicted in Figure~\ref{fig:continuousWorstCaseAreaVoronoiRegionA}.
	
	We find a lower bound on $V(p)$ by lower bounding the intersections of $V(p)$ with the Delaunay triangles that are adjacent to $p$, i.e., with the triangles built by the edges $p_1p_2,\dots,p_5p_6$ and $p_6p_1$ with $p$, see Figure~\ref{fig:continuousWorstCaseAreaVoronoiRegionA}. 
	The area of the two triangles $\triangle_1$ and $\triangle_6$ built by $p_1p_2$ and $p_6p_1$ with $p$ are minimized by assuming the configuration of Figure~\ref{fig:continuousWorstCaseAreaVoronoiRegionA}(a)+(b), i.e., for $|p_1p_2| = |p_1p_6| = 2$. 
	
	\begin{figure}[ht]
		\begin{center}
			\begin{subfigure}[b]{.23\linewidth}
				\resizebox{\linewidth}{!}{\includegraphics{}}
				\caption{Configuration with minimal area of $\triangle_1 \cap V(p)$ and $\triangle_6 \cap V(p)$.}
			\end{subfigure}
			\hspace{.015\linewidth}
			\begin{subfigure}[b]{.24\linewidth}
				\resizebox{\linewidth}{!}{\includegraphics{}}
				\caption{Lower bound on the area of $\triangle_6 \cap V(p)$.}
			\end{subfigure}
			\hspace{.015\linewidth}
			\begin{subfigure}[b]{.23\linewidth}
				\resizebox{\linewidth}{!}{\includegraphics{}}
				\caption{Configuration with minimal area of $\triangle_i \cap V(p)$ for $i \in \{ 2,\dots,5 \}$.}
			\end{subfigure}
			\hspace{.015\linewidth}
			\begin{subfigure}[b]{.18\linewidth}
				\resizebox{\linewidth}{!}{\includegraphics{}}
				\caption{Lower bound on the area of $\triangle_i \cap V(p)$ for $i \in \{ 2,\dots,5 \}$.}
			\end{subfigure}
		\end{center}
		\vspace*{-6pt}
		\caption{Lower bounding the area of $V(p)$ by lower bounding the sum of the areas of the intersections of $V(p)$ with the Delaunay triangles $\triangle_i$ for $i \in \{ 1,\dots,6 \}$ for a maximal distance of~$2.5$ between $p_1$ and $p$.}
		\label{fig:continuousWorstCaseAreaVoronoiRegionA}
	\end{figure}

	In the configuration of Figure~\ref{fig:continuousWorstCaseAreaVoronoiRegionA}(a)+(b), we lower bound the area of $\triangle_6 \cap V(p)$ as follows:
	We subdivide the area of $\triangle_6 \cap V(p)$ into three subsets $A$, $B$, and $C$, and lower bound the area of $\triangle_6 \cap V(p)$ by the sum of lower bounds for $|A|$, $|B|$, and $|C|$.
	
	Let $u$, $v$, and $c$ be the mid points of $pp_6$, $pp_1$, and $\triangle_6$, see Figure~\ref{fig:continuousWorstCaseAreaVoronoiRegionA}(b).
	Furthermore, let $h$ be the vertical side length of $A$ and $\ell$ be the vertical side length of $B$.
	The interior angle of $\triangle_6$ at $p$ is $\arccos (\frac{\lambda}{4})$.
	Thus, we obtain $h = \sin \left( \arccos \left( \frac{\lambda}{4} \right) \right)$, which implies $|A| = \frac{1}{2} \cdot \frac{\lambda}{4} \sin \left( \arccos \left( \frac{\lambda}{4} \right) \right)$.
	The interior angle of $B$ at $c$ is $90^{\circ} - \arccos \left( \frac{\lambda}{4} \right)$.
	Hence, we get $\ell = \frac{\lambda}{4} \tan \left( 90^{\circ} - \arccos \left( \frac{\lambda}{4} \right) \right)$ because the length of $B$'s horizontal side is $\frac{\lambda}{4}$.
	Therefore, $|B| = \frac{1}{2} \cdot \frac{\lambda}{4} \cdot \frac{\lambda}{4} \tan \left( 90^{\circ} - \arccos \left( \frac{\lambda}{4} \right) \right)$. 
	Finally, we have $|C| = \frac{\lambda}{4} (h-\ell) = \frac{\lambda}{4} \left( \sin \left( \arccos \left( \frac{\lambda}{4} \right) \right) - \frac{\lambda}{4} \tan \left(90^{\circ} - \arccos \left( \frac{\lambda}{4} \right) \right)  \right)$. 
	As $\triangle_6 \cap V(p)$ and $\triangle_1 \cap V(p)$ are symmetric, this gives us a lower bound of $2(|A|+|B|+|C|)$ on $|(\triangle_1 \cup \triangle_6) \cap V(p)|$.
	
	Furthermore, the area of $(\triangle_2 \cup \dots \cup \triangle_5) \cap V(p)$ is minimized by the configuration, implying the highest possible packing density, illustrated in Figure~\ref{fig:continuousWorstCaseAreaVoronoiRegionA}(c)+(d) for $|p_2p| = \dots = |p_6p| = |p_2p_3| = \dots = |p_5p_6| = 2$.
	Therefore, this area is at least $4 \cdot \frac{1}{3} \cdot |\triangle_i| = 4 \cdot \frac{1}{3} \cdot \frac{1}{2} \cdot 2 \cdot \sqrt{3} = \frac{4}{\sqrt{3}}$.
	All in all, we obtain $|V(p)| \geq 2(|A| + |B| + |C|) + \frac{4}{\sqrt{3}}$.
	For $\lambda \in [2.1,2.2]$, this is at least $1.17046 + \frac{4}{\sqrt{3}} \geq 3.479$.
	In the target configuration, we have $|p(M)p_i(M)| = 2$ for $i \in \{1,\dots,6\}$. 
	Therefore the area of $V(p(M))$ is $\frac{6}{\sqrt{3}} = 2 \sqrt{3} \leq 3.465$.
 \end{proof}
 
	Next, we prove that there is a time $t'$ with an interval $I :=
[t',t'+\frac{1}{20}]$ during which the area of $\CHull(p,p_1,\dots,p_6)$ is
greater by a constant factor than \textcolor{black}{the area of $\CHull(p,p_1,\dots,p_6)$} in the target configuration. To this end, we
use the following observation that is an immediate consequence of the
intermediate value theorem.

\begin{observation}\label{lem:largeTriangle}
	 There is a time $t' \in [0,M]$ for which the maximal distance between $p(t')$ and $p_1(t'),\dots, p_6(t')$ is $2.2$.
\end{observation}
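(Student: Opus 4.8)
The plan is to apply the intermediate value theorem to the single-variable function $g:[0,M]\to\mathbb{R}$ defined by $g(t):=\max_{i\in\{1,\dots,6\}}|p(t)\,p_i(t)|$, the largest of the six distances from $p$ to its target-neighbors at time $t$. First I would record that $g$ is continuous. By Observation~\ref{obs:distLipschitz} each individual distance $t\mapsto|p(t)\,p_i(t)|$ is $2$-Lipschitz, and the pointwise maximum of finitely many $2$-Lipschitz functions is again $2$-Lipschitz, since $|\max_i f_i(x)-\max_i f_i(y)|\le\max_i|f_i(x)-f_i(y)|\le 2|x-y|$. Hence $g$ is continuous on $[0,M]$.

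Next I would pin down two values of $g$ that bracket $2.2$. At the target time $M$, the robots $p_1,\dots,p_6$ are by definition the six neighbors of $p$ in the (locally hexagonally packed) target configuration, so $|p(M)\,p_i(M)|=2$ for every $i$ and therefore $g(M)=2<2.2$. For the other bracket I would exhibit a time at which $g\ge 2.2$; the natural candidate is the start time, read directly off the construction in Figure~\ref{fig:lowerBoundStartAndTarget}(a). There the robots that eventually become the target-neighbors of the interior robot $p$ originate from spread-out start positions, so at least one of them begins at distance at least $2.2$ from $p$, giving $g(0)\ge 2.2$.

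With $g(0)\ge 2.2>2=g(M)$ and $g$ continuous, the intermediate value theorem yields a time $t'\in[0,M]$ with $g(t')=2.2$, i.e., the maximal distance between $p(t')$ and $p_1(t'),\dots,p_6(t')$ equals $2.2$, as claimed.

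The only genuinely non-formulaic step is the second bracket: certifying that $g$ ever attains a value as large as $2.2$. Everything else is continuity plus the intermediate value theorem, which is exactly why the statement is phrased as an immediate consequence. I expect this inequality to be the main (and essentially only) obstacle, and it is purely a matter of inspecting the explicit start configuration rather than any global reconfiguration argument, since $p$ is interior at the target but its target-neighbors must travel from elsewhere. Once $g(0)\ge 2.2$ is in hand the conclusion follows at once; moreover, combining $g(t')=2.2$ with the $2$-Lipschitz bound shows that $g$ stays within $[2.1,2.2]$ on an interval of length $\tfrac{1}{20}$ around $t'$, which is precisely the hypothesis $\lambda\in[2.1,2.2]$ required by Lemma~\ref{lem:largeArea} in the subsequent step.
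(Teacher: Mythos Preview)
Your argument is correct and matches the paper's one-line justification (``an immediate consequence of the intermediate value theorem''): you define the continuous function $g(t)=\max_i|p(t)\,p_i(t)|$, observe $g(M)=2$ and $g(0)\ge 2.2$ from the explicit start configuration, and apply the IVT. One small correction to your closing remark: the $2$-Lipschitz bound alone only yields $g\in[2.1,2.3]$ on $[t',t'+\tfrac{1}{20}]$; the upper bound $g\le 2.2$ on that interval requires choosing $t'$ to be the \emph{maximal} time with $g(t')=2.2$ (so that $g$ cannot exceed $2.2$ afterwards without hitting $2.2$ again before $t=M$), which is exactly how the paper argues in Lemma~\ref{lem:largeTriangleInterval}.
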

	

\begin{lemma}\label{lem:largeTriangleInterval}
	There is a time $t' \in [0,M]$ such that for all $t'' \in [t', t' + \frac{1}{20}]$, the area of $V(p(t''))$ is at least $3.479 \geq 1.004 \cdot |V(p(M))|$.
\end{lemma}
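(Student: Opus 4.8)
The plan is to promote the pointwise area estimate of Lemma~\ref{lem:largeArea}, which holds at any single instant where the maximal distance between $p$ and its neighbors lies in $[2.1,2.2]$, into a bound that holds throughout an entire interval of length $\tfrac{1}{20}$. Write $g(t) := \max_{1\le i\le 6}|p(t)\,p_i(t)|$ for that maximal distance. By Observation~\ref{obs:distLipschitz} each of the six pairwise distances is $2$-Lipschitz, and since the pointwise maximum of finitely many $\lambda$-Lipschitz functions is again $\lambda$-Lipschitz, $g$ is $2$-Lipschitz and in particular continuous. The one genuine design choice is to anchor the interval not at the value $2.2$ supplied by Observation~\ref{lem:largeTriangle}, but at the \emph{midpoint} $2.15$ of the admissible band $[2.1,2.2]$, so that a symmetric window of radius $\tfrac{1}{40}$ is forced by the Lipschitz bound to stay inside $[2.1,2.2]$.

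First I would locate a time $t_0$ with $g(t_0)=2.15$. In the target configuration all six neighbors sit at distance exactly $2$, so $g(M)=2$, while Observation~\ref{lem:largeTriangle} yields some $\tau\in[0,M]$ with $g(\tau)=2.2$. Applying the intermediate value theorem to the continuous function $g$ on the interval between $\tau$ and $M$ produces $t_0\in[\tau,M]$ with $g(t_0)=2.15$. The Lipschitz bound then guarantees enough room to fit a window of length $\tfrac{1}{20}$ around $t_0$ inside $[0,M]$: from $g(t_0)=2.15$ and $g(M)=2$ we get $M-t_0\ge 0.15/2=0.075>\tfrac{1}{40}$, so the window does not overrun $M$; and since the start configuration of Figure~\ref{fig:lowerBoundStartAndTarget} places the target-neighbors of an interior robot far apart, $g(0)$ comfortably exceeds $2.25$, whence $t_0\ge\tau\ge (g(0)-2.2)/2\ge\tfrac{1}{40}$. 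Thus $[t_0-\tfrac{1}{40},\,t_0+\tfrac{1}{40}]\subseteq[0,M]$.

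Setting $t' := t_0-\tfrac{1}{40}$, the interval $[t',t'+\tfrac{1}{20}]=[t_0-\tfrac{1}{40},\,t_0+\tfrac{1}{40}]$ is exactly the window I want. For every $t''$ in it the Lipschitz bound gives $|g(t'')-2.15|\le 2\cdot\tfrac{1}{40}=0.05$, so $g(t'')\in[2.1,2.2]$, and Lemma~\ref{lem:largeArea} applies verbatim to yield $|V(p(t''))|\ge 3.479$ for all such $t''$. Finally, Lemma~\ref{lem:largeArea} also records $|V(p(M))|=2\sqrt3\le 3.465$, and since $1.004\cdot 3.465<3.479$ we conclude $3.479\ge 1.004\cdot|V(p(M))|$, which is the claimed statement.

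The main obstacle is precisely this passage from a single instant to a positive-length interval: the area estimate of Lemma~\ref{lem:largeArea} is valid only while $g$ stays in the narrow band $[2.1,2.2]$, so one cannot simply perturb the time $\tau$ of Observation~\ref{lem:largeTriangle}, where $g=2.2$ sits at the boundary and could leave the band on the high side on which no bound is claimed. Centering at the midpoint $2.15$ and matching the half-width $\tfrac{1}{40}$ to the Lipschitz constant $2$ makes $2\cdot\tfrac{1}{40}=0.05$ coincide exactly with the half-width of the band; this quantitative balance is the heart of the argument, and the only remaining care is the bookkeeping that keeps the window inside $[0,M]$.
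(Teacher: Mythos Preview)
Your argument is correct and reaches the same conclusion as the paper, but via a slightly different anchoring. The paper takes $t'$ to be the \emph{last} time with $\lambda(t')=2.2$; maximality together with $\lambda(M)=2$ forces $\lambda(t'')\le 2.2$ on $(t',M]$, while $2$-Lipschitz continuity gives $\lambda(t'')\ge 2.1$ on $[t',t'+\tfrac{1}{20}]$. You instead anchor at the midpoint $2.15$ and use the Lipschitz bound symmetrically. Both routes work; the paper's avoids any discussion of the left endpoint of the window, while yours makes the numerical balance $2\cdot\tfrac{1}{40}=0.05$ more transparent.

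One point of bookkeeping in your version can be simplified. To show $t_0\ge\tfrac{1}{40}$ you appeal to $g(0)>2.25$, which is a claim about the specific instance that, while plausible, is not something the paper states or proves. You do not need it: since you chose $t_0\in[\tau,M]$ with $g(t_0)=2.15$ and $g(\tau)=2.2$, the $2$-Lipschitz bound already gives $t_0-\tau\ge 0.05/2=\tfrac{1}{40}$, hence $t_0\ge\tau+\tfrac{1}{40}\ge\tfrac{1}{40}$. This removes the dependence on $g(0)$ entirely.
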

\begin{proof}
	Let $\lambda(t)$ be the maximal distance between $p(t)$ and $p_1(t),\dots, p_6(t)$.
	By Observation~\ref{lem:largeTriangle}, there is a maximal time $t'$ with $\lambda(t') = 2.2$.
	Therefore, and because $2.2 < 4 \cos (50^{\circ}) < 2 \sqrt{2}$, the points $p_1(t'),\dots,p_6(t')$ are the Voronoi neighbors of $p(t')$.
	By Observation~\ref{obs:distLipschitz}, $\lambda(t)$ is $2$-Lipschitz.
	This, together with the maximality of $t'$, implies $2.1 \leq \lambda (t'') \leq 2.2$ for $t'' \in [t',t'+\frac{1}{20}]$.
	Thus, Lemma~\ref{lem:largeArea} applies and yields $|V(p(t''))| \geq 3.479 \geq 1.004 \cdot |V(p(M))|$ for all $t'' \in [t',t'+\frac{1}{20}]$.
\end{proof}

\begin{restatable}{lemma}{lemlowerBoundArea}
\label{lem:lowerBoundArea}
	There is a time $t \in [0,M]$ for which the area of $\CHull(m_1(t),\dots,m_N(t))$ is lower-bounded by $3.479 \left( \left\lfloor \frac{N}{20 M} \right\rfloor - \sqrt{2} \pi\left(2\sqrt{N}+M\right)\right) + 2\sqrt{3} \left( N - \left(\left\lfloor \frac{N}{20 M} \right\rfloor \right)\right)$.
\end{restatable}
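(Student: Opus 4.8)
The plan is to promote the ``local in time'' statement of Lemma~\ref{lem:largeTriangleInterval} to a global one: although each target-interior robot only enjoys an enlarged Voronoi region during its own window of length $\frac{1}{20}$, I want to exhibit a \emph{single} instant $t$ at which many robots are simultaneously enlarged, and then bound the area of $\CHull(m_1(t),\dots,m_N(t))$ from below by the total Voronoi area at that instant. The two ingredients are an averaging argument over $[0,M]$ and the identity that the clipped Voronoi diagram tiles the convex hull.

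First I would run the averaging step. For every robot $p$ that is interior in the target configuration, Lemma~\ref{lem:largeTriangleInterval} supplies an interval $I_p=[t'_p,t'_p+\frac{1}{20}]\subseteq[0,M]$ on which $|V(p(t''))|\geq 3.479$. Writing $g(t):=\#\{p : t\in I_p\}$ for the number of simultaneously enlarged robots, I integrate: $\int_0^M g(t)\,dt=\sum_p|I_p|=\frac{1}{20}\cdot(\text{\# target-interior robots})$. Since the hull of the target configuration carries only $\mathcal{O}(\sqrt{N})$ robots, this is at least $\frac{1}{20}\big(N-\mathcal{O}(\sqrt{N})\big)$, so by the mean-value principle there is a time $t$ with $g(t)\geq \big\lfloor\frac{N}{20M}\big\rfloor$ up to an $\mathcal{O}(\sqrt{N}+M)$ deficit that I will absorb into the error term below. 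Fix this $t$ and abbreviate $E:=\big\lfloor\frac{N}{20M}\big\rfloor$.

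Next I would convert Voronoi areas into hull area. Because the Voronoi diagram of $m_1(t),\dots,m_N(t)$, clipped to $\CHull(m_1(t),\dots,m_N(t))$, partitions the hull, its area equals the sum of the clipped cell areas. For a robot whose cell lies entirely inside the hull I may credit the full cell area: an enlarged such robot contributes at least $3.479$, while every other interior robot contributes at least $2\sqrt{3}=\frac{6}{\sqrt{3}}$, since the minimum center separation of $2$ (by Observation~\ref{obs:distLipschitz} the disks never overlap) forces each bounded Voronoi cell to have at least the area of the densest, hexagonal, packing cell. Discarding the robots whose cell meets $\partial\CHull(m_1(t),\dots,m_N(t))$ — at most $B:=\sqrt{2}\pi(2\sqrt{N}+M)$ of them — I retain $a\geq E-B$ interior enlarged robots and $a+b\geq N-B$ interior robots in total, so that
\[
\mathrm{area}\big(\CHull(m_1(t),\dots,m_N(t))\big)\;\geq\;3.479\,a+2\sqrt{3}\,b\;=\;2\sqrt{3}(a+b)+(3.479-2\sqrt{3})\,a .
\]
Since $3.479-2\sqrt{3}>0$, substituting $a+b\geq N-B$ and $a\geq E-B$ gives $2\sqrt{3}(N-B)+(3.479-2\sqrt{3})(E-B)$, which simplifies to exactly $3.479(E-B)+2\sqrt{3}(N-E)$, the claimed bound.

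The main obstacle is the boundary bookkeeping, not the averaging. Two distinct ``boundary'' losses — the target-interior deficit in the averaging step and, more delicately, the robots at time $t$ whose cells protrude from $\CHull(m_1(t),\dots,m_N(t))$ and hence cannot be credited their full area — must both be subsumed by the single term $\sqrt{2}\pi(2\sqrt{N}+M)$. For the latter I would argue that the hull at time $t$ has circumference $\mathcal{O}(\sqrt{N})+\mathcal{O}(M)$, since the target hull has area $\Theta(N)$ and each robot travels at most $M$, and that the separation $2$ admits only $\mathcal{O}(1)$ robots per unit length of a width-$1$ boundary strip, keeping the number of discarded robots within $B$. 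Pinning the geometric constants so that every $\mathcal{O}(\sqrt{N}+M)$ correction fits inside $\sqrt{2}\pi(2\sqrt{N}+M)$ is the one genuinely fiddly point; the rest is the clean algebra displayed above together with the densest-packing area bound.
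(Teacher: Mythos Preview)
Your proposal is correct and follows essentially the same approach as the paper. Both arguments use the pigeonhole/averaging step to find a single instant $t$ with $\lfloor N/(20M)\rfloor$ enlarged Voronoi cells, then lower-bound the hull area by summing clipped Voronoi cell areas (at least $3.479$ for enlarged cells, at least $2\sqrt{3}$ for the rest by the hexagonal packing bound), discarding the robots whose cells meet $\partial\CHull$; the paper obtains the boundary count $\sqrt{2}\pi(2\sqrt{N}+M)$ exactly as you sketch, by bounding the radius of the smallest enclosing ball by $2\sqrt{N}+M$ and applying a $1/\sqrt{2}$ packing density along the resulting circumference. Your ``fiddly'' constants concern is apt: the paper is equally informal about absorbing the $\mathcal{O}(\sqrt{N})$ target-boundary deficit from the averaging step into the same error term.
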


\begin{proof}
	By Lemma~\ref{lem:largeTriangleInterval}, for each robot $i$ that does not lie on the boundary of the start configuration, there is a point in time $t'' \in [0,M]$ such that the area of $V(p(t''))$ is at least $3.479$ for all $t'' \in [t',t' + \frac{1}{20}]$.
	The continuous pigeonhole principle yields a time point $t \in [0,M]$ such that the area of $k := \left\lfloor \frac{N}{20 M} \right\rfloor \in \Theta (\frac{N}{M})$ Voronoi regions $V(q_1(t)), \dots, V(q_k(t))$ is at least $3.479$.
	For all the remaining Voronoi regions, the area is at least $2 \sqrt{3}$ corresponding to the largest possible packing density as achieved in the start and target configurations.

	We give an upper bound $N \leq \sqrt{2}\pi(2\sqrt{N} + M)$ on the number of robots whose Voronoi regions are not contained in $\CHull(m_1(t), \dots, m_N(t))$.
	W.l.o.g., we assume that all these regions are Voronoi regions whose area we lower bounded by $3.479$.
	Moreover, we can assume all of these regions have zero area\textcolor{black}{, i.e., ignoring when lower bounding the area of $\CHull(m_1(t),\dots,m_N(t))$}.
	Thus, the area of $\CHull(m_1(t),\dots,m_N(t))$ is at least $$\resizebox{.95\textwidth}{!}{$3.479 (k-N) + 2\sqrt{3} (N - k) = 3.479 \left( \left\lfloor \frac{N}{20 M} \right\rfloor - \sqrt{2}\pi\left(2\sqrt{N}+M\right)\right)+ 2\sqrt{3} \left( N - \left(\left\lfloor \frac{N}{20 M} \right\rfloor \right)\right)\text{.}$}$$
	
	\begin{figure}[h]
		\begin{center}
			\resizebox{.9\textwidth}{!}{\includegraphics{}}
		\end{center}
		\caption{An upper-bound construction for the number of robots whose Voronoi regions may intersect the boundary of smallest enclosing ball for $\{ m_1(t),\dots,m_N(t)\}$.
			The radius of the smallest enclosing ball is upper bounded by the distance from the center to the boundary in the start configuration plus the considered makespan illustrated by the dashed circles.}
		\label{fig:upperBoundVRinterBoundaryCH}
	\end{figure}

	It still remains to prove the upper bound $N \leq \sqrt{2} \pi(2\sqrt{N} + M)$ on the number of robots whose Voronoi regions are not contained in $\CHull(m_1(t), \dots, m_N(t))$.
	First, we observe that the length of the boundary of $\CHull(m_1(t), \dots, m_N(t))$ is at most $B := 2\pi(2\sqrt{N} + M)$, because $2\sqrt{N} + M$ is an upper bound on the radius of the smallest ball containing $m_1(t),\dots,m_N(t)$.
	In order to estimate $N$, we consider the maximal number of points from $[0, B] \times \mathbb{R}_{\geq 0}$ whose Voronoi regions intersect the $x$-axis.
	This number is achieved for the configuration as illustrated in Figure~\ref{fig:upperBoundVRinterBoundaryCH}, implying $N \leq \frac{B}{\sqrt{2}} = \sqrt{2}\pi(2\sqrt{N} + M)$, thus concluding the proof.
\end{proof}

\begin{lemma}\label{lem:continuousCaseUpperBoundArea}
	For each $t \in [0,M]$, $|\CHull(m_1(t),\dots,m_N(t))| \leq 2 \sqrt{3}N+ 2\pi(\sqrt{N} + M) M$.
\end{lemma}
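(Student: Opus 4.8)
The plan is to control how much the convex hull can grow in time by combining the speed limit with the dense structure of the start configuration, reducing the bound at time $t$ to a bound at time $0$.

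First I would record the movement constraint. Since every robot has speed at most $1$ and $t \le M$, we have $|m_i(t) - m_i(0)| \le M$ for all $i$. Writing $H_0 := \CHull(m_1(0),\dots,m_N(0))$ and letting $D_M$ denote the closed disk of radius $M$ centred at the origin, this gives $m_i(t) \in H_0 \oplus D_M$ for every $i$, where $\oplus$ is the Minkowski sum. Because $H_0 \oplus D_M$ is convex (a Minkowski sum of two convex sets), it contains $\CHull(m_1(t),\dots,m_N(t))$, so it suffices to bound $|H_0 \oplus D_M|$. The natural tool here is Steiner's formula for the outer parallel body of a planar convex set, which yields $|H_0 \oplus D_M| = |H_0| + \mathrm{perim}(H_0)\cdot M + \pi M^2$. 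It then remains to estimate the two intrinsic quantities $|H_0|$ and $\mathrm{perim}(H_0)$ of the start configuration.

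For the area, I would exploit that the start configuration is a densest (hexagonal) packing: its Delaunay triangulation consists of equilateral triangles of side $2$, each of area $\sqrt 3$, and any triangulation of $N$ points has at most $2N$ triangles, so $|H_0| \le 2\sqrt 3 N$. For the perimeter, I would use that the centres of the densely packed start configuration lie in a disk of radius $\sqrt N$; by monotonicity of perimeter under inclusion of convex bodies, $\mathrm{perim}(H_0) \le 2\pi\sqrt N$. Substituting these estimates and bounding $\pi M^2 \le 2\pi M^2$ gives $|\CHull(m_1(t),\dots,m_N(t))| \le 2\sqrt 3 N + 2\pi\sqrt N\cdot M + 2\pi M^2 = 2\sqrt 3 N + 2\pi(\sqrt N + M)\,M$, which is exactly the claimed bound.

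The main obstacle is the area estimate $|H_0| \le 2\sqrt 3 N$: this is \emph{false} for arbitrary point sets at pairwise distance at least $2$ (such points can have an arbitrarily large convex hull, e.g.\ on a sparse grid), so the bound relies essentially on the specific dense start configuration and its equilateral Delaunay triangulation of area-$\sqrt 3$ triangles. By contrast, the Minkowski containment, Steiner's formula, and the enclosing-ball perimeter bound are routine, and the only remaining bookkeeping is to absorb the $\pi M^2$ term into $2\pi M^2$. Together with the lower bound of Lemma~\ref{lem:lowerBoundArea}, this upper bound forces a contradiction once $M \in o(N^{1/4})$, since then $2\pi(\sqrt N + M)M \in o(N^{3/4})$ while the lower bound gains $\Omega(N^{3/4})$ over $2\sqrt 3 N$.
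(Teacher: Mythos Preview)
Your proof is correct and follows essentially the same approach as the paper: bound $|H_0|\le 2\sqrt 3 N$ from the hexagonal start configuration, then control the growth of the hull using that every center moves at most $M$. The only cosmetic difference is that you invoke Steiner's formula for $H_0\oplus D_M$ (using $\mathrm{perim}(H_0)$), whereas the paper bounds the gained area by $M$ times the perimeter of the hull at time $t$, which it in turn bounds via the enclosing ball of radius $\sqrt N+M$; both routes land on the same expression.
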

\begin{sloppypar}
\begin{proof}
	In the start configuration, the intersection of each Voronoi cell with $\CHull(m_1(0),\dots,m_N(0))$ has an area of $2 \sqrt{3}$.
	Thus, the convex hull of the start configuration has an area of at most $2 \sqrt{3}N$.
	We give an upper bound on the area $A$ gained during the motion, i.e., the area of $\CHull(m_1(t),\dots,m_N(t)) \setminus \CHull(m_1(0),\dots,m_N(0))$, corresponding to the gray region in Figure~\ref{fig:upperBoundVRinterBoundaryCH}.
	The length of the boundary $\partial\CHull(m_1(t),\dots,m_N(t))$ is at most $2\pi(\sqrt{N} + M)$, implying $A \leq 2\pi(\sqrt{N} + M) M$, thus concluding the proof.
\end{proof}
\end{sloppypar}

\statement{Theorem}{thm:lowerBoundContinuousMakespan}
	{\em
	There is an instance with optimal makespan $M \in \Omega (N^{1/4})$, see Figure~\ref{fig:lowerBoundStartAndTarget}.
}
\begin{proof}
	Combining the bounds from Lemma~\ref{lem:lowerBoundArea} and Lemma~\ref{lem:continuousCaseUpperBoundArea} yields black the following 
		\begin{eqnarray*}
			&&3.479 \left( \left\lfloor \frac{N}{20 M} \right\rfloor - \sqrt{2}\pi\left(2\sqrt{N}+M\right)\right)+ 2\sqrt{3} \left( N - \left(\left\lfloor \frac{N}{20 M} \right\rfloor \right)\right)\\
			&\leq &2 \sqrt{3}N+ 2\pi(\sqrt{N} + M) M.\\
			\Leftrightarrow && 0,014  \left\lfloor \frac{N}{20 M} \right\rfloor - 3,479\sqrt{2}\pi\left(2\sqrt{N}+M\right)\leq 2\pi(\sqrt{N} + M) M.
		\end{eqnarray*}
	If $M \in \Omega(\sqrt{N})$ holds, we are done.
	Otherwise we \textcolor{black}{obtain} $\frac{N}{M} \in \mathcal{O}(M \sqrt{N}) \Leftrightarrow \sqrt{N} \in \mathcal{O}(M^2)$, and thus $M \in \Omega (N^{1/4})$, concluding the proof.
\end{proof}

\subsection{An Upper Bound for Unbounded Environments}
\label{subsec:upper_cont}
Next we give upper bounds on the stretch and makespan for moving disks in unbounded environments.
First, we show that we can achieve constant stretch for well-separated robots.

\begin{theorem}
\label{th:cont}
If the distance between the centers of two robots of radius $1$ is at least $4$ in the start and target configurations, we can achieve a makespan in $\mathcal{O}(d)$, i.e., constant stretch.
\end{theorem}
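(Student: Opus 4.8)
The plan is to reduce the well-separated continuous problem (centers at distance $\geq 4$) to the discrete grid problem already solved with constant stretch in Theorem~\ref{thm:main}, sandwiching the reduction between two cheap ``snapping'' phases whose cost is $\mathcal{O}(d)$. First I would overlay the plane with an axis-aligned grid of mesh size $2\sqrt{2}$; the diagonal of one grid cell has length $4$, so the separation assumption guarantees that each cell contains at most one robot center in the start configuration, and likewise in the target configuration. This lets me assign to every robot a unique \emph{start vertex} (the nearest grid vertex to $s_r$) and a unique \emph{target vertex} (nearest to $t_r$), giving a well-defined labeled instance of the discrete problem on the grid graph.

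The key steps, in order, are the following. In Phase~(1) I move each robot from its exact start position $s_r$ onto its start vertex along a short path; since each robot travels at most $O(1)$ (half a cell diagonal), and the separation keeps these short moves collision-free after a constant-time local coordination, this phase costs $\mathcal{O}(1)$ makespan. Phase~(2) invokes the constant-stretch discrete schedule of Theorem~\ref{thm:main} on the labeled grid instance; I must first verify that the discrete Manhattan distance $d'$ between start and target vertices is $\mathcal{O}(d)$, which follows because the grid vertex is within $\mathcal{O}(1)$ of the true position, so $d' = \mathcal{O}(d + 1) = \mathcal{O}(d)$ under the hypothesis $d \in \Omega(1)$ implicit in the well-separated regime. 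The discrete schedule then yields makespan $\mathcal{O}(d') = \mathcal{O}(d)$. Finally Phase~(3) reverses Phase~(1), moving each robot from its target vertex to $t_r$, again in $\mathcal{O}(1)$. Summing, the total makespan is $\mathcal{O}(d)$, and since $d$ is a trivial lower bound on any makespan, this is constant stretch.

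The main obstacle, and the step deserving the most care, is translating the discrete transformation steps into genuinely collision-free continuous motions with unit-bounded speed while respecting the disk radius~$1$. A single discrete step moves a robot to an adjacent vertex a distance $2\sqrt{2}$ away, which takes time $2\sqrt{2}$ at unit speed and is therefore fine for the makespan bound; the delicate point is that in the discrete model two robots may occupy adjacent cells whose \emph{centers} are only $2\sqrt{2} \approx 2.83$ apart, comfortably more than the required clearance of $2$, so unit disks never overlap when parked at distinct grid vertices. I would argue that during the simultaneous execution of one discrete parallel step, choosing straight-line constant-velocity motions between consecutive grid vertices keeps every pair of disks separated by at least $2$ throughout, exploiting that the mesh $2\sqrt{2}$ leaves a safety margin; the no-swap condition of the discrete model is precisely what rules out the head-on collisions that would otherwise violate this.

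I would also need to handle the geometric subtlety that the ``nearest grid vertex'' assignment is injective but the short Phase~(1)/(3) moves of different robots might in principle interfere; I expect to dispose of this by noting that the distance-$4$ separation gives each robot a private neighborhood of radius~$2$ around both $s_r$ and $t_r$, so the snapping motions can be scheduled in a constant number of sub-phases (grouping robots by cell parity) without collisions. With these pieces in place, the composition of the three phases realizes the given start/target configuration with continuous makespan $\mathcal{O}(d)$, establishing the claimed constant stretch.
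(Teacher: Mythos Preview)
Your proposal is correct and follows essentially the same approach as the paper: overlay a grid of mesh size $2\sqrt{2}$ (so that cell diagonals have length~$4$ and each cell holds at most one robot center), snap each robot to its cell in $\mathcal{O}(1)$ time, run the discrete constant-stretch algorithm of Theorem~\ref{thm:main}, and then un-snap. The paper is terser about the collision-freeness of the continuous execution of a discrete step (it relies on Figure~\ref{fig:continuousUpperBoundMeshSize} and the remark that two robots on incident edges can touch at their midpoints but not overlap), whereas you spell out the geometric margin and the role of the no-swap rule more explicitly; conversely, the paper avoids your minor injectivity worry by snapping to the \emph{center of the robot's own cell} rather than the nearest grid vertex.
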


\begin{figure}[h]
	\begin{center}
		\includegraphics[scale=0.5]{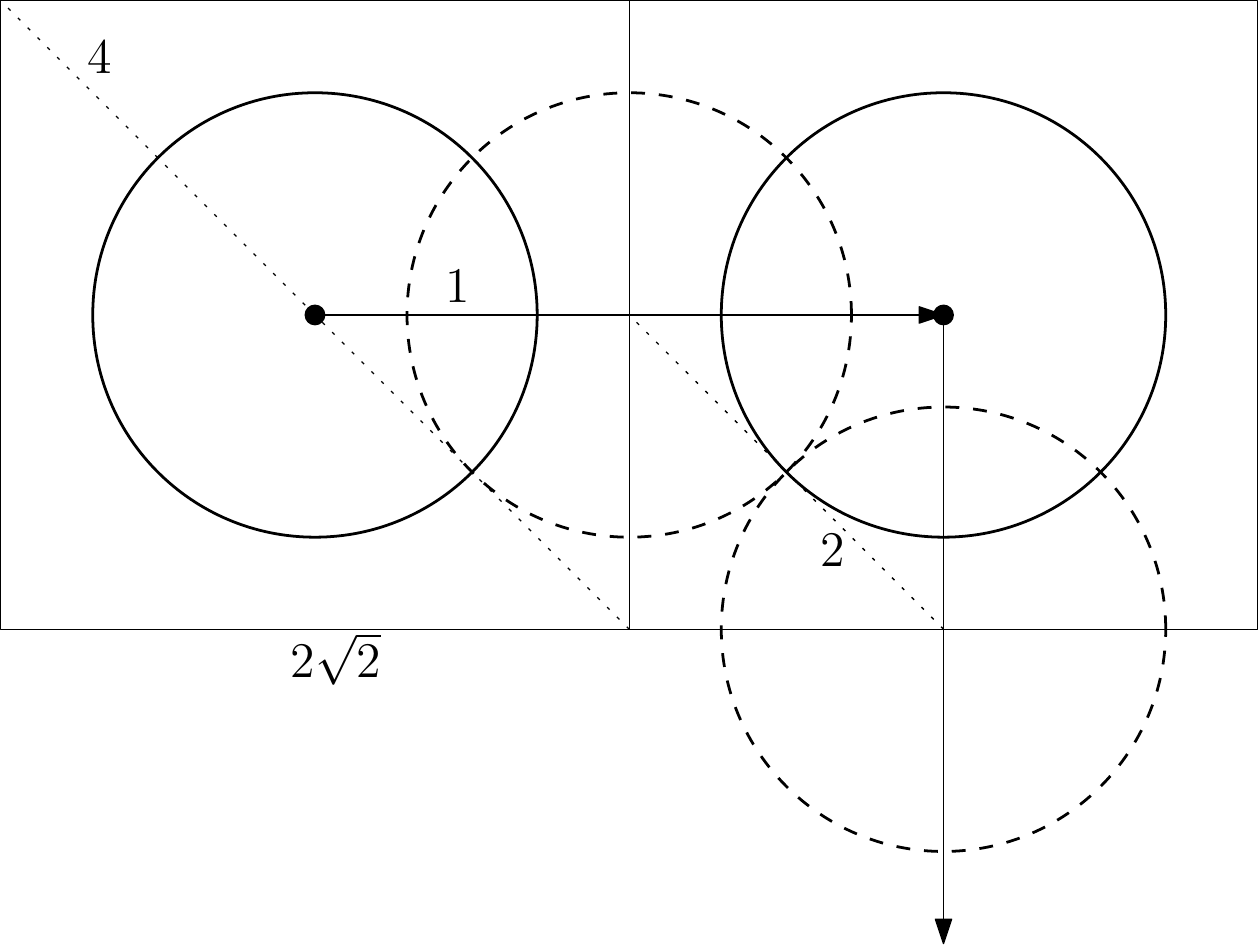}
	\end{center}
	\vspace*{-6pt}
	\caption{A mesh size of $2 \sqrt{2}$ avoids robot collisions, and the cell diagonals have length $4$. Note that robots may have arbitrary shape, as the separation argument applies to their circumcircles.}
	\label{fig:continuousUpperBoundMeshSize}
	\vspace*{-6pt}
\end{figure}

\begin{proof}
	We consider a grid $D$ with mesh size $2\sqrt{2}$.
	In this way, as shown in Figure~\ref{fig:continuousUpperBoundMeshSize}, two robots starting simultaneously from different cells and traveling along two incident edges can touch when they reach the midpoints, but do not collide.
	Moreover, the diagonals have length $4$.
	By choosing a grid that has no robot center on a grid line, 
every cell of $D$ contains at most \textcolor{black}{one} start and one target position of a robot.
	Additionally, we can move each robot in the start and target configuration to the center of its own cell, allowing us to 
	to use our algorithm from Section~\ref{sec:constappr}. Overall, we
achieve a set of trajectories with makespan in $\mathcal{O}(d)$. 
\end{proof}

In the remainder of this section we give an $\mathcal{O}(\sqrt{N})$-approximation algorithm 
for the continuous makespan for these kind of {\em well-separated} arrangements,
by extending the approach for discrete grids.

Again, we make use of an underlying grid with mesh size $2\sqrt{2}$.
Our algorithm proceeds in three phases.
(1) Moving the robots to vertices of the grid,
(2) applying our $\mathcal{O}(1)$-approximation for the discrete case, and
(3) moving the robots from the vertices of the grid to their target positions.
To ensure a $\mathcal{O}(\sqrt{N})$-approximation, we move each robot center to a grid vertex within a distance of $\mathcal{O}(\sqrt{N})$.
Phase\textcolor{black}{s} (1) and (3) are symmetric in the following sense.
By applying the steps of phase (1) to the target configuration in reverse, we compute a grid configuration that serves as target configuration for phase (2).

Phase (1) works as follows.
(1.1) We begin by sorting the $N$ robots according to the $(x,y)$-lexicographical order.
Then we subdivide them into $\left\lceil \sqrt{N} \right\rceil$ vertical slices, each containing at most $\left\lceil \sqrt{N} \right\rceil$ robots.
To the right of every slice, we add a vertical buffer slice of width $4\sqrt{2}$ by moving all robots not yet considered \textcolor{black}{by} $4\sqrt{2}$ units to the right.
These trajectories are used in parallel, the distance covered by each robot is in $\mathcal{O}(\sqrt{N})$.
The buffer slices guarantee that in all following steps, the robots in each vertical slice are independent of each other.

For (1.2), we continue by sorting the robots within the vertical slices according to the $(y,x)$-lexicographical order.
We separate the robots by ensuring vertical distance at least $4\sqrt{2}$ between every pair of robots.
This can be done by moving the robots upwards, starting from the second-to-lowest one.
These trajectories can be done in parallel and the distance covered by each robot is in $\mathcal{O}(\sqrt{N})$.
For (1.3), we finally move each robot to the bottom-left vertex of the grid cell containing its center.

\statement{Theorem}{thm:continuousCaseUpperBound}
	{\em
	The algorithm described above computes a trajectory plan with continuous makespan in $\mathcal{O}(d+\sqrt{N})$.
	If $d \in \Omega(1)$, this implies a $\mathcal{O}(\sqrt{N})$-approximation algorithm.
}
\begin{proof}
	Phase~(1) guarantees that either the horizontal or the vertical distance between each pair of robots is at least $4\sqrt{2}$.
	Therefore, in each grid cell, there is at most one robot and each robot is moved to its own grid vertex.
	In phase~(2), each robot is moved by $\mathcal{O}(d')$ units, where $d'$ is the maximal distance between a robot's start and target position in the grid.
	As the distance each robot covers in phases~(1)~and~(3) is $\mathcal{O}(\sqrt{N})$, the distance traveled in phase~(2) is in $\mathcal{O}(d+\sqrt{N})$.
	Therefore, the trajectory set computed by the algorithm has continuous makespan in $\mathcal{O}(d+\sqrt{N})$.
	The running time as described above is pseudopolynomial; it becomes polynomial by using standard compression techniques, e.g., by compressing large empty rectangles.
\end{proof}

\subsection{Colored and Unlabeled Disks}

We can combine the positive results of the previous section with the technique of Theorem~\ref{th:cont}
to achieve the same result for colored (and in particular, unlabeled) disks.

\begin{corollary}
\label{co:cont_unlabeled}
There is an algorithm with running time $\mathcal{O}(k(mn)^{1.5} \log (mn) +
dmn)$ that computes, given start and target images $I_s,I_t$, an
$\mathcal{O}(1)$-approximation of the optimal makespan $M$ and a corresponding
set of trajectories.  
\end{corollary}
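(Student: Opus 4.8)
The plan is to reduce the colored continuous instance to a labeled discrete one, solve it with the machinery of Theorem~\ref{thm:main}, and lift the resulting schedule back to collision-free continuous trajectories using the grid technique of Theorem~\ref{th:cont}. First I would overlay the grid $D$ of mesh size $2\sqrt{2}$ from Theorem~\ref{th:cont}, chosen so that no disk center lies on a grid line; since the centers are pairwise at distance at least $4$, each cell then contains at most one start and at most one target center. Moving every disk to the center of its own cell in both the start and target image costs only $\mathcal{O}(1)$ travel per disk and yields a discrete colored instance on the dual grid graph $G$, whose $mn$ cells inherit the colors of $I_s$ and $I_t$.

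Next I would apply the bottleneck-matching reduction of Theorem~\ref{thm:unlabeled} color class by color class. For each color $i$, let $A^i$ and $B^i$ be the occupied start and target cell centers of that color; using the algorithm of Efrat et al.~\cite{efrat:geometry} I compute a bottleneck matching between $A^i$ and $B^i$ in time $\mathcal{O}(|A^i\cup B^i|^{1.5}\log|A^i\cup B^i|)$, and summing over the $k$ colors with $\sum_i|A^i\cup B^i|\le 2mn$ gives the $\mathcal{O}(k(mn)^{1.5}\log(mn))$ term. These $k$ matchings jointly assign a label to every disk, producing a labeled pair $(C_s,C_t)$ on $G$. Exactly as in Theorem~\ref{thm:unlabeled}, any feasible continuous plan of makespan $M$ sends each color-$i$ disk to some color-$i$ target and hence induces a perfect matching on $(A^i,B^i)$; since speed is bounded by $1$, every matched pair is at Euclidean distance at most $M$, so the bottleneck cost, and therefore the maximum matched Manhattan distance $d'$ in $G$, is $\mathcal{O}(M)$.

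I would then run the labeled $\mathcal{O}(1)$-approximation of Theorem~\ref{thm:main} on $(C_s,C_t)$, obtaining a discrete schedule of makespan $\mathcal{O}(d')=\mathcal{O}(M)$ in time $\mathcal{O}(d'\,mn)=\mathcal{O}(d\,mn)$. Finally I lift this schedule to the plane as in the proof of Theorem~\ref{th:cont}: each discrete step moves a disk to an adjacent cell, realized by a unit-speed straight-line translation, and the mesh size $2\sqrt{2}$ (with cell diagonals of length $4$) guarantees that disks traversing incident edges at most touch but never overlap, so the trajectories are compatible. The two boundary phases, moving from true start positions to cell centers and from cell centers to true target positions, are the symmetric analogues of phases~(1) and~(3) of the upper-bound algorithm and add only $\mathcal{O}(1)$ to the makespan. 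Summing the three contributions yields a continuous makespan of $\mathcal{O}(M)$, i.e.\ an $\mathcal{O}(1)$-approximation, within the claimed running time.

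The hard part will not be any single construction but the chain of approximation-preserving inequalities: I must verify that the bottleneck cost of each color class is simultaneously a lower bound on $M$ (so the matched instance is cheap) and an upper bound on the distances the labeled algorithm must realize (so Theorem~\ref{thm:main} delivers $\mathcal{O}(d')$). Care is needed because discretizing to cell centers perturbs distances by up to one mesh width; I would absorb this $\mathcal{O}(1)$ additive slack into the constant of the stretch factor, using well-separation to ensure that distinct disks never compete for the same cell and that the straight-line realizations of the discrete moves stay collision-free throughout the schedule.
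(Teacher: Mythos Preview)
Your proposal is correct and follows essentially the same route as the paper: compute a bottleneck matching per color class to reduce to the labeled problem (as in Theorem~\ref{thm:unlabeled}), then invoke Theorem~\ref{thm:main} through the grid construction of Theorem~\ref{th:cont}. The paper's own proof is a two-line sketch of exactly this; your version simply spells out the approximation-preserving chain and the running-time bookkeeping in more detail.
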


\begin{proof}
The proof proceeds analogous to Theorem~\ref{thm:unlabeled}: After computing an optimal bottleneck matching,
apply Theorem~\ref{thm:main} in the setting of Theorem~\ref{th:cont}.
\end{proof}

\section{Geometric Difficulties}
\label{sec:difficulties}

From a practical point of view, it is also desirable to compute 
provably optimal trajectories for specific instances of
moderate size, instead of solutions for large instances
that are within a provable constant factor of the optimum.
This also plays a role as a building block for other 
purposes, e.g., providing a formal proof of NP-hardness
for parallel geometric motion planning for disks: We need to be able to
establish the shape of optimal trajectories when building gadgets
for a hardness construction.

\begin{figure}[h]
                \begin{center}
		\resizebox{.45\textwidth}{!}{\includegraphics{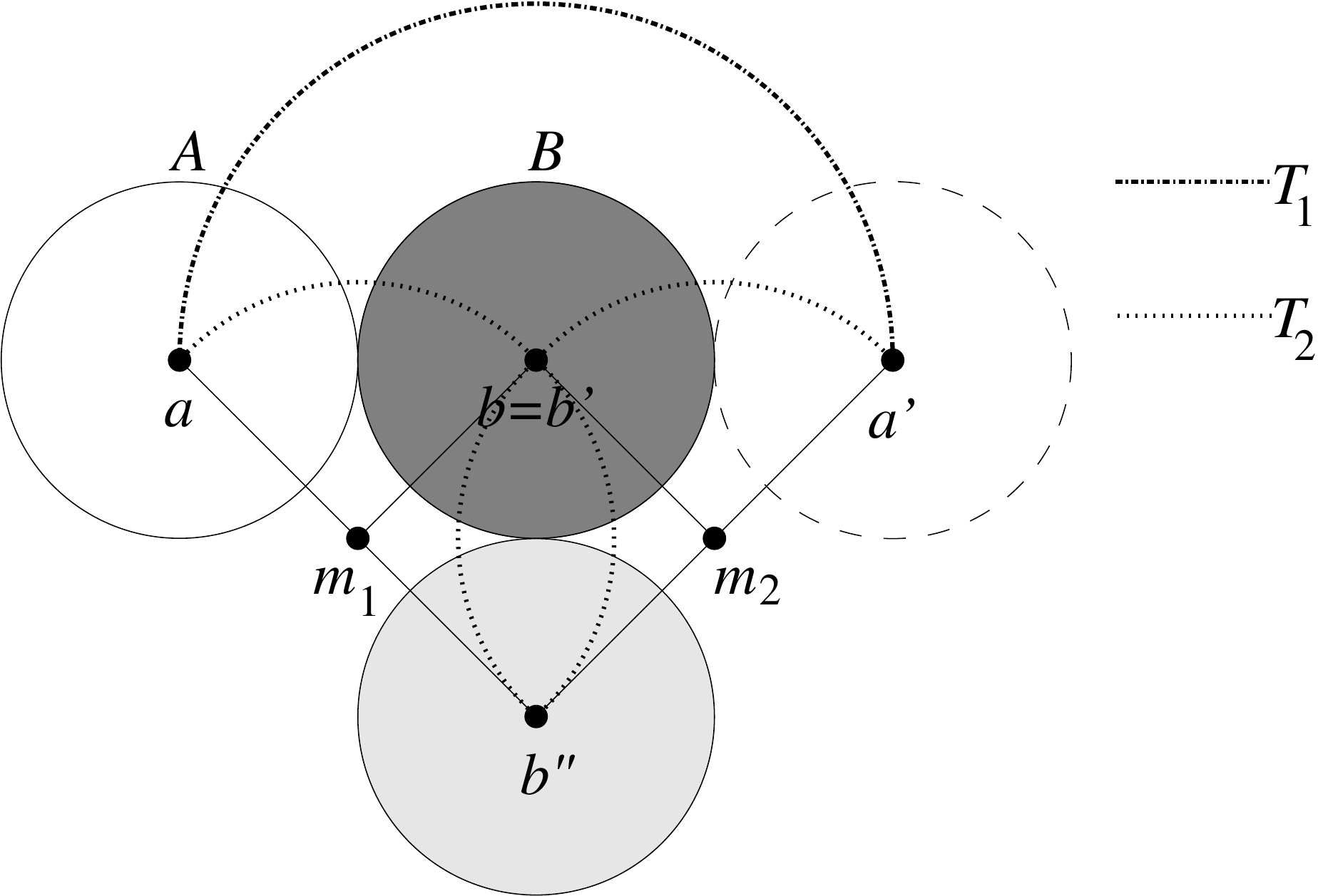}}                
		\resizebox{.45\textwidth}{!}{\includegraphics{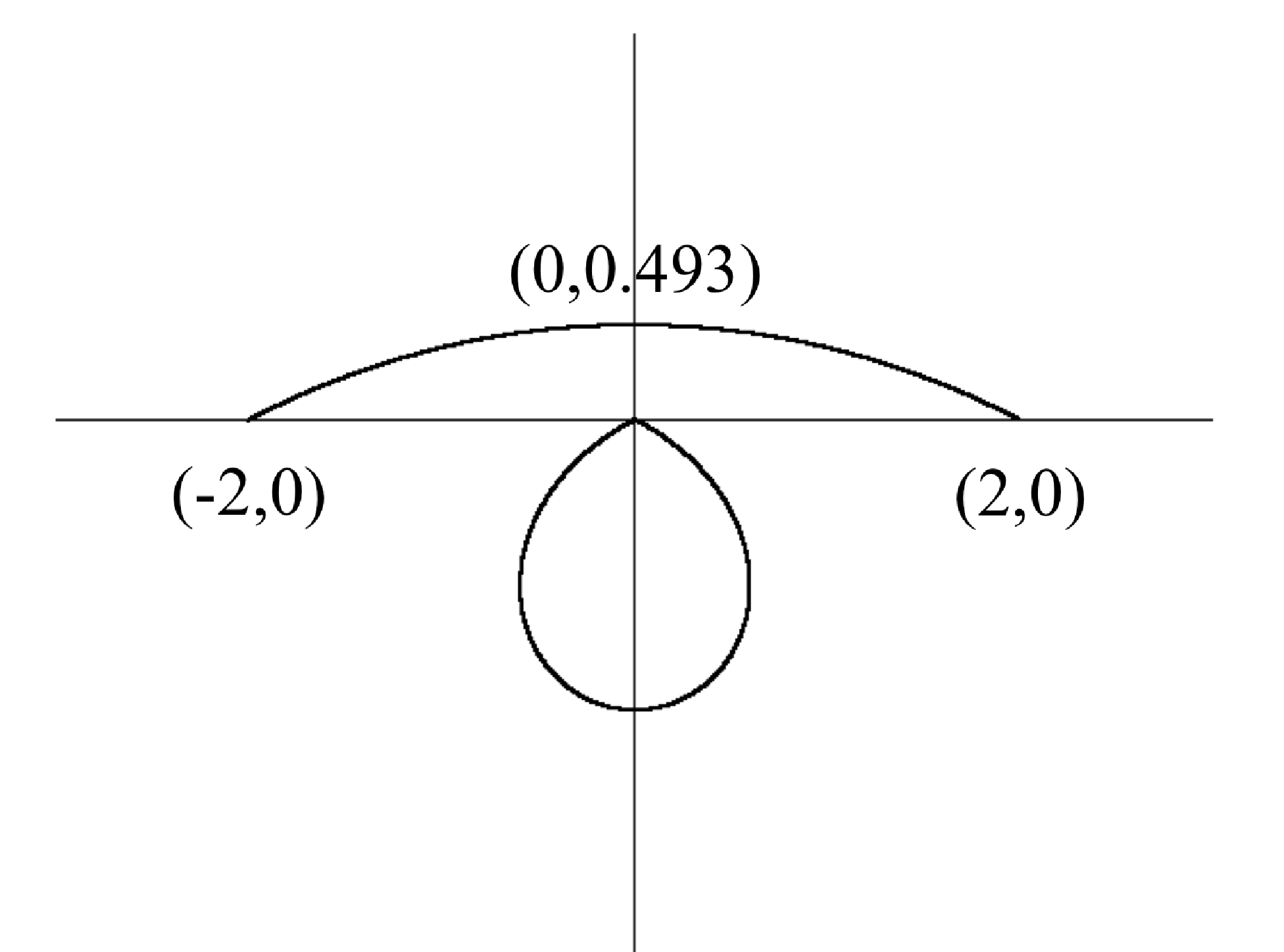}}                
                \end{center}
  \vspace*{-12pt}
                \caption{Moving the left unit disk $A$ from position $a$ to position $a'$ at distance 4, with 
disk $B$ starting and ending at $b=b'$. (Left)
Trajectory $T_1$ rotates disk $A$ around the stationary disk $B$ resulting in makespan $2\pi=6.28\ldots$.
Trajectories $T_2$ rotate both disks around the centers $m_1$ and $m_2$, resulting in makespan 
$\sqrt{2}\pi=4.44\ldots$. (Right) Choosing a circular arc through $(-2,0), (2,0)$ and the (numerically optimized) 
point $(0,0.493\ldots)$ for disk $A$ (with $B$ moving accordingly at distance 2) yields the trajectory $T_3$ with makespan $4.16\ldots$.}
                \label{fig:passing}
  \vspace*{-12pt}
        \end{figure}

However, the geometry involved in this goal is far from easy, even for an example
as small as the one shown in Figure~\ref{fig:passing}.
\revised{This is closely related to recent work by Kirkpatrick and Liu~\cite{kl-cmlcm2d-16}, who devote a whole paper 
to computing optimal trajectories for two disks in an arbitrary initial and target configuration, with the objective of minimizing
the total distance travelled instead of the makespan. A key insight is that optimal trajectories consist of a limited number of
circular arcs. This is not necessarily the case for trajectories that minimize the makespan.}
Even for the seeming simplicity of our example, we do not have a proof of optimality of the trajectory $T_3$ shown
on the right. This illustrates the difficulty of characterizing and establishing optimal trajectories that minimize the total
duration of a parallel schedule, highlighting the special role of geometry for the problem.

\section{Conclusion}
\label{sec:concl}
We have presented progress on a number of algorithmic problems of parallel motion
planning, also shedding light on a wide range of interesting open problems \textcolor{black}{described in the following}.

The first set of problems consider complexity.
The labeled problem of Section~\ref{sec:constappr} is known to be NP-complete
for planar graphs. 
It is natural to conjecture that the geometric version
is also hard. 
It seems tougher 
to characterize the family of optimal trajectories:
As shown above, 
their nature is unclear, so membership in NP is doubtful.

A second set of questions considers the relationship between
stretch factor and disk separability in the continuous setting.
We believe that the upper bound of $\mathcal{O}(\sqrt{N})$ on the worst-case stretch factor for dense arrangements 
is tight. 
What is the critical separability of disks
for which constant stretch can be achieved?
How does the stretch factor increase as a function of $N$
below this threshold?
For {\em sparse} arrangements
of disks, simple greedy, straight-line trajectories between the origins
and destinations \textcolor{black}{of disks} encounter only isolated conflicts,
resulting in small stretch factors close to 1, i.e., $1+o(1)$. 
What is the relationship between (local) density and the achievable stretch factor
along the whole density spectrum?

Finally, practical motion planning requires a better handle on
characterizing and computing optimal solutions for specific instances, along with
lower bounds, possibly based on numerical methods and tools. 
Moreover, there is a wide range of additional objectives and requirements,
such as accounting for acceleration or deceleration of disks, turn cost,
or multi-stop tour planning. All these are left for future work.

\subparagraph*{Acknowledgements.}

\nnew{We thank anonymous reviewers of a preliminary version of the paper for helping to improve the overall presentation.}

\bibliography{refs}



\end{document}